\newtheorem{lemma}{Lemma}
\newcommand{\pconv}{\xrightarrow{p}}
\newcommand{\ignore}[1]{}
\DeclareMathOperator{\plim}{plim}
\providecommand{\keywords}[1]
{
  \small	
  \textbf{\textit{Keywords---}} #1
}
\title{Achieving Reliable Causal Inference with Data-Mined Variables: A Random Forest Approach to the Measurement Error Problem}
\author{Mochen Yang$^{1}$, Edward McFowland III$^{1}$, Gordon Burtch$^{1}$, Gediminas Adomavicius$^{1}$ \\
\small $^{1}$University of Minnesota, Carlson School of Management}
\begin{document}

\maketitle

\begin{abstract}
    Combining machine learning with econometric analysis is becoming increasingly prevalent in both research and practice. A common empirical strategy involves the application of predictive modeling techniques to ``mine" variables of interest from available data, followed by the inclusion of those variables into an econometric framework, with the objective of estimating causal effects. Recent work highlights that, because the predictions from machine learning models are inevitably imperfect, econometric analyses based on the predicted variables are likely to suffer from bias due to measurement error. We propose a novel approach to mitigate these biases, leveraging the ensemble learning technique known as the random forest. We propose employing random forest not just for prediction, but also for generating instrumental variables to address the measurement error embedded in the prediction. The random forest algorithm performs best when comprised of a set of trees that are individually accurate in their predictions, yet which also make ``different" mistakes, i.e., have weakly correlated prediction errors. A key observation is that these properties are closely related to the relevance and exclusion requirements of valid instrumental variables. We design a data-driven procedure to select tuples of individual trees from a random forest, in which one tree serves as the endogenous covariate and the other trees serve as its instruments. Simulation experiments demonstrate the efficacy of the proposed approach in mitigating estimation biases, and its superior performance over three alternative methods for bias correction.
\end{abstract}

\keywords{machine learning, econometric analysis, instrumental variable, random forest, causal inference} 

\doublespacing

\section{Introduction}

Advances in predictive machine learning have enabled researchers to extract useful information from various types of data, such as text and images, which would otherwise be difficult or costly to codify at scale. For example, recent academic work has highlighted that cutting-edge prediction techniques are now capable of inferring the socioeconomic properties of a localized population (e.g., income/racial distribution) from the models and makes of cars appearing in Google Street View images \citep{gebru2017using} and detecting adverse drug events based on drug attributes \citep{ryu2018deep}. These measurements, now available at scale and with little cost, can \textit{enable} empirical investigations of important questions in economics, health care, and many other domains.

Indeed, many researchers have begun doing exactly that, first using predictive machine learning to construct or populate a variable of interest, e.g., using text mining tools to predict text sentiment, and then including that variable into econometric models as an independent covariate. This practice has become prevalent in multiple social science domains, including economics \citep{jelveh2015political}, political science \citep{Fong2017}, and management \citep{yang2018mind}.

However, recent studies have also noted that attempts to draw inferences based on this recipe are likely to suffer from endogeneity due to measurement error \citep{yang2018mind}. This is because predictions from machine learning models are inevitably imperfect, and prediction errors carry over to subsequent econometric models as measurement error, leading to biased and inconsistent parameter estimates. Measurement error may lead to overestimation or underestimation of coefficients \citep{loken2017measurement}, and the degree of bias can be substantial, even when the machine learning model achieves reasonable predictive performance \citep{yang2018mind}. The estimation biases stemming from measurement error in machine-learning-generated covariates can thus undermine the validity of subsequent causal inferences and decision making. 

In this paper, we propose a novel approach to address this problem. Our approach is based on the notion of instrumental variable regression, a well-established method of resolving endogeneity in the econometrics literature, including endogeneity deriving from measurement error \citep{greene2003econometric}. We make use of a notably unique feature of this problem setting of applying machine learning followed by regression. Specifically, we leverage the fact that predictive machine learning models are typically trained and evaluated using data for which true labels (assumed to be perfectly measured) are available and which are used to quantify prediction error and model performance. This perfectly measured set of data offers a \textit{unique opportunity} to overcome difficulties commonly associated with evaluating the validity of instruments.

To find candidate instruments, we rely on \textit{random forest} \citep{Breiman2001}, an ensemble learning approach that aggregates a collection of individual decision trees (weak learners) to arrive at accurate predictions. Prior work has demonstrated that the performance of random forests depends jointly on (i) the degree to which trees comprising the forest yield correlated predictions and (ii) the degree to which the the trees yield weakly correlated prediction errors \citep{Breiman2001,bernard2010study}. We demonstrate that these notions are closely related to the relevance and exclusion criteria that underpin valid instrumental variables. Based on this, we look to explore the random forest ensemble to identify sets of individual trees such that one tree's predictions might serve as the endogenous covariate in the econometric model of interest, while other trees' predictions serve as its instruments, thereby alleviating estimation biases due to measurement error. Drawing on theories from the instrumental variable and random forest literature, we develop algorithms to implement this idea, empirically selecting the best set of individual trees to mitigate biases in coefficient estimates. We term our procedure \textit{ForestIV}.

We conduct two sets of comprehensive simulation experiments, considering the cases where a data-mined covariate is \textit{continuous} vs. \textit{binary}, and thus respectively suffers from either continuous measurement error or misclassification. In both cases, we show that ForestIV can effectively mitigate estimation biases. We also report sensitivity analyses of ForestIV and benchmark its performance against three alternative bias correction methods.

It should be noted that ForestIV provides a \textit{general-purpose} method for correcting bias with machine-learning-generated covariates, whether derived from structured or unstructured data (e.g., text or images). For scenarios involving structured data, random forest is widely applicable to various supervised machine learning problems, and, for a large number of real world prediction problems, random forest is among the most accurate techniques \citep{fernandez2014we}. However, even in scenarios involving unstructured data, where other techniques (e.g., deep neural networks) may be the state-of-the-art, random forest can be usefully combined with them. For example, random forest can be stacked with the intermediate representations learned by a neural network; that is, the output of an intermediate layer of the network, which encodes informative, high-level features learned from the unstructured data, can serve as the input features to the random forest algorithm. Notably, this practice is quite common in transfer learning, where a supervised machine learning model is built based on features produced by another technique \citep{goodfellow2016deep}.

Our paper makes several notable contributions. First, we theoretically and empirically show that the proposed ForestIV approach effectively addresses the estimation biases in econometric models due to measurement error in machine-learning-generated covariates. Therefore, ForestIV improves the robustness of causal inferences and decisions derived from procedures combining machine learning with econometric analyses. Second, we design data-driven procedures that leverage the labeled data (used to build and evaluate the machine learning model) to empirically select instruments that are most suitable for bias correction purposes. Third, ForestIV represents a novel method to automatically obtain candidate instruments from the output of the random forest technique. This provides a viable solution to the often-challenging problem of identifying valid instruments. 

\section{ForestIV with Continuous Endogenous Covariate}

Our work is informed by the econometrics literature on measurement error, instrumental variables, and generated regressors, as well as the machine learning literature on random forests. We provide a review of relevant literature in Appendix \ref{Literature}. In this section, we provide a description of the measurement error problem resulting from machine learning predictions of a \textit{continuous} covariate. We then introduce the theoretical justifications and implementation details of our proposed ForestIV solution.

\subsection{Continuous Measurement Error Problem Formulation}

We begin by setting up the measurement error problem for a continuous covariate. Note that, for expositional simplicity, we use a simple linear regression as a representation of the econometric model to be estimated, but the underlying theoretical arguments can be generalized to other econometric specifications, e.g., generalized linear models.\footnote{For example, a binary response model -- Probit or Logit -- can be formulated as a latent linear model with a binary transformation of the dependent variable, with the measurement error specified for the latent linear model.}

Consider a linear regression model,
\begin{equation}
\label{eq:true_reg}
Y=X\beta_X+\boldsymbol{Z\beta_Z}+\varepsilon,
\end{equation}
where $Y$ represents the dependent variable, $\{X,\boldsymbol{Z}\}$ represent the independent covariates ($\boldsymbol{Z}$ includes control variables and a constant term), $\varepsilon$ is the \textit{exogenous} random error term, and $\boldsymbol{\beta}=\{\beta_X,\boldsymbol{\beta_Z}\}$ denotes the model coefficients to be estimated. Importantly, $X$ is not directly observed in the data, and we instead rely on its surrogate, $\widehat{X}$, which is based on the \textit{predictions} of a machine learning model, e.g., random forest. For instance, if $X$ represents poverty level in a neighborhood, then $\widehat{X}$ could be the predicted poverty level mined from Google Street View images. In contrast, covariates $\boldsymbol{Z}$ are directly observed in the data and are measured precisely (with no measurement error). Therefore, the actual estimation being conducted is a regression of $Y$ on $\{\widehat{X},\boldsymbol{Z}\}$. In this section, we assume $X$ and $\widehat{X}$ to be continuous variables, whereas $\boldsymbol{Z}$ can contain both continuous and categorical variables. We discuss the case when $X$ and $\widehat{X}$ are binary variables in later sections. 

Because predictions of a machine learning model inevitably have some degree of error, $\widehat{X}$ is in general an imperfect surrogate for $X$, and contains continuous measurement error. The existence of measurement error in an independent covariate is known to result in biased regression estimates. As an illustrative example, consider the case of \textit{additive independent} (otherwise known as classical) measurement error, where $\widehat{X}=X+e$, and $e$ is independent of $X$. The estimated regression equation is therefore
\begin{equation}
\label{eq:est_reg}
Y=\widehat{X}\beta_X+\boldsymbol{Z\beta_Z}+(\varepsilon-e\beta_X).
\end{equation}
Because $Cov(\widehat{X},(\varepsilon-e\beta_X))=-\beta_X\sigma_e^2\neq0$, the regression suffers from endogeneity due to measurement error in $\widehat{X}$, resulting in biased coefficient estimates \citep{greene2003econometric}. (See Appendix \ref{FormalTheory} for a formal setup of this problem.)

\subsection{Building Random Forest}

Consider the task of building a random forest (or any predictive machine learning model) to predict $X$. A typical approach is to collect some amount of \textit{labeled data}, where the outcome to be predicted is actually observed. More concretely, denote the entire dataset that a researcher has access to as $D$. Suppose the researcher takes a random subsample from $D$, denoted as $D_{label}$, and obtains the (precisely-measured) ground truth for that subsample, e.g., via manual labeling. Then, $D_{label}$ would be randomly partitioned into $D_{train}$ and $D_{test}$, where $D_{train}$ would be used to build the random forest model and $D_{test}$ to evaluate the resulting model's performance. For data in the remaining unlabeled dataset, $D_{unlabel}=D \setminus D_{label}$, the random forest model would then be deployed to generate predictions $\widehat{X}$. 

As is common in a growing number of policy-relevant contexts, the dataset of interest often includes a large amount of unlabeled data. Due to the cost of obtaining ground truth labels, the size of $D_{label}$ is typically much smaller than the size of $D_{unlabel}$. As a result, researchers may not be able to estimate their econometric models of interest on $D_{label}$ with a satisfactory degree of statistical power given the likely large variance. That is, estimating the econometric models using information in $D_{unlabel}$ presumably has the potential to deliver substantial improvements in the precision of estimates (due to its larger sample size).

\subsection{Generating Candidate Instruments from Random Forest}

In this section, we consider the generation of instruments from random forest, which is at the core of the ForestIV approach. We first establish the setting and list assumptions, before providing formal theoretical results. We consider a random forest model with $M$ individual trees, indexed by $\{1,\dots ,M\}$, which is built based on $n$ training samples and $p$ features. On a new data point represented by feature vector $\mathbf{f} = \{f_1,\ldots,f_p\}$, denote the prediction of individual tree $i\in\{1,\dots ,M\}$ as $\widehat{X}^{(i)}$, and the prediction of the forest as $\widehat{X}$, where $\widehat{X} = \frac{1}{M} \sum_{i=1}^M \widehat{X}^{(i)}$. Given the ground truth, $X$, the prediction error is correspondingly defined as $e^{(i)}=\widehat{X}^{(i)} - X$. We make three assumptions underlying our theoretical results. The first two assumptions are adopted from \cite{scornet2015consistency}, which establishes the consistency of random forest predictions, and the third assumption is standard in the measurement error literature.

\textit{Assumption 1} \citep[Ground truth function,][]{scornet2015consistency}. The ground truth can be expressed as $X = \sum_{j=1}^p m_j(f_j) + \zeta$, where features $\{f_1,\ldots,f_p\}$ are uniformly distributed over $[0,1]^p$, $\zeta$ represents independent, centered Gaussian noise with finite variance, and each component $m_j(.)$ is continuous. This assumption states that the ground truth is the sum of univariate functions of input features. Although random forest is a nonparametric model, analysis of its properties is often facilitated within the framework of additive models \citep{scornet2015consistency}.

\textit{Assumption 2} \citep[Tree Growth,][]{scornet2015consistency}. Denote $t_n$ as the number of leaves in each tree, and $a_n$ as the number of training data points used to build each tree. Let $a_n \rightarrow \infty$, $t_n \rightarrow \infty$, we assume $t_n (\log a_n)^9/a_n \rightarrow 0$. This assumption, as a regularity condition, controls the rate at which the trees in the random forest grow.

\textit{Assumption 3} (Classical measurement error). As $n \rightarrow \infty$, the prediction error of an individual tree takes the classical form, i.e., $\lim_{n \rightarrow \infty} \mathbb{E}_i\mathbb{E}_{\mathbf{f}} Cov(X, e^{(i)}) = 0$. This assumption, notably common in the theoretical measurement error literature, implies that (asymptotically) the prediction error of an individual tree is uncorrelated with the ground truth \citep[e.g.,][]{hausman1991measurement,newey2001flexible,li2002robust,schennach2004estimation,schennach2013nonparametric}.

\begin{restatable}{theorem}{mainTheorem}
\label{mainTheorem}
Under Assumptions 1-3, for any two trees in the random forest, $i$ and $j$ ($i \neq j$),
$$\lim_{n \rightarrow \infty} \mathbb{E}_i\mathbb{E}_j\mathbb{E}_{\mathbf{f}} Cov(\widehat{X}^{(j)}, e^{(i)}) = 0.$$
\end{restatable}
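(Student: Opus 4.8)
The plan is to reduce the claim to two simpler statements by exploiting the bilinearity of covariance. Writing the tree-$j$ prediction in terms of the ground truth and its own error, $\widehat{X}^{(j)} = X + e^{(j)}$, I would expand
\[
Cov(\widehat{X}^{(j)}, e^{(i)}) = Cov(X, e^{(i)}) + Cov(e^{(j)}, e^{(i)}).
\]
Taking the iterated expectation $\mathbb{E}_i\mathbb{E}_j\mathbb{E}_{\mathbf{f}}$ and using linearity, the target limit splits into a ``ground-truth term'' $\mathbb{E}_i\mathbb{E}_j\mathbb{E}_{\mathbf{f}}\,Cov(X, e^{(i)})$ and a ``cross-error term'' $\mathbb{E}_i\mathbb{E}_j\mathbb{E}_{\mathbf{f}}\,Cov(e^{(j)}, e^{(i)})$, and it suffices to show each vanishes as $n \to \infty$.

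The ground-truth term is essentially free. Its integrand does not depend on the index $j$, so the outer average $\mathbb{E}_j$ is vacuous and it collapses to $\mathbb{E}_i\mathbb{E}_{\mathbf{f}}\,Cov(X, e^{(i)})$, which is exactly the quantity that Assumption 3 forces to zero in the limit. Hence the whole problem reduces to controlling the cross-error term, i.e.\ to showing that the prediction errors of two \emph{distinct} trees become uncorrelated (on average over trees and test points) as the training sample grows.

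This cross-error term is where the real work lies, and where I expect the main obstacle. The natural lever is the consistency machinery of \cite{scornet2015consistency}, which Assumptions 1 and 2 are imported precisely to invoke: under an additive ground-truth model and the stated tree-growth rate, the forest is $L^2$-consistent. My plan is to first upgrade this to a statement that each individual tree's mean squared error vanishes, $\lim_{n\to\infty}\mathbb{E}_{\mathbf{f}}[(e^{(i)})^2] = 0$, so that in particular $Var(e^{(i)}) \to 0$; then a Cauchy--Schwarz bound
\[
\bigl|Cov(e^{(j)}, e^{(i)})\bigr| \le \sqrt{Var(e^{(i)})\,Var(e^{(j)})}
\]
forces the cross-error covariance to zero uniformly in the pair $(i,j)$, and the bounded, finite averages $\mathbb{E}_i\mathbb{E}_j\mathbb{E}_{\mathbf{f}}$ preserve this.

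The delicate step is the first one: the result of \cite{scornet2015consistency} is stated for the aggregated forest rather than for a single tree, so I would either verify that their argument already yields \emph{per-tree} consistency under Assumption 2 (the condition $t_n(\log a_n)^9/a_n \to 0$ keeps the number of training points per leaf growing, which is what drives the within-leaf variance to zero, while $t_n\to\infty$ shrinks the leaves to control bias), or, as a fallback, relate the average pairwise error covariance $\mathbb{E}_{i\neq j}\mathbb{E}_{\mathbf{f}}\,Cov(e^{(j)},e^{(i)})$ directly to the forest error variance $Var\bigl(\tfrac1M\sum_i e^{(i)}\bigr)$ through the variance-of-an-average identity, reading off decorrelation from the vanishing of the forest's mean squared error. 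The fallback route carries a caveat, since the identity leaves behind $1/M$ diagonal terms that must be absorbed using a uniform bound on individual-tree variances (plausible here, as the features lie in $[0,1]^p$ and each $m_j$ is continuous), so I would lean on per-tree consistency as the cleaner primary argument. Justifying the interchange of the $n \to \infty$ limit with the tree-averaging, and the legitimacy of assuming per-tree rather than merely aggregate error control, is the part I would scrutinize most carefully.
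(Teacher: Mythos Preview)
Your decomposition and the handling of the ground-truth term via Assumption~3 are exactly what the paper does. For the cross-error term, your \emph{fallback} route is in fact the paper's entire argument: it invokes Breiman's Theorem~11.2, which states that the forest generalization error (as $M\to\infty$) equals $\mathbb{E}_i\mathbb{E}_j\mathbb{E}_{\mathbf{f}}\,Cov(e^{(i)},e^{(j)})$, and then applies the $L^2$-consistency of \cite{scornet2015consistency} to send this to zero as $n\to\infty$. Because Breiman's identity already lives in the $M\to\infty$ limit, your worry about leftover $1/M$ diagonal terms does not arise, and no uniform bound on individual-tree variances is needed. Your \emph{primary} route via per-tree consistency and Cauchy--Schwarz would also succeed, but it asks for strictly more than the paper needs: you would be proving that each tree's error variance vanishes, when all that is required is that the \emph{average pairwise} error covariance vanishes. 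So the paper's proof is your fallback, stripped of the caveat; your primary approach is a valid but unnecessarily strong detour.
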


This result, the proof of which can be found in Appendix \ref{Proofs}, implies that the expected covariance between one tree's prediction and another tree's prediction error goes to 0, which establishes an asymptotic guarantee for instrument validity and, therefore, the theoretical foundation for considering trees within a random forest as instruments for one another. To provide intuition for this foundation, we revisit the estimated regression
$Y=\widehat{X}\beta_X+\boldsymbol{Z\beta_Z}+(\varepsilon-e\beta_X)$, where $\widehat{X}$ is endogenous. A \textit{valid} instrument, $W$, should satisfy two conditions: (a) $Cov(W,\widehat{X}) \neq 0$, i.e., the instrument is correlated with the endogenous (mis-measured) regressor (i.e., the \textit{relevance} condition); and (b) $Cov(W,\varepsilon-e\beta_X)=0$, i.e., the instrument is not correlated with the regression error term (i.e., the \textit{exclusion} condition). Recall that $\varepsilon$ is exogenous in \eqref{eq:true_reg}, the true underlying regression equation of interest. Consequently, condition (b) is equivalent to $Cov(W,e)=0$, i.e., a valid instrument is not correlated with the measurement error in $\widehat{X}$. Now suppose we replace $\widehat{X}$ with $\widehat{X}^{(i)}$ (i.e., predictions of individual tree $i$ in the forest), and $W \equiv \widehat{X}^{(j)}$ (i.e., predictions of individual tree $j$ in the forest). We then have that $Cov(\widehat{X}^{(i)}, \widehat{X}^{(j)}) = Var(X) \neq 0$, meeting condition (a), and $Cov(e^{(i)}, \widehat{X}^{(j)}) \rightarrow 0$ as $n\rightarrow \infty$ (Theorem \ref{mainTheorem}), meeting condition (b). As a consequence, $\widehat{X}^{(j)}$ is an asymptotically valid instrument for $\widehat{X}^{(i)}$.

More generally, Theorem \ref{mainTheorem} tells us that, asymptotically, and under mild assumptions, in a random forest of $M$ trees, we can use predictions from any individual tree as the endogenous covariate in the regression model, and use predictions from the \textit{other} $M-1$ individual trees in the random forest as valid instruments. Though this result provides a valuable theoretical foundation for the random forest algorithm as a generator of valid instruments, we do not yet know at what sample size the asymptotic properties will manifest. Therefore, at a finite sample size we can only consider the individual trees as \textit{candidate} instruments, and we must carry out additional steps to identify which of these potential instruments provides evidence of having reached their asymptotic state. Moreover, the asymptotic result implies that with infinite data \textit{all} individual trees are valid instruments for all other trees. However, in practice, only one pair of (endogenous and valid instrumenting) trees is required, which (given the asymptotic result) would appear plausible even in finite data.

We note that, in finite samples, the above asymptotic results with respect to random forest are in fact also reflected in empirical evidence. For example, \cite{bernard2010study} show that the performance of random forest statistically improves with increases in the accuracy of individual trees and decreases in the \textit{correlation} between their prediction errors. In other words, a well-performing random forest should consist of individual trees that are relatively accurate (high \textit{strength}) and have only weakly correlated errors (low \textit{correlation}) \citep{Breiman2001}. We make the key observation that high strength and low correlation are closely related to the requirements of a valid instrumental variable.

This highlights an interesting and perhaps counter-intuitive characteristic of ForestIV. Because the predictive performance of an individual tree is typically worse than that of the entire random forest, we likely induce larger estimation biases by drawing on predictions from an individual tree as the endogenous covariate (rather than the aggregate prediction from the overall forest). However, this initial sacrifice is accompanied by the opportunity to leverage predictions from other trees as instruments, to address the estimation bias. In other words, for the purposes of \textit{causal inference}, a more nuanced use of the entire random forest ensemble (rather than a traditional use of its aggregate predictions) allows the mitigation of the measurement error that is inevitably present in the model's predictions.

\subsection{Selecting Instruments for Correction}
In practice, when using the predictions from a focal individual tree in a random forest as the endogenous covariate, it is necessary to select only a \textit{subset} of the other trees' predictions for use as instruments, for several reasons. First, when there is only a finite set of training data, it is unlikely that all other trees have reached their (asymptotic) state of valid instrumentation for the focal tree. Second, due to randomness in constructing the random forest with finite data, even valid instruments can be invalid by chance, i.e., empirically $Cov(e^{(i)}, \widehat{X}^{(j)})$ is large. Third, using an excessive number of instruments can lead to over-fitting of the endogenous variables, including the endogenous components a researcher seeks to eliminate, and create estimation challenges \citep{roodman2009note}. Finally, again due to randomness in the construction of a random forest, some instruments may be only weakly correlated with the endogenous covariate, despite meeting the exclusion requirement. Including these \textit{weak instruments} in an instrumental variable regression can be counterproductive, yielding biased and inconsistent estimations \citep{hausman2001mismeasured}. 

In our setting, because we have access to a set of labeled data, we can assess instrument validity empirically, to a degree. Thus, after obtaining the predictions from each individual tree in the random forest, we focus on identifying a ``desirable" subset of trees: one to be used as the endogenous covariate, and the remaining to serve as valid and strong instruments that can mitigate estimation biases. We decompose this task into three distinct steps, as follows:

\begin{itemize}

    \item \textbf{Step 1 Removal of Invalid Instruments}: Given $i\in\{1,\ldots, M\}$, use $\widehat{X}^{(i)}$ as the endogenous covariate, then select a subset of other trees, $\boldsymbol{V_i} \subseteq \{\widehat{X}^{(1)}, \ldots, \widehat{X}^{(M)}\} \setminus \widehat{X}^{(i)}$, which omit \textit{invalid} instruments for $\widehat{X}^{(i)}$. This step is conducted using $D_{test}$.

    \item \textbf{Step 2 Selection of Strong Instruments}: Given $i\in\{1,\ldots, M\}$, use $\widehat{X}^{(i)}$ as the endogenous covariate, then select a subset of other trees, $\boldsymbol{S_i} \subseteq \boldsymbol{V_i}$, that consists of \textit{strong} instruments for $\widehat{X}^{(i)}$. This step is conducted using $D_{test} \cup D_{unlabel}$. Steps 1-2 are iterative (discussed in detail later). 
  
    \item \textbf{Step 3 Estimation}: Based on selected instruments $\boldsymbol{S_i}$ for covariate $\widehat{X}^{(i)}$, obtain the 2SLS regression estimates. We carry out additional checks to gauge the validity of 2SLS regression estimates, and retain the 2SLS estimates that meet the specific checks to produce the final corrected coefficient estimates. This step is conducted using $D_{label} \cup D_{unlabel}$.
  
\end{itemize}

The additional ``validity checks" in Step 3 are necessary because, even though Steps 1-2 are designed to remove invalid instruments and select strong ones, there is limited theoretical guarantee that all selected instruments are always valid and strong in \textit{finite samples}. The validity checks in the third step thus attempt to identify for which subset of trees the asymptotic properties appears to be present, reducing the likelihood that our approach produces erroneous results when predictions from all individual trees in the random forest are not suitable instruments. Moreover, with a finite sample, it is unlikely that instrument validity, particularly the exclusion restriction, will be satisfied \textit{exactly}. Therefore, our procedure is consistent with the practice of using ``plausibly exogenous" instruments for estimation in finite samples \citep{conley2012plausibly}.

\subsubsection{Step 1: Removal of Invalid Instruments}

To rule out invalid instruments for a given endogenous covariate, we rely on information from the labeled data. Recall that the random forest model is built on $D_{train}$, and its performance is then evaluated on $D_{test}$. As a result, for $D_{test}$, we observe the ground truth, the model-predicted values, and thus the prediction errors (the difference between ground truth and prediction). Using this information on $D_{test}$, we can gauge the validity of using individual trees' predictions as instruments, and empirically rule out the invalid instruments that are strongly correlated with the measurement errors. 

Grounded in prior work on instrument selection \citep[e.g.,][]{Belloni2012}, we adapt a lasso-based heuristic procedure to identify and discard instruments violating the exclusion requirement. Without loss of generality, suppose predictions from the first individual tree on $D_{test}$, $\widehat{X}^{(1)}$, serve as the endogenous covariate, and the corresponding prediction error is denoted as $e^{(1)}$. We estimate a lasso regression of $e^{(1)}$ on the predictions of other individual trees on $D_{test}$, i.e., $e^{(1)} \sim \{\widehat{X}^{(2)}, \dots, \widehat{X}^{(M)}\}$. The set of regressors for which the lasso yields non-zero coefficients are then dropped, because their linear combination is determined to be a strong predictor of $e^{(1)}$, which implies that those regressors violate the exclusion restriction with respect to $\widehat{X}^{(1)}$. Conversely, the set of regressors with zero coefficients are retained, denoted by $\boldsymbol{V_1}$, because the lasso fails to provide evidence to suggest that they violate the exclusion restriction. 

\subsubsection{Step 2: Selection of Strong Instruments}

To select a set of sufficiently strong instruments from the remaining set of candidate instruments, for a given endogenous covariate, we adopt another heuristic approach motivated by \cite{Belloni2012}. Consider $\widehat{X}^{(1)}$ as the endogenous covariate, and $\boldsymbol{V_1}$ as the set of instruments obtained in Step 1. We estimate a lasso regression in which the endogenous covariate is regressed on all available instruments (similar to the first stage of a 2SLS estimation), i.e., $\widehat{X}^{(1)} \sim \boldsymbol{V_1}$. The lasso attempts to shrink the coefficients of instruments that are conditionally uncorrelated with the endogenous covariate (i.e., weak in the presence of the other instruments) to zero, with the non-zero coefficients indicating a set of regressors predictive of $\widehat{X}^{(1)}$, which intuitively correspond to \textit{strong} instruments. Because this step requires only the predictions from individual trees, it is carried out on $D_{test} \cup D_{unlabel}$.

Importantly, if the set of instruments changes during the two-step selection process (i.e., certain instruments are determined to violate exclusion or relevance requirements, and are thus dropped), we will \textit{repeat} both of the lasso selection steps (1 and 2) with the remaining instruments, until the selection ceases to change. This iterative approach increases the likelihood that our selected instruments are simultaneously valid and strong. Moreover, we expect this procedure should work well with sufficient data, because when $\boldsymbol{V_i}$ contains only excluded instruments (satisfied asymptotically based on Theorem \ref{mainTheorem}), \cite{Belloni2012} shows that asymptotically $\boldsymbol{S_i}$ will be the linearly optimal set of instruments. We offer greater detail about the instrumental variable selection procedure outlined in Steps 1 and 2 in Appendix \ref{Algorithm1}. In general, for each $\widehat{X}^{(i)}, i \in \{1, \dots, M\}$, we can use this procedure to select a set of strong, excluded instruments, $\boldsymbol{S_i}$. 

\subsubsection{Step 3: Estimation}
Consider the econometric model of interest, $Y = X\beta_X + \boldsymbol{Z\beta_Z} + \varepsilon$. If all variables in the econometric model, i.e., $\{Y,X,\boldsymbol{Z}\}$, are directly observable in $D_{label}$, then one can obtain unbiased estimates of $\{\beta_X,\boldsymbol{\beta_Z}\}$, denoted as $\widehat{\boldsymbol{\beta}}_{label}$, by estimating the econometric model on the error-free $D_{label}$. In practice, the size of $D_{label}$ is often limited, due to the cost of acquiring labels. As a result, $\widehat{\boldsymbol{\beta}}_{label}$ may exhibit a large standard error and would not be particularly suitable for drawing causal inferences. Nonetheless, the estimates are unbiased and can be employed as a useful baseline for determining the quality of instrument estimations.\footnote{We note that if $D_{label}$ is sufficiently large, implying that researchers can afford to acquire a large set of labeled data with sufficient statistical power to estimate the econometric model of interest, then it is unnecessary to use machine learning models to construct / mine variables in the first place; the researcher should simply estimate the econometric model on $D_{label}$.} In this sub-section, we discuss the final selection procedure, based on a comparison of 2SLS estimates and the unbiased estimation using precisely measured covariates in $D_{label}$, i.e., $\widehat{\boldsymbol{\beta}}_{label}$.

Specifically, following Steps 1-2, we obtain a set of 2SLS estimates for each pair of $(\widehat{X}^{(i)}, \boldsymbol{S_i})$, using $\widehat{X}^{(i)}$ as the endogenous covariate, and $\boldsymbol{S_i}$ as its instruments, denoted as $\widehat{\boldsymbol{\beta}}_{IV}^i$. Denote the variance-covariance matrices of $\widehat{\boldsymbol{\beta}}_{IV}^i$ and $\widehat{\boldsymbol{\beta}}_{label}$ as $\widehat{\boldsymbol{\Sigma}}_{IV}^i$ and $\widehat{\boldsymbol{\Sigma}}_{label}$, respectively. To compare $\widehat{\boldsymbol{\beta}}_{IV}^i$ with $\widehat{\boldsymbol{\beta}}_{label}$, we use the Hotelling's $T^2$ test with unequal variance \citep{seber2009multivariate}. This test is a multivariate generalization of the $T$ test, designed to evaluate a null hypothesis of mean equality between two vectors of random variables, whose joint distributions have unequal variances. The test statistic is:
$H_i = (\widehat{\boldsymbol{\beta}}_{IV}^i - \widehat{\boldsymbol{\beta}}_{label})^T \left( \widehat{\boldsymbol{\Sigma}}_{IV}^i + \widehat{\boldsymbol{\Sigma}}_{label} \right)^{-1} (\widehat{\boldsymbol{\beta}}_{IV}^i - \widehat{\boldsymbol{\beta}}_{label})$. 
$H_i$ is asymptotically distributed as $\chi^2(K)$, where $K$ represents the total number of covariates \citep{seber2009multivariate}. Therefore, if $H_i$ is larger than the critical value of $\chi^2(K)$ at a user-chosen significance level (we pick $\alpha=0.05$ for illustration purposes), then $\widehat{\boldsymbol{\beta}}_{IV}^i$ is significantly different from the unbiased $\widehat{\boldsymbol{\beta}}_{label}$, indicating that one or more instruments likely violate the relevance and/or exclusion requirement. Such $\widehat{\boldsymbol{\beta}}_{IV}^i$ is then discarded.

Meanwhile, suppose more than one $(\widehat{X}^{(i)}, \boldsymbol{S_i})$ pair has an associated Hotelling statistic below the critical value, such that each pair empirically yields estimates that are not significantly different from $\widehat{\boldsymbol{\beta}}_{label}$. We then measure the empirical bias and variance of each set of estimates, using \textit{empirical MSE}, as follows: $$MSE_i = tr\left((\widehat{\boldsymbol{\beta}}_{IV}^i - \widehat{\boldsymbol{\beta}}_{label}) (\widehat{\boldsymbol{\beta}}_{IV}^i - \widehat{\boldsymbol{\beta}}_{label})^T + \widehat{\boldsymbol{\Sigma}}_{IV}^i\right)$$. Finally, we select the estimates that have the smallest empirical MSE, i.e., the estimates that have the smallest sum of empirical bias and variance. We note that testing these multiple hypotheses, without control, will potentially inflate our Type I error: the probability of incorrectly rejecting a truly valid tuple. Given that we likely have multiple tuples to choose from, our preference is to be conservative (over-reject tuples) and we prefer to make \textit{no} inference rather than make an erroneous inference.

Note that the 2SLS variances (i.e., diagonal elements of $\widehat{\boldsymbol{\Sigma}}_{IV}^i$) associated with ForestIV estimates do not reflect the estimator's true variability, because we intentionally select estimates with low variances. To properly quantify the variability of ForestIV estimator, we propose to bootstrap $\{D_{label},D_{unlabel}\}$, apply the ForestIV procedure to each bootstrap sample, and obtain the empirical sampling distributions of ForestIV estimates. The variances of the sampling distributions serve as straightforward estimation of ForestIV variances. The pseudocode for ForestIV is summarized in the Appendix \ref{Algorithm2}.

Taking the proposed ForestIV procedure as a baseline, we also consider three possible alternative design choices, related to instrument discovery, selection, and estimation. First, a key feature of ForestIV is that we ``break up" a random forest and exploit individual trees to discover instruments for bias correction. As an alternative, one might instead consider a \textit{sample splitting} approach, where $D_{train}$ is split into two independent subsets and a random forest is built on each subset. Then, predictions from one forest might serve as an instrument for predictions from the other. This approach is intuitively appealing because the predictive performance of a forest is arguably better than that of a single tree, thereby reducing the measurement error problem to start with. In the same vein, a second alternative design of the instrument selection procedure in ForestIV is to use the aggregate predictions from a \textit{subset of trees} (rather than predictions from a single tree) as the endogenous covariate and instrumental variables. Third, in the estimation step of ForestIV, while we propose to select the estimates that minimize empirical MSE, an alternative design is to \textit{average} across all estimates that are not rejected by the Hotelling's $T^2$ test. In general, the relative utility of these alternative design choices is not clear a priori, thus we explore all of them empirically as part of the simulation experiments discussed in the following section.

\section{Simulation Experiments}

\subsection{Basic Simulation Setup}
Our first demonstration leverages the Bike Sharing Data \citep{fanaee2014event}, which contains 17,379 instances of hourly bike rental activities. It is a commonly used dataset for benchmarking and evaluating machine learning models \citep[e.g.,][]{giot2014predicting}. Moreover, this dataset assist in emulating an example of how machine learning could enable empirical studies of questions on the impact of mobility on important economic or social outcomes.

We randomly partition the dataset, such that 1,000 observations serve as $D_{train}$, 200 observations serve as $D_{test}$, and the remaining 16,179 observations serve as $D_{unlabel}$. This represents a realistic scenario where $D_{unlabel}$ is much larger than $D_{label}$. Using $D_{train}$, we build a random forest model of 100 trees to predict the log-transformed count of total hourly bike rentals (denoted as $lnCnt$, log-transformed to reduce skewness), based on 12 features, including the time of the rental as well as weather and seasonal information. Importantly, the random forest generates aggregate (ensemble) predictions, as well as predictions from each of its individual trees. We denote $\widehat{lnCnt}$ as the aggregate predictions, and $\widehat{lnCnt}_i$ as the predictions from individual tree $i \in \{1,\ldots,100\}$.

Next, we simulate an econometric model with $lnCnt$ as an independent covariate. The model specification is $Y=1+0.5lnCnt+2Z_1+Z_2+\varepsilon$, where $Z_1 \sim Uniform[-10,10]$, $Z_2 \sim N(0,100)$, and $\varepsilon \sim N(0,4)$. In the Bike Sharing dataset, $\sigma_{lnCnt}=1.5$, which is smaller than the standard deviation of the regression error term. Therefore, this simulation setup represents a realistic scenario where the variation in ``noise" is comparable to that of the ``signal". We keep the true regression coefficients fixed, i.e., $\beta_0=1, \beta_{lnCnt}=0.5, \beta_{Z_1}=2, \beta_{Z_2}=1$, in order to quantify the degree of estimation bias and the effectiveness of any corrections.

The above simulation procedure is repeated for 100 rounds. Within each simulation round, a random forest model is built based on a random split of the data (as described earlier), and an artificial dataset is generated for the econometric estimations. Specifically, we first estimate the \textit{biased regression}, $Y \sim \beta_0+\beta_1 \widehat{lnCnt}+\beta_2Z_1+\beta_3Z_2$, where $\widehat{lnCnt}$ is the aggregate predictions from that round's random forest. This is precisely what would be done if one were to use the machine-learning-predicted covariate directly in the econometric model, without considering measurement error, as is commonly the case. We then apply ForestIV to obtain the corrected coefficients. Because an independent set of $\{D_{train}, D_{test}, D_{unlabel}\}$ is drawn in each simulation round, the ForestIV estimates across 100 simulation rounds naturally form the empirical sampling distributions. We report the standard deviations of the sampling distributions as the standard errors of ForestIV estimates.\footnote{We also obtained standard errors from our bootstrap approach, and note that the estimates are nearly identical.} Finally, besides the biased and ForestIV estimates, we also report the \textit{unbiased estimates} from directly running the regression on $D_{label}$.

\subsection{Basic Simulation Results}
We provide some descriptive evidence regarding the effectiveness of our two-step lasso regression procedure to select valid and strong instruments from candidates in Appendix \ref{Bike_Selection}. Next, we plot the sampling distributions of biased, unbiased, and ForestIV estimation on $lnCnt$ (i.e., the covariate generated by machine learning) across all 100 simulation runs in Figure \ref{fig:BikeData}. We report complete results in Appendix \ref{Bike_MainResults}.

\begin{figure}[tbhp]
\label{fig:BikeData}
\centering
\includegraphics[width=0.5\linewidth]{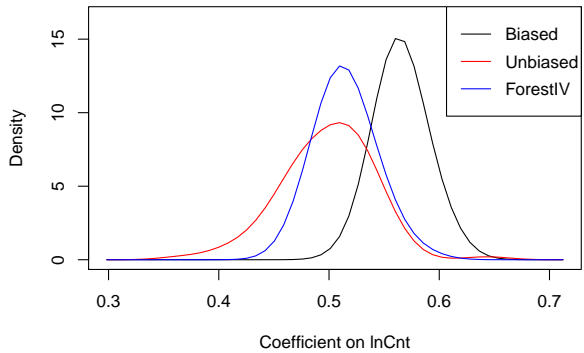}
\caption{Distributions of Biased, Unbiased, and ForestIV Estimation on $lnCnt$ across 100 Simulation Runs}
\end{figure}

We make several observations. First, directly using the $\widehat{lnCnt}$ predicted by random forest in the regression model clearly results in biases. The coefficient on $lnCnt$ is \textit{overestimated} on average. Second, compared to the biased regression, ForestIV effectively mitigates estimation bias on $lnCnt$. Third, compared to the unbiased estimation, ForestIV estimation has a ``narrower" distribution (i.e., has smaller standard errors), indicating sizable increase in estimation precision.

We conduct several sensitivity analyses to understand performance of ForestIV with respect to several parameters. We summarize the key insights from the sensitivity analyses as follows, while referring to Appendix \ref{Bike_Sensitivity}. for detailed results. First, we examine how ForestIV estimations change as the \textit{size of the unlabeled dataset} increases (Appendix \ref{Bike_Sensitivity_Size}). This helps illustrate the asymptotic properties of ForestIV estimations. We found that ForestIV estimates converge as more unlabeled data is added, with shrinking confidence intervals. The converged estimates substantially reduce bias and achieve smaller standard errors than the unbiased estimates. Second, we further increase the amount of noise in the data used for econometric estimation, and check the corresponding performance of ForestIV (Appendix \ref{Bike_Sensitivity_Noise}). While standard errors of ForestIV estimates become larger, we still observe effective bias correction. Third, we vary the total number of trees in the random forest model, and found that ForestIV requires a reasonably large random forest to perform well (Appendix \ref{Bike_Sensitivity_NTree}). Moreover, the Hotelling $T^2$ statistics turn out to be a useful metric to help determine the size of random forest.

Finally, we implement and empirically evaluate the relative benefit of the three alternative design choices noted earlier: (i) sample splitting (Appendix \ref{Bike_AlternativeDesign_SampleSplit}), (ii) using predictions from subsets of trees as the endogenous covariate and instruments (Appendix \ref{Bike_AlternativeDesign_SubsetTree}), and (iii) averaging across all ForestIV estimates not rejected by the Hotelling $T^2$ test (Appendix \ref{Bike_AlternativeDesign_Averaging}). We refer readers to the appendices for a detailed discussion of each simulation-based evaluation. On the whole, we find that none of these alternative designs outperforms our proposed ForestIV method (at least in the context of our Bike Sharing data and associated simulation setup). That said, the possibility remains that one or more of these alternatives could yield improvements under certain conditions, and each holds promise as an avenue of further exploration in future work. 

\section{ForestIV with Binary Endogenous Covariates}

In this section, we turn to the case of a binary misclassified (and thus endogenous) covariate, as would be generated by a machine-learning classifier model. It turns out that ForestIV still exhibits an ability to generate (and select) instrumental variables that produce improved estimates, although the underlying mechanisms are somewhat different from the case of a continuous endogenous covariate.

\subsection{Theoretical Results}

Suppose the binary outcome label can have values of 0 (negative class) or 1 (positive class). Consider a random forest \textit{classifier} with $M$ decision trees. We again use notations $X$, $\widehat{X}$, and $\widehat{X}^{(i)}$ to represent the ground truth, prediction of the forest, and prediction of individual tree $i$. The prediction error of individual tree $i$ is defined as $e^{(i)} = \widehat{X}^{(i)} - X$. For any given data point, $e^{(i)}$ can take three possible values: 0 (correct prediction), 1 (false positive), and $-1$ (false negative). 

Meanwhile, based on the econometric literature, e.g., \cite{angrist2008mostly}, instrumental variables with binary endogenous covariates can be applied in the same manner as in the continuous case; that is, treat the variable as continuous, and employ a 2SLS estimation. The first stage of this 2SLS estimation amounts to a linear probability model. The predicted values for the endogenous covariate recovered from the first stage regression thus reflect exogenous variation in continuous class probabilities. These values would then be used in the second stage regression. Intuitively, to evaluate whether predictions from one tree, $j$, can serve as a valid instrument for predictions from another tree, $i$, we need to assess (1) $Cov(\widehat{X}^{(i)}, \widehat{X}^{(j)})$ (the relevance condition) and (2) $Cov(e^{(i)}, \widehat{X}^{(j)})$ (the exclusion restriction).

The first condition is typically satisfied, because two individual trees from a reasonably well-performing random forest are both somewhat predictive of the outcome, i.e., $Cov(\widehat{X}^{(i)}, \widehat{X}^{(j)})\neq 0$. The second condition can be written as $Cov(e^{(i)}, e^{(j)} + X)$. Accordingly, we next provide several theoretical results to characterize $Cov(e^{(i)}, e^{(j)})$ and $Cov(e^{(i)}, X)$, respectively.

\begin{restatable}{theorem}{theoremBinaryA}
\label{theoremBinaryA}
The error rate of a random forest binary classifier decreases with $\mathbb{E}_j \mathbb{E}_i Corr(|e^{(i)}|, |e^{(j)}|)$, where $e^{(i)}$ and $e^{(j)}$ are prediction errors of tree $i$ and tree $j$ ($i\neq j$).
\end{restatable}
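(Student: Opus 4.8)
The plan is to reduce the forest's 0-1 error rate to a statement about the fraction of trees that err at a given input, and then to show that this fraction's dispersion across inputs is exactly what $\mathbb{E}_j\mathbb{E}_i Corr(|e^{(i)}|,|e^{(j)}|)$ measures. The first observation is that $|e^{(i)}|$ is simply the indicator that tree $i$ errs: writing $B_i(\mathbf{f}) = |e^{(i)}|$, a majority-vote forest errs precisely when $\bar{B}(\mathbf{f}) = \frac{1}{M}\sum_{i=1}^M B_i(\mathbf{f}) > 1/2$, so the error rate is $\epsilon_{RF} = \mathbb{P}_{\mathbf{f}}(\bar{B}(\mathbf{f}) > 1/2)$. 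Because the trees are randomized independently given the data (bootstrap resampling and feature subsampling), $B_i$ and $B_j$ are conditionally independent given $\mathbf{f}$, so by the law of large numbers over the tree randomization $\bar{B}(\mathbf{f}) \to \pi(\mathbf{f})$, where $\pi(\mathbf{f}) = \mathbb{E}_i[B_i \mid \mathbf{f}]$ is the probability a random tree errs at $\mathbf{f}$; this gives $\epsilon_{RF} = \mathbb{P}_{\mathbf{f}}(\pi(\mathbf{f}) > 1/2)$, with the finite-$M$ discrepancy controlled by the tree-growth regularity of Assumption 2.

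First I would establish the bridge between the theorem's target quantity and the dispersion of $\pi$. Conditional independence yields $Cov(|e^{(i)}|,|e^{(j)}|) = \mathbb{E}_{\mathbf{f}}[\pi(\mathbf{f})^2] - \bar{p}^2 = Var_{\mathbf{f}}(\pi)$, where $\bar{p} = \mathbb{E}_{\mathbf{f}}[\pi]$ is the average individual-tree error rate; since the marginal variance of each $B_i$ is $\bar{p}(1-\bar{p})$, we get $\bar{\rho} := \mathbb{E}_j\mathbb{E}_i Corr(|e^{(i)}|,|e^{(j)}|) = Var_{\mathbf{f}}(\pi)/[\bar{p}(1-\bar{p})]$. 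A larger $\bar{\rho}$ therefore corresponds to a per-input error probability $\pi(\mathbf{f})$ that is concentrated toward the extremes $\{0,1\}$ — the trees agreeing on which inputs are intrinsically hard — rather than hovering near its mean across all inputs.

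Next I would prove the monotonicity along the forest-improvement trajectory implied by Assumptions 1-2, under which the forest is consistent and the bulk of the input space is ``easy'' in the sense that $\pi(\mathbf{f}) < 1/2$ there. As individual trees grow more accurate on this easy bulk, $\pi$ is driven below $1/2$ on a set of inputs of increasing measure while the residual error mass concentrates on a shrinking hard core where $\pi \to 1$; by the identity above this simultaneously raises $\bar{\rho}$ (the surviving mistakes are increasingly shared across trees) and lowers $\epsilon_{RF} = \mathbb{P}_{\mathbf{f}}(\pi > 1/2)$ (fewer inputs cross the decision boundary). I would formalize this as a monotone comparison: restricting attention to profiles of $\pi$ in which the easy-bulk inputs satisfy $\pi < 1/2$, I would show via a coupling / stochastic-dominance argument that increasing the concentration index (hence $\bar{\rho}$) moves probability mass out of the super-threshold region $\{\pi > 1/2\}$, delivering $d\epsilon_{RF}/d\bar{\rho} < 0$ without committing to a particular parametric family.

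The hard part will be pinning down the direction of the monotonicity, which is genuinely regime-dependent. Note that $1 - 2|e^{(i)}|$ is exactly Breiman's raw margin for a binary tree, so $\bar{\rho}$ coincides with Breiman's margin-correlation; if one naively holds the individual-tree error $\bar{p}$ (equivalently Breiman's strength) fixed and applies a variance / Chebyshev bound, a mean-preserving spread of $\pi$ inflates the upper-tail mass above $1/2$ and the error rate \emph{rises} with $\bar{\rho}$ — the opposite inequality. The crux is therefore to make explicit the coupling between strength and $\bar{\rho}$ that obtains for consistent forests, and to use Assumptions 1-2 to exclude the ``threshold-straddling'' configurations (in which $\pi$ piles up just above $1/2$) that drive the reverse monotonicity; only once those are ruled out does the concentration of errors onto a hard core translate into a net shrinkage of the super-majority region. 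Getting this step right — rather than applying a generic correlation bound — is what separates the correct argument from the Chebyshev-type estimate that yields the wrong direction.
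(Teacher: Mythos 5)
There is a genuine gap, and it begins with the direction of the claimed monotonicity. The paper's proof is a two-line reduction to Breiman's strength--correlation result: under binary classification the raw margin of tree $i$ is $rmg(\widehat{X}^{(i)}) = \mathbb{I}(\widehat{X}^{(i)}=X) - \mathbb{I}(\widehat{X}^{(i)}\neq X) = \boldsymbol{1} - 2\,\mathbb{I}(\widehat{X}^{(i)}\neq X) = \boldsymbol{1} - 2|e^{(i)}|$, so by affine invariance of correlation (the two factors of $-2$ cancel) $Corr(rmg(\widehat{X}^{(i)}), rmg(\widehat{X}^{(j)})) = Corr(|e^{(i)}|,|e^{(j)}|)$, and Breiman's theorem tying the forest's generalization error to the mean raw-margin correlation (via the variance bound $PE^{*} \le \bar{\rho}(1-s^{2})/s^{2}$) finishes the proof: the error rate co-moves positively with the correlation, i.e., it shrinks as the inter-tree error correlation shrinks. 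That positive co-movement is exactly how the paper uses the result in the surrounding text (``a well-performing random forest would have relatively small $Corr(|e^{(i)}|,|e^{(j)}|)$''). Your proposal argues the opposite sign: you aim at $d\epsilon_{RF}/d\bar{\rho} < 0$, i.e., that concentrating errors onto a shared hard core (raising $\bar{\rho}$) lowers the error rate, and you explicitly set out to exclude the Chebyshev-type variance bound as the ``reverse monotonicity'' to be ruled out --- but that variance bound is precisely Breiman's mechanism and precisely the content of the theorem you were asked to prove. If your plan were carried out, it would establish a different statement, one that would contradict the paper's subsequent argument (Theorems 3--4 and the exclusion-restriction discussion all lean on error correlations being small in good forests).

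Second, even on its own terms the proposal does not close: the pivotal step --- the coupling/stochastic-dominance argument showing that along the ``forest-improvement trajectory'' increased concentration of $\pi(\mathbf{f})$ removes mass from $\{\pi > 1/2\}$ --- is only announced, never constructed, and nothing available here delivers it. Note also that Assumptions 1--2 belong to the continuous regression case (Scornet et al.'s consistency for additive models) and are not invoked in the paper's binary-classification theorem at all; Assumption 2 moreover governs the number of leaves relative to the training-set size $n$, not the number of trees $M$, so it cannot control the finite-$M$ voting discrepancy as you claim. A smaller point: your identity $\bar{\rho} = Var_{\mathbf{f}}(\pi)/\bigl[\bar{p}(1-\bar{p})\bigr]$ computes correlation over the joint randomness of trees and inputs, whereas the theorem's $Corr(|e^{(i)}|,|e^{(j)}|)$ is taken over the sample for a fixed pair of trees and then averaged over pairs --- related but not identical quantities. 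The repair is simple and short: observe $|e^{(i)}| = \mathbb{I}(\widehat{X}^{(i)}\neq X)$, rewrite the raw margin as an affine function of it, and invoke Breiman's theorem directly, as the paper does.
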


All proofs are included in Appendix \ref{Proofs}. This theorem suggests that a well-performing random forest would have relatively small $Corr(|e^{(i)}|, |e^{(j)}|)$. Typically, because $Corr(|e^{(i)}|, |e^{(j)}|) \neq Corr(e^{(i)}, e^{(j)})$, this result seems to indicate that the exclusion restriction of instrument validity, $Cov(e^{(i)}, e^{(j)} + X) = 0$, is not satisfied. However, the following two theorems show that $Cov(e^{(i)}, X)$ is always non-zero (i.e., the classical measurement error assumption is never true for binary misclassification), and can be offset, or cancelled out, $Cov(e^{(i)}, e^{(j)})$, thereby making the exclusion condition plausible.

\begin{restatable}{theorem}{theoremBinaryB}
\label{theoremBinaryB}
$\forall i \in \{1,\ldots,M\}$, $Cov(e^{(i)}, X) < 0$.
\end{restatable}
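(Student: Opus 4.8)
The plan is to exploit the fact that both $X$ and $\widehat{X}^{(i)}$ take values in $\{0,1\}$, which collapses the entire covariance to a handful of confusion-matrix probabilities. The starting observation is the decomposition
\[
Cov(e^{(i)}, X) = Cov(\widehat{X}^{(i)} - X, X) = Cov(\widehat{X}^{(i)}, X) - Var(X),
\]
so it suffices to show that a tree's prediction cannot covary with the ground truth more strongly than the ground truth covaries with itself.

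First I would parametrize the joint law of $(\widehat{X}^{(i)}, X)$ by three quantities: the base rate $p = P(X=1)$, the true-positive rate $s = P(\widehat{X}^{(i)}=1 \mid X=1)$, and the false-positive rate $f = P(\widehat{X}^{(i)}=1 \mid X=0)$. Because $X$ is binary, $Var(X) = p(1-p)$, and a short computation of $\mathbb{E}[\widehat{X}^{(i)} X] = sp$ together with $\mathbb{E}[\widehat{X}^{(i)}]\,\mathbb{E}[X] = (sp + f(1-p))\,p$ yields
\[
Cov(\widehat{X}^{(i)}, X) = p(1-p)(s - f).
\]
Substituting into the decomposition gives the clean closed form
\[
Cov(e^{(i)}, X) = p(1-p)(s - f - 1) = -p(1-p)\bigl[(1-s) + f\bigr].
\]

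The sign then follows immediately: $p(1-p) > 0$ whenever the label is non-degenerate (both classes occur), and $(1-s) + f \ge 0$ always, since $s \le 1$ and $f \ge 0$. Hence $Cov(e^{(i)}, X) \le 0$, with equality precisely when $s = 1$ and $f = 0$, i.e., when tree $i$ classifies perfectly. The only real subtlety — and the step I would flag as the crux — is justifying the \emph{strict} inequality claimed in the statement: it rests on the (entirely benign) premise that an individual tree, being a weak learner, is imperfect, so that $(1-s) + f > 0$; I would state this as an explicit, mild condition. The argument is otherwise elementary, and its real content is the interpretable identity that the negative covariance between a binary classifier's error and the truth equals $p(1-p)$ scaled by the total misclassification contribution $(1-s) + f$, which makes transparent why classical (uncorrelated) measurement error can \emph{never} hold under misclassification — the feature motivating the subsequent cancellation argument with $Cov(e^{(i)}, e^{(j)})$.
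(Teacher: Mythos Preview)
Your proof is correct and, at bottom, is the same confusion-matrix computation the paper carries out, but packaged more cleanly. The paper works directly at the sample level with the four cell counts $n_{00},n_{01},n_{10},n_{11}$ and expands $N\sum e_{ik}a_k - \sum e_{ik}\sum a_k$ to obtain $-n_{00}n_{10}-2n_{01}n_{10}-n_{01}n_{11}<0$. Your route differs in two useful ways: (i) the opening decomposition $Cov(e^{(i)},X)=Cov(\widehat{X}^{(i)},X)-Var(X)$ replaces a brute-force expansion by a one-line reduction, and (ii) your parametrization in terms of the base rate $p$, sensitivity $s$, and false-positive rate $f$ yields the interpretable closed form $-p(1-p)\bigl[(1-s)+f\bigr]$, from which the sign is immediate. (One can check the two expressions agree: substituting $n_{00}/N=(1-f)(1-p)$, $n_{01}/N=f(1-p)$, $n_{10}/N=(1-s)p$, $n_{11}/N=sp$ into the paper's sum collapses it to exactly your formula.) You also make explicit what the paper leaves implicit, namely that the \emph{strict} inequality requires the tree to misclassify at least once and the label to be non-degenerate; this is a genuine improvement in rigor, and your framing of it as a benign condition on a weak learner is appropriate.
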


For notation simplicity, denote the probability that $X = \alpha, \widehat{X}^{(i)} = \beta, \widehat{X}^{(j)} = \gamma$ as $p_{\alpha \beta \gamma}$ ($\alpha, \beta, \gamma \in \{0,1\}$), and denote the probability that $X = \alpha$ as $p_{\alpha \bullet \bullet}$.

\begin{restatable}{theorem}{theoremBinaryC}
\label{theoremBinaryC}
$\forall i \neq j \in \{1,\ldots,M\}$, $Cov(e^{(i)}, e^{(j)}) > 0$ if and only if $(p_{000}+p_{111})(p_{011}+p_{100}) + 2(p_{0 \bullet \bullet}-p_{000})p_{100} + 2(p_{1 \bullet \bullet}-p_{111})p_{011} + (p_{010}-p_{101})(p_{110}-p_{001}) > 0$.
\end{restatable}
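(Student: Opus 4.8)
The plan is to compute $Cov(e^{(i)}, e^{(j)})$ explicitly from the joint distribution of $(X, \widehat{X}^{(i)}, \widehat{X}^{(j)})$, which is supported on only the eight atoms indexed by the $p_{\alpha\beta\gamma}$, and then to show that the resulting polynomial in the $p_{\alpha\beta\gamma}$ coincides \emph{exactly} with the expression on the right-hand side of the claimed inequality. Once that identity is established, the stated ``if and only if'' is immediate, since $Cov(e^{(i)}, e^{(j)}) > 0$ holds precisely when the matched expression is positive. No inequality manipulation is needed beyond verifying an algebraic identity.

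First I would tabulate, for each of the eight outcomes $(X,\widehat{X}^{(i)},\widehat{X}^{(j)}) = (\alpha,\beta,\gamma)$, the induced values $e^{(i)} = \beta - \alpha$ and $e^{(j)} = \gamma - \alpha$ together with their product. Since for a fixed truth value both errors point the same way, the product $e^{(i)}e^{(j)}$ is nonzero (and equal to $+1$) only on the two atoms where both trees err in the same direction, namely $(0,1,1)$ (both false positives) and $(1,0,0)$ (both false negatives), so $\mathbb{E}[e^{(i)}e^{(j)}] = p_{011} + p_{100}$. Reading the marginals off the same table gives $\mathbb{E}[e^{(i)}] = (p_{010}+p_{011}) - (p_{100}+p_{101})$ and $\mathbb{E}[e^{(j)}] = (p_{001}+p_{011}) - (p_{100}+p_{110})$. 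Assembling these yields $Cov(e^{(i)}, e^{(j)}) = (p_{011}+p_{100}) - \mathbb{E}[e^{(i)}]\,\mathbb{E}[e^{(j)}]$.

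The remaining work is to verify that this quantity equals the claimed right-hand side. Expanding the product $\mathbb{E}[e^{(i)}]\,\mathbb{E}[e^{(j)}]$ produces a homogeneous quadratic in the $p_{\alpha\beta\gamma}$, whereas the leading term $p_{011} + p_{100}$ is linear; the two cannot be compared term by term until they are brought to the same degree. The key step --- and the part I expect to be the main obstacle --- is to homogenize using the normalization constraint $\sum_{\alpha,\beta,\gamma} p_{\alpha\beta\gamma} = 1$, writing $p_{011} = p_{011}\sum_{\alpha,\beta,\gamma} p_{\alpha\beta\gamma}$ and likewise for $p_{100}$. After this substitution every term is quadratic, and a careful collection of like terms (tracking in particular the cancellation of the $p_{011}^2$, $p_{100}^2$, and $p_{011}p_{100}$ contributions) regroups the difference into the four summands stated in the theorem: the ``both-correct'' product $(p_{000}+p_{111})(p_{011}+p_{100})$, the two cross terms $2(p_{0\bullet\bullet}-p_{000})p_{100}$ and $2(p_{1\bullet\bullet}-p_{111})p_{011}$, and the residual $(p_{010}-p_{101})(p_{110}-p_{001})$. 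The obstacle here is purely bookkeeping: matching the roughly dozen surviving monomials on each side without error. No further probabilistic idea is required once the homogenization trick is in place, so the entire argument reduces to this single identity-verification.
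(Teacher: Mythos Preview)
Your proposal is correct and follows essentially the same route as the paper: the paper tabulates the eight outcomes, computes the sample covariance as $\tfrac{1}{N^2}\bigl(N\sum e_{ik}e_{jk} - \sum e_{ik}\sum e_{jk}\bigr)$, and then regroups the resulting quadratic in the counts $n_{\alpha\beta\gamma}$ into the four stated summands. Your ``homogenization by $\sum p_{\alpha\beta\gamma}=1$'' is exactly the paper's multiplication of $\sum e_{ik}e_{jk}$ by $N = n_1+\cdots+n_8$, just phrased in probabilities rather than counts.
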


Theorem \ref{theoremBinaryB} shows that the prediction error of an individual tree is always negatively correlated with the ground truth. Theorem \ref{theoremBinaryC} suggests that, a few corner cases aside, the correlation between prediction errors of two individual trees is positive.\footnote{One such corner case that $Cov(e^{(i)}, e^{(j)}) < 0$ is when $p_{011}=p_{100}=0$ and $(p_{010}-p_{101})(p_{110}-p_{001}) < 0$, which means that the predictions from the two trees are never wrong at the same time, and their misclassification patterns satisfy a stringent condition. Such corner case is not easily realized.} As a result, $Cov(e^{(i)}, X)$ is likely to offset $Cov(e^{(i)}, e^{(j)})$, leading to a relatively small value of $Cov(e^{(i)}, e^{(j)} + X)$.\footnote{This is empirically supported in our simulation experiments using the Bank Marketing data, discussed in the next section. In particular, the average covariance between prediction errors of any two trees, $\mathbb{E}_{i\neq j}Cov(e^{(i)}, e^{(j)})$, is 0.071 (correlation is 0.465). Meanwhile, the average covariance between a tree's prediction errors and the ground truth, $\mathbb{E}_i Cov(e^{(i)}, X)$, is -0.079 (correlation is -0.601). As a result, the average covariance between one tree's predictions and another tree's prediction errors, $\mathbb{E}_{i\neq j}Cov(e^{(i)}, \widehat{X}^{(j)})$, is only -0.008 (correlation is -0.061).} In other words, in the case of a binary endogenous covariate generated by a random forest classifier, other trees' predictions can still plausibly serve as instrumental variables.

\subsection{Simulation Experiments}

We demonstrate the performance of ForestIV in the case of binary classification using the Bank Marketing data \citep{moro2014data} as an example dataset, which contains 45,211 records related to a bank's telemarketing efforts. We randomly partitioned the data into 1,500 observations to serve as $D_{train}$, 500 observations to serve as $D_{test}$, and the remaining 43,211 observations to serve as $D_{unlabel}$. Using the training data, we build a random forest classifier comprised of 100 trees to predict a binary outcome, $Deposit$, representing whether a client subscribed to a term deposit as a result of the phone call, based on 16 attributes describing the client and the marketing campaign. This dataset is an example of how machine learning can enable empirical studies of questions in direct marketing or customer relationship management.

We next simulate an econometric model: $Y=1+0.5Deposit+2Z_1+Z_2+\varepsilon$, where $Z_1 \sim Uniform[-1,1]$, $Z_2 \sim N(0,1)$, and $\varepsilon \sim N(0,4)$. As before, we repeat the simulation for 100 rounds. Within each round, we estimate the biased regression (directly using the random forest predictions, $\widehat{Deposit}$, in the regression), the unbiased regression obtained on $D_{label}$, and the corrected coefficient obtained from the ForestIV procedure on $D_{unlabel}$. We plot the distributions of biased, unbiased, and ForestIV estimation on $Deposit$ in Figure \ref{fig:BankData}, and report the full results in Appendix \ref{Bank_MainResults}.

\begin{figure}[tbhp]
\label{fig:BankData}
\centering
\includegraphics[width=0.5\linewidth]{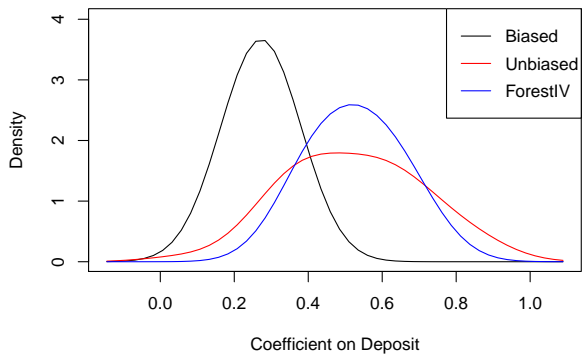}
\caption{Distribution of Biased, Unbiased, and ForestIV Estimation on $Deposit$ across 100 Simulation Runs}
\end{figure}

Consistent with previous simulations, we observe that directly using the random forest's predictions in our regression leads to severe biases. The coefficient on $Deposit$ is underestimated. ForestIV is again effective in mitigating the biases. Finally, compared to unbiased estimation, ForestIV achieves gains in estimation precision, as indicated by its ``narrower" distribution. In general, this set of simulations confirms the effectiveness of ForestIV for a binary endogenous covariate. We repeat all the sensitivity analyses that were conducted previously in the continuous covariate scenario and observe consistent insights. All results are described in Appendix \ref{Bank_Sensitivity}.

In addition to the above simulation experiments, we conduct benchmarking analyses to compare the relative correction performance of ForestIV with those of three other approaches from the existing literature: Simulation-Extrapolation \citep[SIMEX,][]{Cook1994}, Latent Instrumental Variables \citep[LatentIV,][]{ebbes2005solving}, and regression adjustment for nonparametrically generated regressors \citep{meng2016linear}. First, SIMEX is an alternative correction approach that has previously been demonstrated to perform well in the same problem context. We run two sets of simulations to show that ForestIV achieves better correction performance than SIMEX. We also identify a systematic ``blindspot" in the SIMEX correction procedure that occurs when measurement error is correlated with other precisely-measured covariates in an econometric model. While SIMEX will produce provably erroneous correction results in the presence of such correlation, empirical results suggest that our ForestIV approach does not suffer from this limitation. Second, with the LatentIV approach, a latent instrumental variable is modeled and jointly estimated with the main regression to address the endogeneity issue. Our simulation again shows that ForestIV outperforms LatentIV in bias correction. Third, in the generated regressors literature, \cite{meng2016linear} have proposed a method to explicitly adjust coefficient estimates to account for biases due to measurement error in linear regressions, deriving from mis-measured, nonparametrically generated regressors. Once more, our simulation shows that ForestIV outperforms this method in bias correction. Thus, in sum, ForestIV is found to outperform all three benchmarks in our simulation experiments (see Appendix \ref{Benchmarking} for details).

\section{Conclusion and Future Work}

To summarize, we introduce a new approach, ForestIV, which addresses bias in regression estimates that is attributable to (predictive) measurement error in data-mined covariates. With a continuous endogenous covariate, the intuition behind ForestIV is that a high-performing random forest will be comprised of (i) trees that are individually accurate in their predictions and, thus, ``overlap", offering ``repeated measures" of true, exogenous variation in the data-mined variable, and (ii) trees that exhibit low correlations in their prediction errors, which in tandem with the former point implies that trees make ``different" mistakes, and thus embed orthogonal measurement errors. Our approach is closely connected to the idea of using multiple error-prone measures as instrumental variables \citep[e.g.,][]{blackburn1992unobserved,hausman1995nonlinear,lewbel2019using}. With a binary endogenous covariate, while trees no longer necessarily have low error correlations, we show that instrument validity is nonetheless plausible.

The application of our approach in empirical contexts has the potential to improve the precision and robustness of estimations, and thus subsequent decision-making. Meanwhile, our approach shows the possibility of automatically generating candidate instruments based on an ensemble learning technique, which complements the emerging literature on the use of machine learning methods for causal inference \citep[e.g.,][]{athey2016recursive,mcfowland2018efficient}. At the core of ForestIV is the fundamental trade-off between an estimator's bias and variance, which together describe its statistical risk. ForestIV attempts to provide an estimate with reduced overall risk, one with substantially lower bias than the biased regression and lower variance than the unbiased regression. We provide several additional guidelines for using ForestIV in practice in Appendix \ref{Practice}.

Several future research directions are worth pursuing. For example, while this paper focuses on selecting valid and strong instruments from a given random forest, future work might look to leverage a specialized random forest algorithm that explicitly aims to minimize individual trees' prediction error correlations. The Rotation Forest algorithm \citep{blaser2016random} and the Dynamic Random Forest algorithm \citep{bernard2012dynamic} represent two such attempts. Another direction to pursue is to generalize ForestIV to bagging-based machine learning models, more broadly. Intuitively, because individual learners from a bagging model are trained on different bootstrap samples of the training data, they are likely to generate correlated predictions and weakly correlated prediction errors. Future work can investigate whether general bagging models can produce useful instruments, whether the types of individual learners (e.g., decision tree or other techniques) affect the validity of instruments, and the performance of those alternative ensembles relative to ForestIV.

\clearpage
\appendix
\section*{Appendix}

\section{Relevant Literature} \label{Literature}

The problem of measurement error has been studied extensively in the econometrics literature. In regression models, measurement error in independent covariates is a form of endogeneity \citep{greene2003econometric} and is known to lead to biased coefficient estimates, not only for the mis-measured covariate, but also for coefficients associated with other, precisely measured covariates appearing in the same regression (unless the precisely measured covariates are strictly independent of the measurement error). In contrast to the common (mis-held) belief that measurement error only leads to attenuation of coefficient on the mis-measured covariate (i.e., bias toward zero), the actual direction of bias is difficult to anticipate, particularly as the econometric specification or the structure of the measurement error grow more complicated \citep{gustafson2003measurement,schennach2016recent,yang2018mind}. In general, ignoring measurement error may lead to errors in sign, magnitude, and statistical significance of coefficient estimates.

\subsection{Instrumental Variable Approach}

\textit{Instrumental variables} are a standard approach to addressing the measurement error problem; they can be used in a two-stage least-squares estimation to mitigate associated estimation biases. A valid instrument in this case will be correlated with the mis-measured covariate but not its measurement error. However, researchers using the instrumental variable approach often face two significant challenges. First, valid instruments are not easy to locate. Typically, instruments cannot be identified absent knowledge of the underlying data-generation process, i.e., the nature of the endogeneity at play.\footnote{As an example, see https://thepoliticalmethodologist.com/2015/07/13/why-cant-we-just-make-up-instrumental-variables/ for a discussion of why purely randomly generated instruments are typically invalid.} Second, to justify the validity of a proposed instrument, researcher needs to provide convincing evidence that the instrument satisfies two criteria, namely \textit{relevance} and \textit{exclusion}. While the former criterion can be evaluated by empirically examining strength of the association between the endogenous covariate and the instrument, the latter is often untestable, and thus depends on the researcher offering a convincing qualitative, conceptual argument that the instrument has no association with the final outcome of interest, except via its influence on the endogenous variable. 

Our algorithm offers a novel opportunity to achieve both requirements, through quantitative means, as it leverages the availability of a perfectly measured set of data, i.e., the predictive model's labeled data, reducing the need for qualitative arguments (or restrictive assumptions).

\subsection{Ensemble Learning and Random Forest}

In the machine learning literature, ensemble learning represents an important paradigm in the formulation of predictive models. Instead of building one model to solve a prediction problem, ensemble learning aims to build multiple individual models, i.e., an ensemble of models, and to combine their individual predictions to arrive at a more accurate and stable aggregate prediction \citep{aggarwal2015data}. Some typical ensemble learning methods include bagging \citep{breiman1996bagging}, boosting \citep{freund1996experiments}, and random forest \citep{Breiman2001,denisko2018classification}, of which the latter is particularly relevant to our paper. A random forest is an ensemble of decision trees. Each tree is built on a random sample of the training data, and a random subset of features are considered for each split (node) in a tree \citep{Breiman2001}. The forest's prediction for an observation results from aggregating predictions from individual trees, e.g., majority voting for classification tasks or averaging for numeric prediction tasks.

Random forests have proven extremely useful in a variety of fields of study, due to their commonly high predictive accuracy \citep{verikas2011mining,denisko2018classification}. The predictive performance of a random forest is positively associated with the accuracy of each individual tree and negatively associated with inter-tree correlations in prediction errors \citep{Breiman2001,bernard2010study}. Intuitively, the performance of a random forest increases as a joint function of the individual prediction accuracy of trees comprising the forest, and the degree to which constituent trees make "different" prediction errors. Based on the observation that these objectives are closely analogous to the relevance and exclusion restrictions that underpin valid instrumental variables, there is some face validity to the idea that these individual trees may serve as candidate instruments for one another to resolve endogeneity in later regressions arising from predictive (measurement) error.

\subsection{The Generated Regressors Literature}

The measurement error problem we study here is closely related to a large body of econometrics literature on ``generated regressors", where certain covariates in econometric estimations are not directly observed; rather, they are first estimated. In fact, two-stage least-square (2SLS) estimation with instrumental variables is one such generated regressor model, where predicted values of the endogenous covariate used in the second stage regression are generated from the first stage regression. Researchers have examined the theoretical properties of econometric models incorporating generated regressors in cases where the generating function or the final estimation are parametric \citep[e.g.,][]{newey1984method,murphy1985estimation}, semi-parametric \citep[e.g.,][]{blundell2004endogeneity,mammen2016semiparametric}, or non-parametric \citep[e.g.,][]{sperlich2009note,mammen2012nonparametric}. For in-depth reviews of this extensive literature, we refer the reader to \cite{pagan1984econometric} and \cite{oxley1993econometric}. In some of this work, researchers have noted that the generating function may yield biased estimates of the regressor in question, which in turn will yield bias and inconsistency in a second stage regression, discussing (typically theoretical) approaches to resolving the problem. 

Our context and the problem we are seeking to resolve bear obvious similarities to the generated regressor problem. However, the measurement error problem that we address here nonetheless has some unique characteristics that differentiate it. In particular, in our setting, the measurement error stems from predictions of a machine learning model, which is built using a set of labeled data on which the covariate of interest is (assumedly) perfectly observed. In other words, the covariate to be generated is only \textit{partially} unobserved. This is different from the typical setup of a generated regressors model in the literature. This partial observation via a labeled dataset enables objective quantification of measurement error and potentially more effective bias correction. As will be discussed later, our proposed method makes use of this labeled dataset to achieve bias correction. 

The above having been said, some recent work in the generated regressors literature has proposed methods to correct for bias in generated regressors in the presence of distributional information about that bias \citep[e.g.,][]{meng2016linear}. Accordingly, as part of our benchmarking analyses, we seek to compare the relative performance of our method with that approach \citep{meng2016linear}.

\clearpage

\section{Theoretical Setup} \label{FormalTheory}
In this appendix, we provide a formal setup of the measurement error problem as well as the instrumental variable approach to address it. We begin by providing a standard setup of the econometric estimation problem. We assume that in a certain population, the relationship of interest is captured by
\begin{equation}
    \label{eq:pop_reg}
    \boldsymbol{Y} = \boldsymbol{X}\beta_X + \boldsymbol{Z}\beta_Z + \varepsilon,
\end{equation}
where $\varepsilon$ is the random error term.
We observe $\{(y_i,x_i,z_i)_{i=1,\ldots,n_1}\}$, a sample of $n_1$ independent and identically distributed units from the population of interest, such that
\begin{align}
    \begin{split}
        \label{eq:exog_reg}
        y_i&=x_i\beta_X + z_i\beta_Z + \varepsilon_i\quad\text{for } i=1,\ldots, n_1\\
        \boldsymbol{Y}&=\boldsymbol{X}\beta_X + \boldsymbol{Z} \beta_Z + \boldsymbol{\varepsilon} \quad \boldsymbol{Z} \in \mathbb{R}^{n_1\times k} \text{ and } \boldsymbol{Y}, \boldsymbol{\varepsilon}, \boldsymbol{X} \in \mathbb{R}^{n_1\times 1}
   \end{split}
\end{align}
Moreover, we will let $\beta=\left[\beta_X,\beta_Z\right]$, $\boldsymbol{A} = \left[\boldsymbol{X},\boldsymbol{Z}\right]$, and for simplicity assume that the $k+1$ explanatory variables represented by $\boldsymbol{A}$ have zero mean, in addition to the following standard linear regression assumptions:\\[1ex]
\emph{(A1)} $\mathbb{E}\left[\boldsymbol{\varepsilon}|\boldsymbol{A}\right]=0$,\\ 
\emph{(A2)} $rank(\mathbb{E}\left[\boldsymbol{A'}\boldsymbol{A}\right])=k+1$.\\[1ex]
With (A1) and (A2), we have that the Ordinary Least Squares (OLS) estimator $\hat{\beta}_{OLS}=(\boldsymbol{A'}\boldsymbol{A})^{-1}\boldsymbol{A'}\boldsymbol{Y}$ is unbiased and consistent with respect to $\beta$. 

In our context, $\hat{\beta}_{OLS}$ can be estimated using the labeled data (i.e., $n_1 \equiv |D_{label}|$). In addition, we are able to observe another sample $\{(y_i,\widehat{x_i},z_i)_{i=n_1+1,\ldots,n_1+n_2}\}$ of $n_2$ independent and identically distributed units from the population of interest, where $\widehat{x_i}$ (i.e., vector $\boldsymbol{\widehat{X}}$) is an imperfect measurement of $x_i$ (i.e., vector $\boldsymbol{X}$). In our context, $\boldsymbol{X}$ represents the true values of the data-mined variable, and $\boldsymbol{\widehat{X}}$ represents imperfect predictions generated by the machine learning model. For $n_1$ samples, we obtain the true value of $\boldsymbol{X}$ (i.e., the ground truth labels), but it is prohibitive (e.g., in cost or time) to obtain labels for the remaining $n_2$ samples (i.e., $n_2 \equiv |D_{unlabel}|$). Typically, $n_2 >> n_1$. Given its large size, there is a clear desire to utilize the information contained in the $n_2$ samples for inference. We make the following additional assumptions about the imperfect measurement, $\boldsymbol{\widehat{X}}$:\\[1ex]
\emph{(A3)} $\boldsymbol{\widehat{X}} = \boldsymbol{X} + \boldsymbol{e} $,\\
\emph{(A4)} $\mathbb{E}\left[\boldsymbol{\varepsilon'}\boldsymbol{e}\right]=\boldsymbol{0}$,\\
\emph{(A5)} $\mathbb{E}\left[\boldsymbol{X'}\boldsymbol{e}\right]=\boldsymbol{0}$, and $\mathbb{E}\left[\boldsymbol{Z'}\boldsymbol{e}\right]=\boldsymbol{0}$\\[1ex]
Attempting to estimate \eqref{eq:exog_reg} simply by replacing $\boldsymbol{X}$ with $\boldsymbol{\widehat{X}}$ is known to result in a biased and inconsistent estimate of $\hat{\beta}_{OLS}$ because $\boldsymbol{\widehat{X}}$ is endogenous:
\begin{align}
    \begin{split}
        \label{eq:endog_reg}
        y_i&=\widehat{x_i}\beta_X + z_i\beta_Z + \left[\varepsilon_i-e_i\beta_X\right]\quad\text{for } i=n_1+1,\ldots, n_1+n_2,\\
        \boldsymbol{Y}&=\boldsymbol{\widehat{X}}\beta_X + \boldsymbol{Z} \beta_Z + \left[\boldsymbol{\varepsilon}- \boldsymbol{e}\beta_X \right]\quad \boldsymbol{Z} \in \mathbb{R}^{n_2\times k} \text{ and } \boldsymbol{Y}, \boldsymbol{\varepsilon},\boldsymbol{e} ,\boldsymbol{\widehat{X}} \in \mathbb{R}^{n_2\times 1} ,
   \end{split}
\end{align}
where endogeneity derives from the fact that $\mathbb{E}\left(\boldsymbol{\widehat{X}}'\boldsymbol{e}\right)=Var\left(\boldsymbol{e}\right)$. Note that $\left[\cdot\right]$ in the regression equations above and below is notation used (for emphasis) to represent the (unobserved) error term.

Instrumental variable regression is a common approach to resolve issues of endogeneity. It begins with the assumption that we actually observe $(y_i,\widehat{x_i},z_i, w_i)_{i=n_1+1,\ldots,n_1+n_2}$, where $w_i$ is a $d$-dimensional row-vector of (presumed) instrumental variables, with $d\ge 1$. Estimation on $n_2$ samples is carried out in a two-stage least squares (2SLS) regression, of the form: 
\begin{align}
    \begin{split}
        \label{eq:tsls_reg}
        \boldsymbol{\widehat{X}}&=\boldsymbol{W}\boldsymbol{\Lambda_W} + \boldsymbol{Z} \boldsymbol{\Lambda_Z} + \boldsymbol{u}\quad \boldsymbol{W} \in \mathbb{R}^{n_2\times d},\\
        \boldsymbol{Y}&=\widetilde{\boldsymbol{X}}\beta_X + \boldsymbol{Z} \beta_Z + \overbrace{\left[\boldsymbol{\varepsilon}+ \beta_X(\widetilde{\boldsymbol{u}}-\boldsymbol{e}) \right]}^{\boldsymbol{r}},
        \end{split}
\end{align}
where $\widetilde{\boldsymbol{X}} = \boldsymbol{H}_W\boldsymbol{\widehat{X}}$ represents the projection of $\boldsymbol{\widehat{X}}$ onto the column space of $\boldsymbol{B}$, where $\boldsymbol{B} = \left[\boldsymbol{W},\boldsymbol{Z}\right]$ and $\Lambda_B=\left[\Lambda_W,\Lambda_Z\right]$, $\boldsymbol{H}_W= \boldsymbol{B(B'B)^{-1}B'}$; and where $\widetilde{\boldsymbol{u}}=\boldsymbol{\widehat{X}}-\widetilde{\boldsymbol{X}}$. Denote $\boldsymbol{C} = \left[\widetilde{\boldsymbol{X}},\boldsymbol{Z}\right]$, then the 2SLS estimator $\hat{\beta}_{2SLS}=(\boldsymbol{C}'\boldsymbol{C})^{-1}\boldsymbol{C}'\boldsymbol{Y}$, equates to
\begin{align}
    \begin{split}
        \label{eq:tsls_est}
        \hat{\beta}_{2SLS}=&\beta + \left(\frac{\boldsymbol{C}'\boldsymbol{C}}{n_2}\right)^{-1} \left(\frac{\boldsymbol{C}'r}{n_2}\right)\\
        \pconv&\beta + \plim_{n_2\to\infty} \left(\frac{\boldsymbol{C}'\boldsymbol{C}}{n_2}\right)^{-1} \plim_{n_2\to\infty}\left(\frac{\boldsymbol{C}'r}{n_2}\right) = \tilde{\beta}
    \end{split}
\end{align}

Therefore, $\hat{\beta}_{2SLS}$ is a consistent estimator of $\beta$ (i.e., $\tilde{\beta} = \beta$)  under additional standard instrumental variable assumptions:\\
\emph{(A6)} $\mathbb{E}\left[\boldsymbol{B'}\boldsymbol{\varepsilon}\right]=\boldsymbol{0}$,\\ 
\emph{(A7)} $\mathbb{E}\left[\boldsymbol{B'}\boldsymbol{e}\right]=\boldsymbol{0}$,\\ 
\emph{(A8)} $rank(\mathbb{E}\left[\boldsymbol{B'B}\right])=d+k$,\\
\emph{(A9)} $rank(\mathbb{E}\left[\boldsymbol{B'A}\boldsymbol{}\right])=k+1$.

The following theorem formally establishes that, given an endogenous, mis-measured covariate (in this case, a vector of predictions from a single tree within the random forest), as well as a set of \textit{other} mis-measured covariates (in this case, vectors of predictions obtained from other trees comprising the same random forest), in the absence of correlation between the error vector (actual $-$ prediction) associated with the endogenous covariate and the set of other mis-measured covariates,  one can obtain consistent estimates of the coefficients of interest.

\textbf{THEOREM}. \textit{Let the matrix $\boldsymbol{P}=\left[\boldsymbol{p_1}, \ldots, \boldsymbol{p_M}\right] = \boldsymbol{X} + \left[\boldsymbol{e_1}, \ldots, \boldsymbol{e_M}\right]$, where $\forall j \in \{1,\ldots,M\}~\boldsymbol{p_j} \in \mathbb{R}^{n_2\times 1}$ is a column vector measure (with error as defined in \emph{(A3)} - \emph{(A5)}) of variable $\boldsymbol{X}$ under the specification defined in \eqref{eq:endog_reg} with corresponding assumptions \emph{(A1)}-\emph{(A2)}. Additionally, let $\boldsymbol{E}= \boldsymbol{P}-\boldsymbol{X} = \left[\boldsymbol{e_1}, \dots,\boldsymbol{e_M}\right]$ be the matrix of measurement errors,  $S_j\subseteq\{1, \ldots, M\} \setminus j$ be a subset of cardinality $d$ such that $\boldsymbol{P_{S_j}}, \boldsymbol{E_{S_j}} \in \mathbb{R}^{n_2\times d}$ are subsets of the column vectors of $\boldsymbol{P}$ and $\boldsymbol{E}$ respectively defined by the column indices in $S_j$. If $\mathbb{E}\left[\boldsymbol{E_{S_j}'e_j}\right]=\boldsymbol{0}$ then using $P_{S_j}$ as instruments for $P_j$ in 2SLS, provides consistent estimates of the population parameters $\beta$ defined in~\eqref{eq:pop_reg}.}

\begin{proof}
From \eqref{eq:endog_reg} we therefore know that $\boldsymbol{p_j}$ is endogenous because its measurement error $\boldsymbol{e_j}$ is captured by the unobserved error term. While from \eqref{eq:tsls_reg} and \eqref{eq:tsls_est} we know that given a matrix $\boldsymbol{W}$ of instrumental variables, $\hat{\beta}_{2SLS}\pconv\boldsymbol{\beta}$ when \emph{(A6)}-\emph{(A9)} are satisfied. Therefore, it suffices to show that \emph{(A6)}-\emph{(A9)} are satisfied when we let $\boldsymbol{W}=\boldsymbol{P_{S_j}}$.
\begin{align*}
        \emph{(A1)}, \emph{(A3)}-\emph{(A5)} &\implies \mathbb{E}\left[\boldsymbol{Z'}\boldsymbol{\varepsilon}\right]=\boldsymbol{0},~\mathbb{E}\left[\boldsymbol{P_{S_j}'}\boldsymbol{\varepsilon}\right]=\boldsymbol{0}\\
        &\implies \mathbb{E}\left[\boldsymbol{B'}\boldsymbol{\varepsilon}\right]=\boldsymbol{0} \emph{(A6)}, \\
        \\
        \mathbb{E}\left[\boldsymbol{E_{S_j}'}\boldsymbol{e_j}\right]=\boldsymbol{0}&\implies \mathbb{E}\left[\boldsymbol{P_{S_j}'}\boldsymbol{e_j}\right]=\boldsymbol{0}\\
        &\implies \mathbb{E}\left[\boldsymbol{\left[\boldsymbol{X_{S_j}}+\boldsymbol{E_{S_j}},Z\right]'}\boldsymbol{\varepsilon}\right]=\boldsymbol{0}\\
        &\implies \mathbb{E}\left[\boldsymbol{B'}\boldsymbol{\varepsilon}\right]=\boldsymbol{0} \emph{(A7)},
\end{align*}
and finally \emph{(A8)}-\emph{(A9)} follow directly from \emph{(A1)}-\emph{(A5)}, recognizing that $\mathbb{E}\left[\frac{\boldsymbol{p_j'}\boldsymbol{p_l}}{n_2}\right] = Var[X]~j,l \in \{1,\ldots,M\}$.
\end{proof}

\clearpage

\section{Pseudocode for Instrumental Variables Selection Procedure} \label{Algorithm1}

\begin{algorithm}
\caption{Instrumental Variables Selection Procedure}
\label{algorithm:IVSelect}
\KwData{Individual trees' predictions $\boldsymbol{P} = \{\widehat{X}^{(1)}, \dots ,\widehat{X}^{(M)}\}$ on $D_{test}$ and $D_{unlabel}$ and ground truth $X$ on $D_{test}$}
\textbf{Notation}: denote $\left \Vert .\right \Vert_1$ as the L1-norm, $\left \Vert .\right \Vert_2$ as the L2-norm, and $\lambda$ as the lasso penalty level\;
Set $\widehat{X}^{(i)}$ as the endogenous covariate\;
Set $\boldsymbol{P_{-i}} \gets \boldsymbol{P} \setminus \widehat{X}^{(i)}$ as the pool of candidate instruments\;
Set $CurrIVs \gets \boldsymbol{P_{-i}}$\;
\While{True}{
    \tcp{Step 1: removal of invalid instruments}
    Obtain $e^{(i)} = \widehat{X}^{(i)} - X$ on $D_{test}$\;
    Estimate lasso regression ${\displaystyle \min_{\boldsymbol{\Delta}}}\left\{ \dfrac{1}{|D_{test}|} \Vert e^{(i)} - \sum_{\widehat{X}^{(j)} \in CurrIVs} \delta_j \widehat{X}^{(j)} \Vert_{2}^{2}+\lambda\Vert\boldsymbol{\Delta}\Vert_{1}\right\}$\;
    Get $\boldsymbol{V_i} \gets \{\widehat{X}^{(j)} \in CurrIVs | \delta_j=0 \}$ as the set of instruments with \textit{zero} coefficients\;
    \tcp{Step 2: selection of strong instruments}
    On $D_{test} \cup D_{unlabel}$, estimate lasso regression ${\displaystyle \min_{\boldsymbol{\Gamma}}}\left\{ \dfrac{1}{|D_{test} \cup D_{unlabel}|} \Vert \widehat{X}^{(i)} - \sum_{\widehat{X}^{(j)} \in \boldsymbol{V_i}} \gamma_j \widehat{X}^{(j)} \Vert_{2}^{2}+\lambda\Vert\boldsymbol{\Gamma}\Vert_{1}\right\}$\;
    Get $\boldsymbol{S_i} \gets \{\widehat{X}^{(j)} \in \boldsymbol{V_i} | \gamma_j \neq 0 \}$ as the set of instruments with \textit{non-zero} coefficients\;
    \If{$\boldsymbol{S_i} == CurrIVs$}{
        \textbf{Break}; \tcp{remaining instruments are valid and strong}
    }{
        Set $CurrIVs \gets \boldsymbol{S_i}$; \tcp{repeat selection}
    }
}
\KwOut{$\boldsymbol{S_i}$, the set of valid and strong instruments for $\widehat{X}^{(i)}$.}
\end{algorithm}

\clearpage

\section{Pseudocode for ForestIV Approach} \label{Algorithm2}

\begin{algorithm}
\caption{Pseudocode for ForestIV Approach}
\label{algorithm:ForestIV}
\KwData{Individual trees' predictions $\boldsymbol{P} = \{\widehat{X}^{(1)}, \dots ,\widehat{X}^{(M)}\}$ on $D_{test}$ and $D_{unlabel}$ and ground truth $X$ on $D_{test}$}
Estimate $\widehat{\boldsymbol{\beta}}_{label}$ as unbiased coefficients on $D_{label}$\;
\ForEach{$i \in \{1, \dots, M\}$}{
    Set $\widehat{X}^{(i)}$ as the endogenous covariate in the econometric model of interest\;
    Select $\boldsymbol{S_i} \subseteq \boldsymbol{P} \setminus \widehat{X}^{(i)}$ using Algorithm \ref{algorithm:IVSelect}\;
    \tcp{Step 3: estimation}
    \If{$\boldsymbol{S_i} \neq \emptyset$}{
        Use $\boldsymbol{S_i}$ as instrumental variables, estimate $\widehat{\boldsymbol{\beta}}_{IV}^i$ on $D_{unlabel}$, with variance-covariance matrix $\widehat{\boldsymbol{\Sigma}}_{IV}^i$\;
        Calculate Hotelling's $T^2$ statistic, $H_i$, between $\widehat{\boldsymbol{\beta}}_{IV}^i$ and $\widehat{\boldsymbol{\beta}}_{label}$\;
        \If{$H_i < Critical Value$}{
            Retain $\widehat{\boldsymbol{\beta}}_{IV}^i$\;
            Calculate $MSE_i = tr\left((\widehat{\boldsymbol{\beta}}_{IV}^i - \widehat{\boldsymbol{\beta}}_{label}) (\widehat{\boldsymbol{\beta}}_{IV}^i - \widehat{\boldsymbol{\beta}}_{label})^T + \widehat{\boldsymbol{\Sigma}}_{IV}^i\right)$\;
        }
    }
}
\KwOut{The retained $\widehat{\boldsymbol{\beta}}_{IV}^i$ with the smallest $MSE_i$. Use bootstrapping to obtain estimation of variances.}
\end{algorithm}

\clearpage

\section{Simulation Results for Continuous Endogenous Covariate with Bike Sharing Data} \label{BikeResults}

\subsection{Effectiveness of Two-Step Selection of Instrumental Variables} \label{Bike_Selection}
On the Bike Sharing data, we calculate two metrics using $D_{test}$, where both the ground truth $lnCnt$ and the random forest's predictions are observed. Before the two-step lasso-based selection procedure, for a given individual tree $i$, we treat \textit{all} other trees' predictions as instruments. We calculate (1) the $F$-statistic associated with a 2SLS regression, which serves as an illustrative measure of instrument strength, and (2) the adjusted $R^2$ associated with an OLS regression of tree $i$'s prediction error on the instruments, which serves as an illustrative measure of instrument exclusion. Observing small $F$-statistic and large $R^2$ indicate having weak and invalid instruments. After the selection, we again calculate these two metrics, using only \textit{selected} instruments. We plot the $F$-statistics against the adjusted $R^2$, both before and after the lasso-based selections, in the following Figure \ref{fig:scatter}.

\begin{figure}[tbhp]
\centering
\label{fig:scatter}
\includegraphics[width=0.9\linewidth]{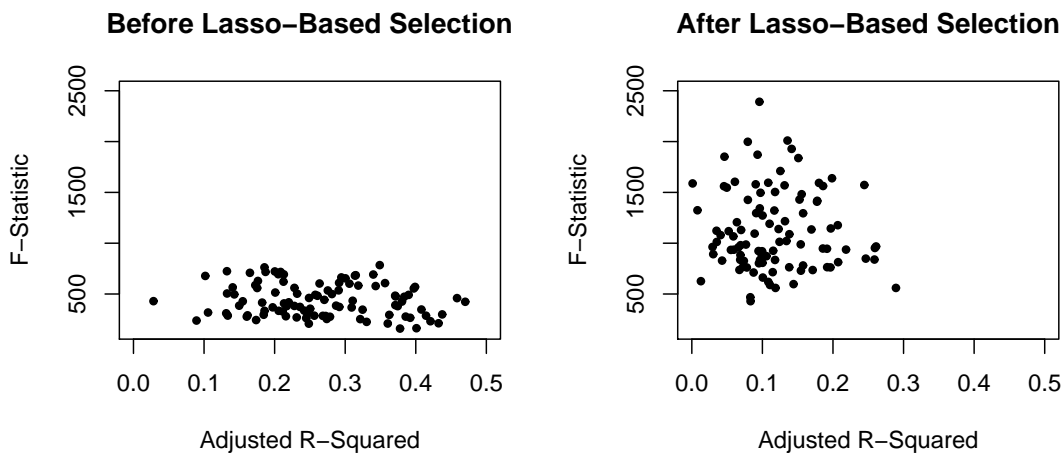}
\caption{Plot of $F$-Statistic against Adjusted $R^2$ based on a Single Simulation Run. The left plot corresponds to 2SLS estimations using all candidate instruments (i.e., without selection), and the right plot corresponds to 2SLS estimations using only the selected instruments.}
\end{figure}

We can see that, after the lasso-based selections, the $F$-statistics become much larger ($p < 0.001$), and the adjusted $R^2$ become even smaller ($p < 0.001$). This provides descriptive evidence that our lasso-based procedures are useful in further selecting strong and valid instruments.

\subsection{Main Results} \label{Bike_MainResults}

In each simulation round, we record the estimates and standard errors associated with each coefficient. In Table \ref{table:BikeData}, we report the average coefficients and standard errors (in parentheses) across all simulation rounds, both for the biased regression and the unbiased regression (obtained on the labeled data). For ForestIV, we report the average coefficients and the standard deviations of the sampling distributions across 100 simulation runs as standard errors. We also compute the $p$-value (in square brackets) associated with a $t$-test \textit{comparing each estimated coefficient with its underlying true value} (i.e., a larger $p$-value means an estimate is statistically closer to the true value). We can see that, directly using the $\widehat{lnCnt}$ predicted by random forest in the regression model results, on average, in 13.2\% overestimation on $lnCnt$ and 29.8\% underestimation on the intercept. ForestIV estimates are not significantly different from the true $lnCnt$ values, as indicated by the $p$-values. Compared with the unbiased estimates, ForestIV achieves smaller standard errors (i.e., higher estimation precision).

\begin{table}[tbhp]
\label{table:BikeData}
    \centering
    \begin{tabular}{c c c c c}
    \hline
         & True & Biased & Unbiased & ForestIV \\ 
         \hline
        Intercept       & 1.0 & 0.702 (0.063)   & 1.018 (0.204) &  0.957 (0.134) \\
                        &     & [0.004]     & [0.511]   & [0.745]   \\
        $lnCnt$         & 0.5 &  0.566 (0.013)  & 0.498 (0.040) &  0.512 (0.027) \\
                        &     & [0.002]     & [0.530]   & [0.652]   \\
        $Z_1$           & 2.0 & 2.000 (0.003)   & 1.999 (0.011) &  2.000 (0.003) \\
                        &     & [0.459]     & [0.524]   & [0.977]   \\
        $Z_2$           & 1.0 & 1.000 (0.002)   & 0.999 (0.006) &  1.000 (0.002) \\
                        &     & [0.480]     & [0.486]   & [0.989]   \\
        \hline
        Ave\_MSE        &     & 0.150 & 0 & 0.017 \\
    \hline
    \end{tabular}
    \caption{ForestIV Results on Bike Sharing Data. Standard errors in parentheses. $p$-values comparing estimates with true values in square brackets. Ave\_MSE contains the average empirical MSE associated with each set of estimates across 100 simulation runs.}
\end{table}

\subsection{Sensitivity Analyses} \label{Bike_Sensitivity}

Below we present the sensitivity analyses results on Bike Sharing data.

\subsubsection{Size of Unlabeled Data}\label{Bike_Sensitivity_Size}

We repeat the main simulation with 8 different sizes of unlabeled data, respectively 100, 500, 1,000, 5,000, 7,500, 10,000, 12,500, and 16,179 (i.e., all remaining instances). Other parameters, e.g., the size of the labeled data set and the econometric model specifications, are kept unchanged across these simulations. This sensitivity analysis examines how the ForestIV estimate for $lnCnt$ converges as the 2SLS estimations are exposed to larger volumes of unlabeled data. The results are plotted in Figure \ref{fig:UnlabelSize}. The 95\% confidence interval of the ForestIV estimate is constructed as the range between the 2.5th percentile and 97.5th percentile of the estimate's sampling distribution.

\begin{figure}[tbhp]
\label{fig:UnlabelSize}
\centering
\includegraphics[width=0.8\linewidth]{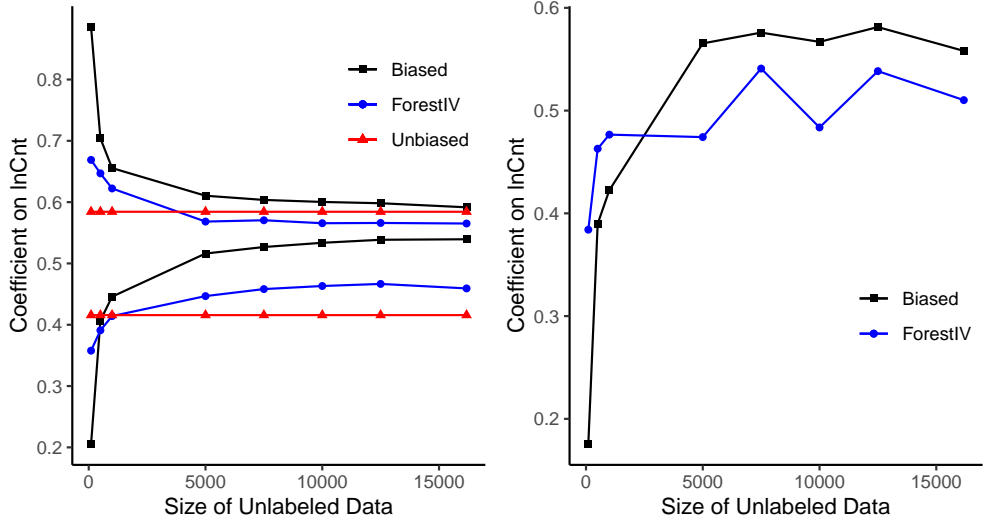}
\caption{Confidence Interval and Point Estimate on $lnCnt$ with Different Sizes of Unlabeled Data. The left panel shows the 95\% confidence interval of the estimation on $lnCnt$ with different sizes of unlabeled data. The right panel shows the point estimate on $lnCnt$ with different sizes of unlabeled data based on single simulation run. The true coefficient on $lnCnt$ is 0.5.}
\end{figure}

We observe that the biased estimate stays biased regardless of increases in the volume of unlabeled data, which is expected given the fact that measurement error results in inconsistent estimates. On the other hand, the confidence interval of ForestIV estimate becomes narrower as more unlabeled data is added. In other words, with sufficient unlabeled data, ForestIV estimates converge and achieve more precision than the unbiased estimates. The point estimates from a single simulation run, despite some local fluctuations, support the same observations.

This sensitivity analysis provides additional support for the quality of instruments generated from the random forest, and subsequently selected by our proposed procedure. In practice, the benefit of combining machine learning with econometric modeling is perhaps most salient when the unlabeled data is much larger than the labeled data, particularly when acquiring a large amount of labeled data is prohibitively expensive. ForestIV can offer substantial utility in such a scenario, as its estimates converge as the volume of unlabeled data grows larger, and this ultimately brings gains in precision relative to the unbiased estimates obtained using the (relatively small) labeled dataset.

\subsubsection{Noise in Econometric Data}\label{Bike_Sensitivity_Noise}
We repeat the simulation by increasing $\sigma_{\varepsilon}$ from 2 to 5 (or equivalently, increasing the error term variance by 6.25 times). This represents the scenario where the data for econometric estimation contains a large amount of noise. The results are reported in Table \ref{table:BikeNoise}.

\begin{table}[tbhp]
\label{table:BikeNoise}
    \centering
    \begin{tabular}{c c c c c}
    \hline
         & True & Biased & Unbiased & ForestIV \\ 
         \hline
        Intercept       & 1.0 & 0.692 (0.155)  & 1.044 (0.512)    &  0.870 (0.270) \\
                        &     & [0.157]   & [0.511]   & [0.630]   \\
        $lnCnt$         & 0.5 &  0.568 (0.033)   & 0.495 (0.089)  &  0.531 (0.055) \\
                        &     & [0.143]   & [0.530]   & [0.576]   \\
        $Z_1$           & 2.0 & 2.000 (0.007)    & 1.998 (0.027)  &  2.000 (0.008) \\
                        &     & [0.464]   & [0.524]   & [0.982]   \\
        $Z_2$           & 1.0 & 1.000 (0.004)    & 0.998 (0.016)  &  1.000 (0.004) \\
                        &     & [0.488]   & [0.486]   & [0.998]   \\
        \hline
        Ave\_MSE        &     & 0.408 & 0 & 0.143 \\
    \hline
    \end{tabular}
    \caption{ForestIV Results on Bike Sharing Data with $\sigma_{\varepsilon}=5$. Standard errors in parentheses. $p$-values comparing estimates with true values in square brackets. Ave\_MSE contains the average empirical MSE associated with each set of estimates across 100 simulation runs.}
\end{table}

We observe that the average standard errors associated with biased, unbiased, and ForestIV estimates all become larger (roughly by a factor of 2.5) compared to the case when $\sigma_{\varepsilon}=2$, which is a direct consequence of increased noise in the data. However, despite greater noise in the data, ForestIV still achieves significant gains in the precision of estimates, relative to the unbiased estimates, as the ForestIV estimates yield substantially smaller standard errors.

\subsubsection{Predictive Performance of Random Forest} \label{Bike_Sensitivity_NTree}
Thus far we have treated the random forest model as fixed. However, the standard procedure of building a predictive machine learning model typically involves tuning various parameters to achieve the best predictive performance on the hold-out labeled data. Therefore, in the next set of sensitivity analyses, we examine the relationship between the predictive performance of random forest (measured using Root Mean Square Error, RMSE, a common performance metric in machine learning) and the correction performance of ForestIV. Given that many parameters of a random forest can be fine-tuned, we choose to focus on the total number of trees (i.e., $M$), because this parameter is directly related to the number of candidate instruments. 

We repeat the simulation with 5 different values of $M$: 25, 50, 100, 150, and 200. Across these simulations, the other parameters remain as described in the main simulation. We report ForestIV estimation results for each choice of $M$ in Table \ref{table:BikeNTree}. We also report the average Hotelling $T^2$-statistic that compares $\widehat{\boldsymbol{\beta}}_{label}$ with ForestIV estimates in the last row of the table. In Figure \ref{fig:BikeNtree}, we plot the distribution of ForestIV estimation on $lnCnt$ for three particular sizes of random forest, i.e., $M \in \{25, 100, 200\}$.

\begin{table}[tbhp]
\label{table:BikeNTree}
    \centering
    \small
    \begin{tabular}{c c c c c c c}
    \hline
        &  & $M=25$ & $M=50$ & $M=100$ & $M=150$ & $M=200$ \\ 
        \hline
        RMSE  &  & 0.6657 & 0.6552 & 0.6436 & 0.6430 & 0.6411\\
        \hline
        & True & & & & \\
        \hline
            Intercept & 1.0 & 0.884 (0.132)  & 0.919 (0.120)  & 0.957 (0.133) & 0.941 (0.133) & 0.889 (0.132)    \\
                        &  & [0.381]    & [0.500]   & [0.745]  & [0.656]  & [0.519]     \\
            $lnCnt$   & 0.5 & 0.526 (0.027)  & 0.519 (0.025)  & 0.512 (0.027) & 0.514 (0.027) & 0.524 (0.027)    \\
                        &  & [0.339]    & [0.440]   & [0.652]  & [0.590]  & [0.482]     \\
            $Z_1$     & 2.0 & 2.000 (0.003)  & 2.000 (0.003)  & 2.000 (0.003) & 2.000 (0.003) & 2.000 (0.003)    \\
                        &  & [0.892]    & [0.920]   & [0.977]  & [0.911]  & [0.777]     \\
            $Z_2$     & 1.0 & 1.000 (0.002)  & 1.000 (0.002)  & 1.000 (0.002) & 1.000 (0.002) & 1.000 (0.002)    \\
                        &  & [0.916]    & [0.924]   & [0.989]  & [0.947]  & [0.897]     \\
            \hline
            Ave\_MSE  &  & 0.016 & 0.020 & 0.017 & 0.016 & 0.017 \\
        \hline
        Ave. Hotelling $T^2$ stats & & 3.0817 & 3.0501 & 2.7094 & 2.7816 & 2.9357 \\
        \hline
    \end{tabular}
    \caption{ForestIV Estimates for Different Choices of $M$. Standard errors in parentheses. $p$-values comparing estimates with true values in square brackets. Ave\_MSE contains the average empirical MSE associated with each set of estimates across 100 simulation runs.}
\end{table}

\begin{figure}[tbhp]
\label{fig:BikeNtree}
\centering
\includegraphics[width=0.47\linewidth]{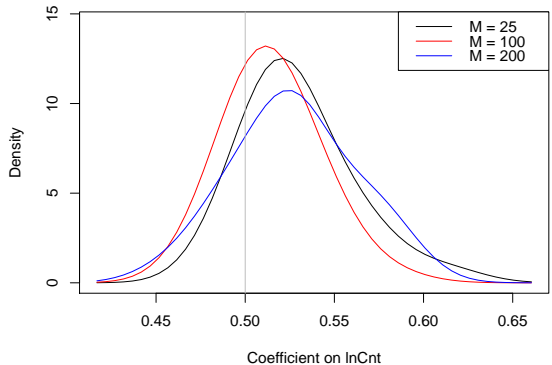}
\caption{ForestIV Estimation on $lnCnt$ for Different Choices of $M$}
\end{figure}

Taking the average point estimates for each $M$ as indications of correction performance, we see that having too few trees (e.g., $M=25$ or $M=50$) tends to result in higher RMSE \textit{and} poor ForestIV results. Meanwhile, as the number of trees increases, we observe marginal improvements on RMSE, but this does not always lead to better correction performance. Specifically, when $M=150$ and $M=200$, we observe slight improvements in RMSE scores compared to when $M=100$, but the correction performance of ForestIV declines. This is also supported by a visual examination of \ref{fig:BikeNtree}, where we observe better correction performance for $M=100$ than for $M=25$ or $M=200$. Importantly, we calculate the average Hotelling $T^2$ statistic that compares $\widehat{\boldsymbol{\beta}}_{label}$ with ForestIV estimates, and find it to align with ForestIV's correction performance. Smaller $T^2$ statistic indicates better correction performance. 

This set of simulations implies that ForestIV cannot replace efforts to tune the hyper-parameters associated with the random forest prediction model. That is, having a better-performing random forest tends to improve the performance of ForestIV on average. At the same time, in this particular demonstration, we observe that using an excessive number of trees can potentially hurt ForestIV's correction performance, likely because instrument selection can be challenging with a large number of candidate instruments. In practice, researchers can rely on the Hotelling $T^2$ statistic as a signal, and select the number of trees that minimizes $T^2$ statistic.

\subsection{Simulation Results on Alternative Designs of ForestIV}\label{Bike_AlternativeDesign}

\subsubsection{Sample Splitting}\label{Bike_AlternativeDesign_SampleSplit}
An alternative way of constructing candidate instrumental variables is sample splitting, which is a general strategy in econometrics to leverage multiple independent samples for estimation and inference \citep[e.g.,][]{angrist1995split,chernozhukov2016double,athey2017state}. Consider splitting the training data into two independent subsets, and build one random forest model on each subset. Each forest is asymptotically consistent (as the size of training data goes to infinity), and its prediction error is characterized as a random and independent noise term \citep{scornet2015consistency}. Therefore, predictions from one random forest could, in principle, serve as an instrument for predictions from the other. The potential appeal of this sample splitting approach is that it (to some degree) preserves the superior predictive performance of a random forest over an individual decision tree, and thus may reduce the extent of the measurement error problem up front.

We explore the correction performance of this alternative approach with a simulation experiment on the Bike Sharing data. The basic setup is the same as in Section 3.1 of the main manuscript, except that we randomly split the training data into two equally-sized samples (each with 500 instances), and train one random forest model with 100 trees on each sample. We refer to the two forests as Forest\#1 and Forest\#2. We then use Forest\#1's predictions on the unlabeled data as the endogenous covariate, and use Forest\#2's predictions on the unlabeled data as the instrumental variable. Because there is only one instrument in this case, no instrument selection procedure is needed. We report the average estimates of this sample splitting approach over 100 simulation runs, together with the biased, unbiased, and ForestIV average estimates, in Table \ref{table:BikeData_SampleSplit}.

\begin{table}[tbhp]
\label{table:BikeData_SampleSplit}
    \centering
    \begin{tabular}{c c c c c c}
    \hline
         & True & Biased & Unbiased & ForestIV & Sample Splitting \\ 
         \hline
        Intercept       & 1.0 & 0.702 (0.063)   & 1.018 (0.204) &  0.957 (0.134) & 0.553 (0.115)\\
                        &     & [0.004]     & [0.511]   & [0.745]   & [0.000]\\
        $lnCnt$         & 0.5 &  0.566 (0.013)  & 0.498 (0.040) &  0.512 (0.027) & 0.599 (0.023)\\
                        &     & [0.002]     & [0.530]   & [0.652]  & [0.000]\\
        $Z_1$           & 2.0 & 2.000 (0.003)   & 1.999 (0.011) &  2.000 (0.003) & 2.000 (0.003)\\
                        &     & [0.459]     & [0.524]   & [0.977]  & [0.961]\\
        $Z_2$           & 1.0 & 1.000 (0.002)   & 0.999 (0.006) &  1.000 (0.002) & 1.000 (0.001)\\
                        &     & [0.480]     & [0.486]   & [0.989]  & [0.977]\\
        \hline
        Ave\_MSE        &     & 0.150 & 0 & 0.017 & 0.283\\
    \hline
    \end{tabular}
    \caption{Sample Splitting: Results on Bike Sharing Data. Standard errors in parentheses. $p$-values comparing estimates with true values in square brackets. Ave\_MSE contains the average empirical MSE associated with each set of estimates across 100 simulation runs.}
\end{table}

We find that the estimates produced by the sample splitting approach are in fact even more biased than the biased estimates (taken directly from the machine learning model without any correction), which is a strong indication that the instrument is invalid \citep{murray2006avoiding}. Indeed, after examining the predictions from both random forests on the testing data, we find that, while Forest\#1's predictions are strongly correlated with Forest\#2's predictions (average correlation is 0.96, indicating strong instrument relevance), Forest\#1's prediction \textit{errors} are \textit{not} weakly correlated with Forest\#2's predictions (average correlation is 0.30, indicating systematic violations of the instrument exclusion requirement). Therefore, although sample splitting may be a theoretically feasible approach to construct instruments, it does not appear to be effective in this case. We further repeat the simulations for a few additional sizes of training data, and summarize the results in Table \ref{table:BikeData_SampleSplitSize}. We again observe that ForestIV consistently outperforms sample splitting. 

\begin{table}[tbhp]
\label{table:BikeData_SampleSplitSize}
    \centering
    \begin{tabular}{c c c c c c c}
    \hline
         & True & $|D_{train}|$ & 250 & 500 & 1000 & 2000 \\ 
         &  & $|D_{train}|/2$ & 125 & 250 & 500 & 1000 \\ 
         \hline
        Intercept       & 1.0 & ForestIV & 0.951 (0.377) & 0.872 (0.242) &  0.957 (0.134) & 0.903 (0.096)\\
                        &  & Sample Splitting & 0.259 (0.283)  & 0.421 (0.180) &  0.553 (0.115) & 0.631 (0.090) \\
        $lnCnt$         & 0.5 & ForestIV & 0.514 (0.079) & 0.528 (0.050) &  0.512 (0.027) & 0.522 (0.021)\\
                        &  & Sample Splitting & 0.664 (0.056)  & 0.627 (0.035) &  0.599 (0.023) & 0.581 (0.018) \\
        $Z_1$           & 2.0 & ForestIV & 1.999 (0.003) & 2.000 (0.003) &  2.000 (0.003) & 2.000 (0.002)\\
                        &  & Sample Splitting &  2.000 (0.003)  & 1.999 (0.002) &  2.000 (0.003) & 2.000 (0.003)\\
        $Z_2$           & 1.0 & ForestIV & 1.000 (0.002) & 1.000 (0.002) &  1.000 (0.002) & 1.000 (0.001) \\
                        &  & Sample Splitting &  1.000 (0.002)  & 1.000 (0.002) &  1.000 (0.002) & 1.000 (0.002)\\
        \hline
        Ave\_MSE        &  & ForestIV & 0.040 & 0.010 & 0.017 & 0.021\\
                        &  & Sample Splitting & 0.664 & 0.391 & 0.283 & 0.156 \\
    \hline
    \end{tabular}
    \caption{Sample Splitting with Different Sizes of Training Data. Standard errors in parentheses. $|D_{train}|$ represents the size of training data, and $|D_{train}|/2$ is the size of each split sample. Ave\_MSE contains the average empirical MSE associated with each set of estimates across 100 simulation runs.}
\end{table}

In addition, we extend the simulations to 13 different sizes of training data: 200, 250, 300, 350, 400, 450, 500, 750, 1000, 1250, 1500, 1750, and 2000. For each size of training data, we apply both ForestIV and sample splitting, then calculate the average squared bias, variance, and MSE associated with each estimator using the true coefficient values (rather than the unbiased estimates) across 100 simulation runs. In Figure \ref{fig:SampleSplit}, we plot the squared bias, variance, and MSE of the two estimators across different sample sizes, both for the coefficient on $lnCnt$ (i.e., the variable generated by machine learning) and for all coefficients in the regression model. We also plot the ratios of squared bias, variance, and MSE of the two estimators. Empirically, we find that ForestIV has uniformly smaller bias, variance, and MSE than sample splitting, and the ratios between the two are always smaller than 1. Moreover, we observe a downward trend in the bias ratio and MSE ratio of ForestIV over sample splitting as the size of training data gets larger. This empirical evidence indicates that ForestIV may be a more efficient estimator than sample splitting. With a fixed size of training data, the former achieves smaller bias and smaller variance (i.e., better estimation) than the latter, and the advantage of ForestIV seems to become even larger as the size of training data increases.\footnote{We acknowledge that, if the size of training data were significantly larger, performance of the sample splitting approach could potentially further improve. However, once again, we must reiterate that, in the presence of more training data, there is no need to ``mine" covariates via machine-learning techniques in the first place.}

\begin{figure}[tbhp]
\label{fig:SampleSplit}
\centering
\includegraphics[width=\linewidth]{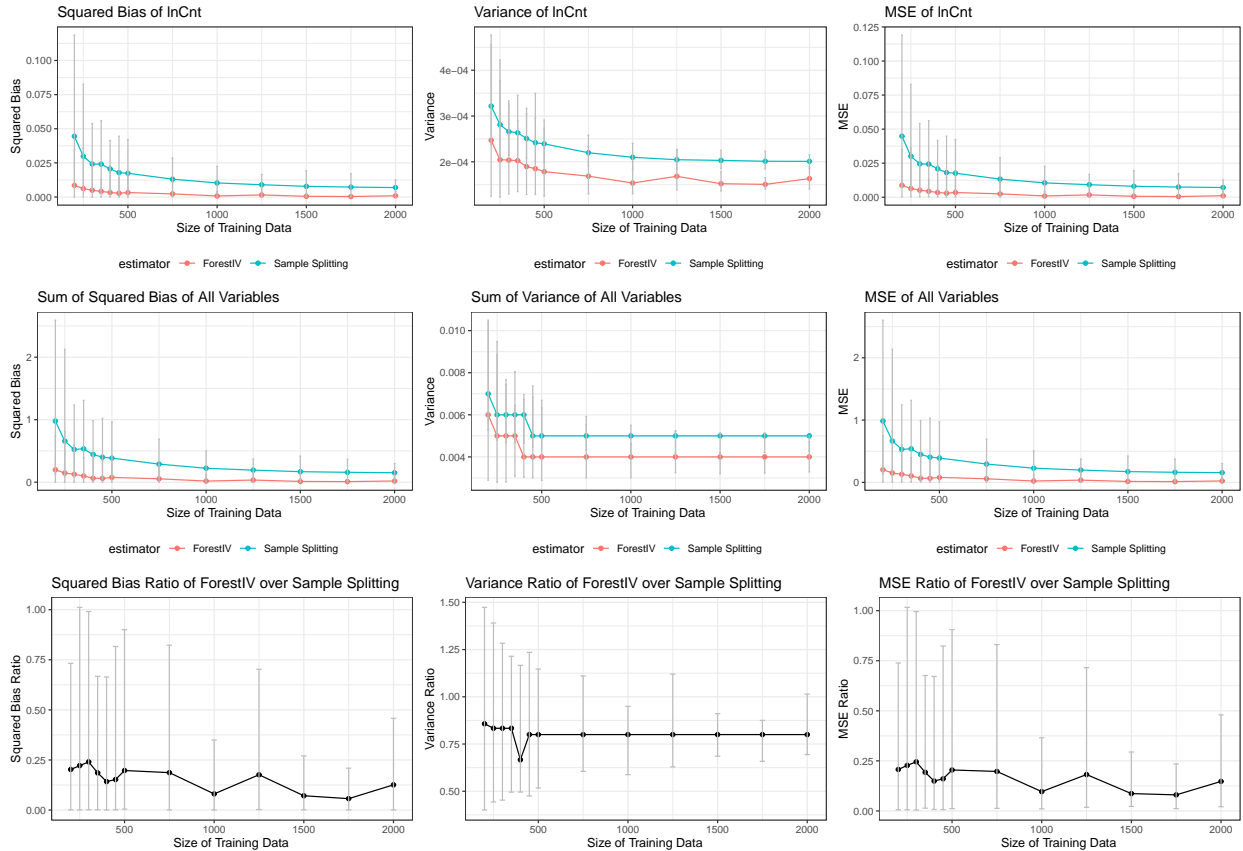}
\caption{Squared Bias, Variance, and MSE Comparisons between ForestIV and Sample Splitting. Vertical bars represent 95\% confidence intervals. The three plots in the first row show the squared bias, variance, and MSE of $lnCnt$, the variable generated by machine learning. The three plots in the second row show the squared bias, variance, and MSE of all variables in the model. The three plots in the third row show the ratios of squared bias, variance, and MSE between the two estimators, when all variables are considered.}
\end{figure}

\subsubsection{Using Subset of Trees as Endogenous Covariate and Instruments} \label{Bike_AlternativeDesign_SubsetTree}
While we propose to use a single tree in the random forest as the endogenous covariate and, under this restriction, select instruments for it, it is certainly possible to use a subset of trees in the forest to construct the endogenous covariate. The potential benefit of this approach is that aggregating over multiple trees can produce more accurate predictions, suggesting an up-front reduction of the measurement error problem. However, this approach poses a computational challenge, because there are $O(2^M)$ possible subsets of trees in a forest comprised of $M$ trees, which are generally infeasible to enumerate and process exhaustively.

Nonetheless, we conduct a preliminary exploration of this strategy. Rather than enumerating all possible subsets, we randomly sample $q\%$ of all trees and use their averaged predictions as the endogenous covariate. Then, from the remaining $(1-q)\%$ of trees, we select the proper instruments (each instrument is still the predictions from a single tree) to estimate the regression. The tree sampling and instrument selection procedure is repeated 100 times, which produces 100 tuples of endogenous and instrumental variables. Note that we only generate 100 such tuples, again because it is computationally burdensome to examine all possible tree subsets even for a fixed $q\%$ (e.g., there are $\binom{100}{30} \approx 3\times 10^{25}$ ways to sample 30 trees from a random forest of 100 trees). As with ForestIV, we conclude by selecting the tuple that minimizes the empirical MSE. We use the same basic simulation configurations under the Bike Sharing dataset, and vary $q\%$ over three levels, 30\%, 50\%, and 70\%, respectively representing low, medium, and high levels of subset aggregation. For each choice of $q\%$, we report the average estimates (along with the biased, unbiased, and ForestIV estimates) across 100 simulation runs, in Table \ref{table:BikeData_SubsetTree}.

\begin{table}[tbhp]
\label{table:BikeData_SubsetTree}
    \centering
    \footnotesize
    \begin{tabular}{c c c c c c c c}
    \hline
         & True & Biased & Unbiased & ForestIV & 30\% Sampling & 50\% Sampling & 70\% Sampling \\ 
         \hline
        Intercept       & 1.0 & 0.702 (0.063)   & 1.018 (0.204) &  0.957 (0.134) & 0.757 (0.083) & 0.718 (0.080) & 0.763 (0.064) \\
                        &     & [0.004]     & [0.511]   & [0.745]   & [0.004] & [0.000] & [0.000]\\
        $lnCnt$         & 0.5 &  0.566 (0.013)  & 0.498 (0.040) &  0.512 (0.027) & 0.552 (0.018) & 0.562 (0.016) & 0.553 (0.014)\\
                        &     & [0.002]     & [0.530]   & [0.652]  & [0.004] & [0.000] & [0.000]\\
        $Z_1$           & 2.0 & 2.000 (0.003)   & 1.999 (0.011) &  2.000 (0.003) & 2.000 (0.003) & 1.999 (0.003) & 1.999 (0.003)\\
                        &     & [0.459]     & [0.524]   & [0.977]  & [0.885] & [0.822] & [0.667]\\
        $Z_2$           & 1.0 & 1.000 (0.002)   & 0.999 (0.006) &  1.000 (0.002) & 1.000 (0.002) & 1.000 (0.002) & 1.000 (0.001)\\
                        &     & [0.480]     & [0.486]   & [0.989]  & [0.846] & [0.947] & [0.944]\\
        \hline
        Ave\_MSE        &     & 0.150 & 0 & 0.017 & 0.096 & 0.166 & 0.132 \\
    \hline
    \end{tabular}
    \caption{Using Subset of Trees: Results on Bike Sharing Data. Standard errors in parentheses. $p$-values comparing estimates with true values in square brackets. Ave\_MSE contains the average empirical MSE associated with each set of estimates across 100 simulation runs.}
\end{table}

We find that randomly selecting a subset of trees to construct the endogenous covariate exhibits limited effectiveness in bias correction, as compared to our proposed ForestIV approach. Across three different sampling ratios, the estimates achieve only small improvements over the biased estimates. Note that we have explored only a highly limited number of subsets, compared to all $O(2^{100})$ possible subsets, and we have restricted each instrument to be the predictions from a single tree (rather than the aggregated predictions from a subset of trees). Fully exploring all possible pairs of endogenous and instrumental variables (each consisting of tree subsets) is clearly infeasible. Future research might further develop this avenue of inquiry, studying potential heuristic or optimization methods to reduce the computational burden of extensive tree subset enumeration. 

\subsubsection{Averaging across Multiple Estimates} \label{Bike_AlternativeDesign_Averaging}
We now examine whether averaging across multiple estimates in the ForestIV procedure may enhance bias correction. Specifically, in Step 3 of ForestIV, rather than selecting a single set of estimates that minimizes empirical MSE, we \textit{average} across all sets of estimates that are not rejected by the Hotelling's $T^2$ test. The averaging estimates, along with the biased, unbiased, and ForestIV estimates, are reported in Tables \ref{table:BikeData_Averaging}.

\begin{table}[tbhp]
\label{table:BikeData_Averaging}
    \centering
    \begin{tabular}{c c c c c c}
    \hline
         & True & Biased & Unbiased & ForestIV & Averaging ForestIV \\ 
         \hline
        Intercept       & 1.0 & 0.702 (0.063)   & 1.018 (0.204) &  0.957 (0.134) & 0.797 (0.086)\\
                        &     & [0.004]     & [0.511]   & [0.745]   & [0.018]\\
        $lnCnt$         & 0.5 &  0.566 (0.013)  & 0.498 (0.040) &  0.512 (0.027) & 0.545 (0.018)\\
                        &     & [0.002]     & [0.530]   & [0.652]  & [0.011]\\
        $Z_1$           & 2.0 & 2.000 (0.003)   & 1.999 (0.011) &  2.000 (0.003) & 2.000 (0.003)\\
                        &     & [0.459]     & [0.524]   & [0.977]  & [0.977]\\
        $Z_2$           & 1.0 & 1.000 (0.002)   & 0.999 (0.006) &  1.000 (0.002) & 1.000 (0.002)\\
                        &     & [0.480]     & [0.486]   & [0.989]  & [0.987]\\
        \hline
        Ave\_MSE        &     & 0.150 & 0 & 0.017 & 0.101\\
    \hline
    \end{tabular}
    \caption{Averaging Estimates: Results on Bike Sharing Data. Standard errors in parentheses. $p$-values comparing estimates with true values in square brackets. Ave\_MSE contains the average empirical MSE associated with each set of estimates across 100 simulation runs.}
\end{table}

\begin{figure}[tbhp]
\label{fig:AveForestIV}
\centering
\includegraphics[width=0.5\linewidth]{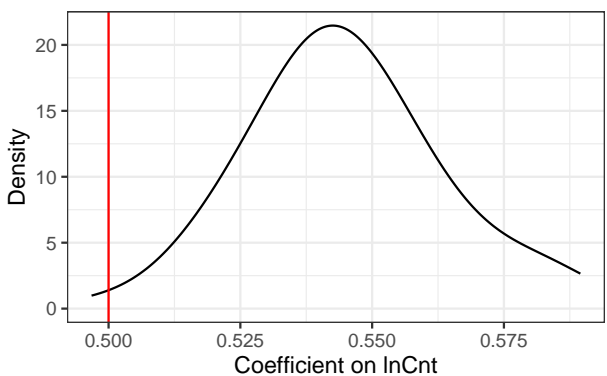}
\caption{Distribution of ForestIV Estimation on $lnCnt$ across 100 Simulation Runs}
\end{figure}

We can see that, in this case, while averaging across multiple estimates does indeed continue to mitigate the bias to some extent, it is nonetheless not as effective as our baseline ForestIV approach, which employs the single tuple that minimizes the empirical MSE. The limited utility of averaging is because the estimates are not ``symmetrically" distributed around the true values of coefficients (see Figure \ref{fig:AveForestIV} for the distribution of the averaged coefficient estimate on $lnCnt$). Rather, the distribution of values systematically deviates from the true values in a single direction. This is unsurprising, to some extent, given that 2SLS estimation is known to be biased in finite samples \citep{nagar1959bias,buse1992bias}.  Consequently, averaging produces worse correction results than picking the ``best'' tuple (the one that minimizes the empirical MSE). Note it is also possible that some tuples may ultimately fail to be rejected by the Hotelling's $T^2$ test, despite containing invalid or weak instruments (recall that a failure to reject the null hypothesis of equivalence does not imply that said null should be ``accepted''). Intuitively, by choosing to focus on the ``best'' tuple, we are applying the most stringent $p$-value threshold possible, and thus mitigating the potential that we unintentionally retain poor instruments in our final estimation. To the extent that retained instruments are of low quality, they may bias the resulting estimations towards that of the biased OLS \citep{murray2006avoiding,wooldridge2002}, which is exactly what we observe here. Nonetheless, we believe future work can investigate whether averaging can be advantageous under certain conditions, or develop new methodologies to derive more robust estimations when some instruments may be invalid or weak \citep[e.g., based on the work of ][]{conley2012plausibly}.

\clearpage

\section{Simulation Results for Binary Endogenous Covariate with Bank Marketing Data} \label{BankResults}

\subsection{Main Results} \label{Bank_MainResults}

In Table \ref{table:BankData}, we report the average coefficients and standard errors (in parentheses) across all simulation rounds for the biased regression and the unbiased regression. For ForestIV, we report the average coefficients and standard deviations of the sampling distributions across 100 simulation rounds as standard errors. We again compute the $p$-value (in square brackets) associated with a $t$-test \textit{comparing each estimated coefficient with its underlying true value}. We can see that directly using the random forest's predictions in the regression leads to 45.6\% underestimation on $Deposit$ on average. ForestIV is again effective in mitigating the estimation biases. ForestIV corrected coefficients are not significantly different from their true values, and ForestIV estimates achieve smaller standard errors (i.e., higher estimation precision) than the unbiased estimates.

\begin{table}[tbhp]
\label{table:BankData}
    \centering
    \begin{tabular}{c c c c c}
    \hline
         & True & Biased & Unbiased & ForestIV \\ 
         \hline
        Intercept       & 1.0 & 1.042 (0.010)   & 1.013 (0.055) &  0.995 (0.017) \\
                        &     & [0.007]     & [0.454]   & [0.751]   \\
        $Deposit$       & 0.5 & 0.272 (0.040)   & 0.519 (0.140) &  0.516 (0.116) \\
                        &     & [0.000]     & [0.426]   & [0.888]   \\
        $Z_1$           & 2.0 & 2.002 (0.017)   & 1.999 (0.090) &  2.001 (0.016) \\
                        &     & [0.504]     & [0.581]   & [0.920]   \\
        $Z_2$           & 1.0 & 1.001 (0.010)   & 1.004 (0.052) &  1.000 (0.010) \\
                        &     & [0.475]     & [0.454]   & [0.941]   \\
        \hline
        Ave\_MSE        &     & 0.108 & 0 & 0.023 \\
    \hline
    \end{tabular}
    \caption{ForestIV Results on Bank Marketing Data. Standard errors in parentheses. $p$-values comparing estimates with true values in square brackets. Ave\_MSE contains the average empirical MSE associated with each set of estimates across 100 simulation runs.}
\end{table}

\subsection{Sensitivity Analyses} \label{Bank_Sensitivity}
We conduct additional analyses to understand the performance of ForestIV for the case of a binary endogenous covariate with misclassification. We repeat all the sensitivity analyses that have been carried out for the continuous case and report the results in this subsection. 

The sensitivity analyses again consist of three parts. Respectively, we examine the performance of ForestIV estimations with respect to (1) the size of unlabeled data, (2) the amount of noise in econometric data, and (3) the predictive performance of the random forest (operationalized by changing the total number of trees). For all simulations in this section, we use the same "Bank Marketing" dataset \citep{moro2014data} and, unless otherwise noted, the simulation setups are the same as in the main manuscript.

\subsubsection{Size of Unlabeled Data} \label{Bank_Sensitivity_Size}

We repeat the simulation with 8 different sizes of unlabeled data, respectively 100, 500, 1,000, 10,000, 20,000, 30,000, 40,000, and 43,211 (i.e., all remaining instances). In Figure \ref{fig:BinaryUnlabelSize}, we plot the 95\% confidence interval and point estimate of $Deposit$ with respect to different sizes of unlabeled data. The 95\% confidence interval of ForestIV estimate is constructed as the range between the 2.5th percentile and 97.5th percentile of the sampling distribution.

\begin{figure}[tbhp]
\label{fig:BinaryUnlabelSize}
\centering
\includegraphics[width=0.8\linewidth]{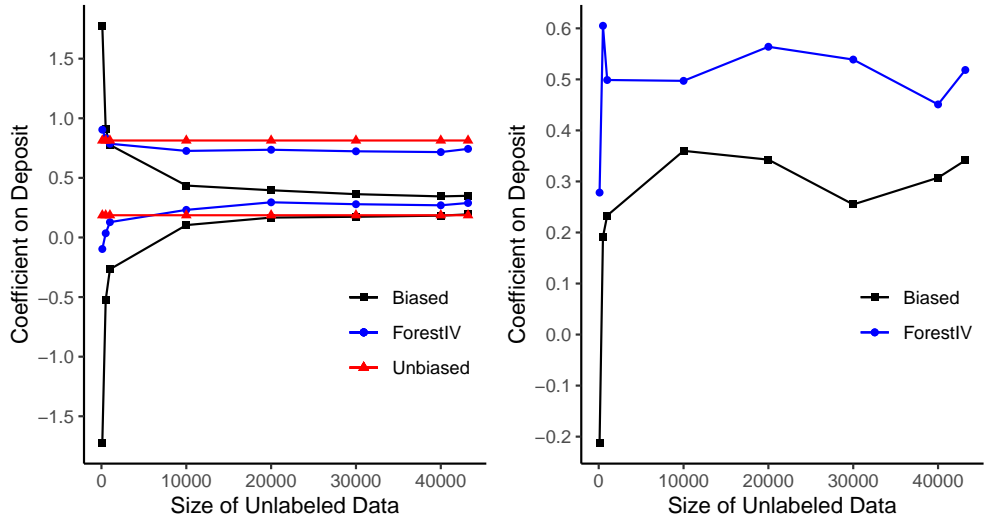}
\caption{Confidence Interval and Point Estimate on $Deposit$ with Different Sizes of Unlabeled Data. The left panel shows the 95\% confidence interval of the estimation on $Deposit$ with different sizes of unlabeled data. The right panel shows the point estimate on $Deposit$ with different sizes of unlabeled data based on a single simulation run. The true coefficient on $Deposit$ is 0.5.}
\end{figure}

We observe the same patterns as in the case of a continuous endogenous covariate. First, the biased estimate stays biased regardless of the size of unlabeled data. On the other hand, ForestIV produces converging estimation on $Deposit$ when sufficient unlabeled data is available. Having more unlabeled data can further shrink the confidence interval on $Deposit$, suggesting improvement in estimation precision. Meanwhile, the point estimates from a single simulation run, despite some local fluctuations, support the same observations.

\subsubsection{Noise in Econometric Data} \label{Bank_Sensitivity_Noise}
We introduce higher noise in econometric data by increasing $\sigma_{\varepsilon}$ from 2 to 5. The simulation results are summarized in the following Table \ref{table:BankNoise}.

\begin{table}[tbhp]
\label{table:BankNoise}
    \centering
    \begin{tabular}{c c c c c}
    \hline
         & True & Biased & Unbiased & ForestIV \\ 
         \hline
        Intercept       & 1.0 & 1.041 (0.025)    & 1.030 (0.138)  &  0.994 (0.038) \\
                        &     & [0.229]     & [0.483]   & [0.870]   \\
        $Deposit$       & 0.5 &  0.277 (0.099)   & 0.483 (0.346)  &  0.513 (0.206) \\
                        &     & [0.115]     & [0.433]   & [0.950]   \\
        $Z_1$           & 2.0 & 1.997 (0.042)    & 1.999 (0.224)  &  1.997 (0.040) \\
                        &     & [0.524]     & [0.558]   & [0.942]   \\
        $Z_2$           & 1.0 & 1.002 (0.024)    & 1.013 (0.129)  &  1.002 (0.024) \\
                        &     & [0.480]     & [0.491]   & [0.928]   \\
        \hline
        Ave\_MSE        &     & 0.317 & 0 & 0.193 \\
    \hline
    \end{tabular}
    \caption{ForestIV Results on Bank Marketing Data with $\sigma_{\varepsilon}=5$. Standard errors in parentheses. $p$-values comparing estimates with true values in square brackets. Ave\_MSE contains the average empirical MSE associated with each set of estimates across 100 simulation runs.}
\end{table}

We observe that, while the average point estimate on $Deposit$ is almost the same as when $\sigma_{\varepsilon}=2$, the associated average standard error becomes larger, as a direct result of the increased noise in data. Finally, ForestIV estimates again maintain clear increase in estimation precision over the unbiased estimates.

\subsubsection*{Predictive Performance of Random Forest} \label{Bank_Sensitivity_NTree}
We repeat the simulation with 5 different choices of the total number of trees in the random forest, $M$: 25, 50, 100, 150, and 200. We report the average ForestIV estimates across 100 simulation rounds for each choice of $M$ in the following Table \ref{table:BankNTree}. Predictive performance of the random forest is measured as the average accuracy. In addition, we report the average Hotelling $T^2$-statistic that compares $\widehat{\boldsymbol{\beta}}_{label}$ with ForestIV estimates in the last row of the table.

\begin{table}[tbhp]
\label{table:BankNTree}
    \centering
    \small
    \begin{tabular}{c c c c c c c}
    \hline
        &  & $M=25$ & $M=50$ & $M=100$ & $M=150$ & $M=200$ \\ 
        \hline
        Accuracy  &  & 89.67\% & 89.73\% & 89.86\% & 89.85\% & 89.89\%  \\
        \hline
        & True & & & & \\
        \hline
            Intercept & 1.0 & 0.993 (0.014) & 0.996 (0.015)  & 0.995 (0.017) & 0.995 (0.016) & 0.995 (0.014)    \\
                        &   & [0.656]  & [0.801]   & [0.751]  & [0.738]  & [0.753]     \\
            $Deposit$ & 0.5 & 0.521 (0.089) & 0.506 (0.099)  & 0.516 (0.116) & 0.509 (0.116) & 0.505 (0.123)    \\
                        &   & [0.810]  & [0.952]   & [0.888]  & [0.934]  & [0.966]     \\
            $Z_1$     & 2.0 & 2.001 (0.016) & 1.996 (0.020)  & 2.001 (0.017) & 2.002 (0.017) & 2.001 (0.017)    \\
                        &   & [0.945]  & [0.856]   & [0.920]  & [0.908]  & [0.926]     \\
            $Z_2$     & 1.0 & 1.001 (0.010) & 1.002 (0.009)  & 1.001 (0.010) & 0.999 (0.008) & 0.999 (0.010)    \\
                        &   & [0.905]  & [0.798]   & [0.941]  & [0.894]  & [0.933]     \\
            \hline
            Ave\_MSE  &  & 0.026 & 0.022 & 0.023 & 0.023 & 0.022 \\
        \hline
        Ave. Hotelling $T^2$ stats & & 3.0614 & 3.0786 & 2.9300 & 2.9404 & 2.6267 \\
        \hline
    \end{tabular}
    \caption{ForestIV Estimates for Different Choices of $M$. Standard errors in parentheses. $p$-values comparing estimates with true values in square brackets. Ave\_MSE contains the average empirical MSE associated with each set of estimates across 100 simulation runs.}
\end{table}

In this particular example, different choices of $M$ are associated with very similar accuracy on average. Nonetheless, we observe consistent patterns as in the continuous case. Specifically, taking the average point estimates as indications of ForestIV correction performance, $M=200$ produces relatively better corrections on this particular dataset. Having too few trees ($M=25$) hurts both the predictive accuracy and the average correction performance. Note that the correction performance again aligns with the average Hotelling $T^2$ statistic, with smaller statistic indicating better correction performance. 

\section{Benchmarking ForestIV with Three Alternative Approaches} \label{Benchmarking}

In the econometrics literature, researchers have proposed alternative bias correction methods to correct for estimation biases caused by measurement error, such as method-of-moments, likelihood-based approaches, deconvolution, regression calibration, and simulation-extrapolation \citep[we refer to][for detailed discussions of these methods]{grace2016statistical}. One common theme underlying these bias correction methods is that they all rely on information about the statistical properties of the measurement error, such as its distribution or moments. While this information is typically unobservable in most scenarios, it is available when the measurement error comes from the prediction error in a machine learning model \citep{yang2018mind}. This is because the process of building a predictive machine learning model typically involves evaluation of its performance on hold-out data, a process that yields performance metrics that quantify the degree of prediction error (and hence measurement error).

In this section, we focus on benchmarking ForestIV against three alternative bias correction approaches: (i) simulation-extrapolation (SIMEX), (ii) Latent Instrumental Variables (LatentIV), and (iii) regression adjustment for nonparametrically generated regressors. We provide a brief description of each method below.

SIMEX \citep{Cook1994} is a general simulation-based approach that can be applied to address measurement error in any econometric model. It directly leverages error magnitude information (e.g., error variance, based on the performance of machine learning predictions) to create a set of bootstrap samples of the observed data, artificially introducing larger measurement error with each subsequent re-sampling. The algorithm then estimates a corresponding set of coefficients with respect to different degrees of measurement errors, fits a parametric function to the pairs of coefficient-error observations, and finally extrapolates the coefficient estimates to the case where measurement error is zero. While the original SIMEX method was proposed to address measurement error in a continuous covariate, researchers later developed a variation named Misclassification-SIMEX (MC-SIMEX) to deal with misclassification in a discrete covariate \citep{Kuchenhoff2006,Kuchenhoff2007}. Flexibility is a key advantage of SIMEX. It only requires aggregated information about the measurement error, e.g., error variance for a continuous covariate or recall rates for a binary covariate, which can be estimated from the testing data. Moreover, SIMEX can be applied to a large set of econometric models with a standard procedure, and does not need to be explicitly re-formulated for each econometric model specification. The effectiveness of SIMEX for correcting biases due to measurement error has been comprehensively documented by \cite{yang2018mind}, for a variety of estimators and econometric specifications. Several studies have also discussed the application of SIMEX to more complicated measurement error problems, such as GLM with error-prone fixed and random effects \citep{Wang1998}, as well as nonparametric models \citep{Carroll1999}. We benchmark ForestIV respectively against SIMEX (a continuous covariate with measurement error) and against MC-SIMEX (a binary covariate with misclassification) on two different datasets. 

The LatentIV approach is proposed by \cite{ebbes2005solving,ebbes2009frugal} to address endogeneity in linear regression models without the use of observable instruments. LatentIV achieves identification by modeling a latent (i.e., unobserved) discrete instrument to account for the correlation between the endogenous covariate and the regression model's error term. We benchmark ForestIV against LatentIV using the Bike Sharing dataset.

Finally, we consider a particular regression adjustment approach proposed in the generated regressors literature. \cite{meng2016linear} studies the estimation of a linear regression where one covariate is not directly observed, but can be approximated based on a sample of relevant data (e.g., a nation's income inequality measure is unobserved but can be estimated from a sample of individual incomes in the nation). Because sample-based approximation of the unobserved covariate can be noisy, the linear regression suffers from a measurement error issue. The authors assume that the functional relationship underlying the generated regressor is inconsistent across observations, and thus the generated regressor can be viewed as nonparametric. \cite{meng2016linear} derive an explicit formula for the magnitude of bias, as a function of the first two \textit{moments} of the measurement error (e.g., mean and variance), which allows the biased estimates to be adjusted accordingly. In our setting, the moment statistics of measurement/prediction error can be readily estimated using the testing data, where prediction errors are directly observed. We are therefore able to implement \cite{meng2016linear}'s proposed adjustment approach in our case as well, and benchmark the performance against ForestIV using the same Bike Sharing dataset.

\subsection{Benchmarking with SIMEX: Continuous Case} \label{Benchmarking_SIMEX_Continuous}

The benchmarking simulation experiments are set up as follows. We use the Boston Housing Dataset \citep{harrison1978hedonic} to build a numeric prediction model. The Boston Housing Dataset contains 14 attributes describing housing located in 506 census tracts in the Boston area. This data has frequently been used as a benchmarking dataset for numeric prediction algorithms \citep[e.g.,][]{rose1998deterministic,lim2000comparison}. We build a random forest, containing 100 trees, to predict the median value of houses in a census tract (\textit{MEDV}), based on 13 different features, e.g., average number of rooms, property tax. We take a random sample of 200 tracts as $D_{train}$ and hold out an additional random sample of 50 tracts to serve as $D_{test}$. The random forest model is then used to predict the median home values for the remaining 256 tracts. We denote the aggregate predictions from the random forest as $\widehat{MEDV}$, and predictions from each individual tree $i \in \{1,\ldots,100\}$ as $\widehat{MEDV}_i$.

For the econometric model to be estimated, we simulate an artificial dataset ($N=506$) where \textit{MEDV} is an independent covariate. We incorporate two other control variables, $Z_1$ and $Z_2$, into the model. Specifically, $Z_1 \sim Bernoulli(0.6)$ is a dummy variable that takes on a value of 1 with 60\% probability, and $Z_2 \sim N(0,1)$ is a normally distributed continuous variable. We also simulate an error term, $\varepsilon \sim N(0,0.1^2)$. The dependent variable is generated as a linear combination of all variables, $Y=1+0.5MEDV+2Z_1+Z_2+\varepsilon$. 

As before, the simulation is repeated for 100 rounds, and we report the average coefficients and standard errors across all rounds. For the sake of brevity, we refer to \cite{yang2018mind} for technical details on applying the SIMEX approach to obtain corrected coefficients that account for measurement error in $\widehat{MEDV}$. The results are summarized in Table \ref{table:SIMEX}. In addition, in Figure \ref{fig:SIMEXDist} we plot the distributions of the coefficient associated with $MEDV$ across all simulation runs.

\begin{table}[tbhp]
\label{table:SIMEX}
    \centering
    \begin{tabular}{c c c c c}
    \hline
         & True & Biased & SIMEX & ForestIV \\ 
         \hline
        Intercept       & 1.0 & -0.745 (0.402)   & 1.154 (0.382)  &  0.999 (0.245) \\
                        &     & [0.027]     & [0.689]   & [0.997]   \\
        $MEDV$          & 0.5 &  0.579 (0.016)   & 0.494 (0.060)  &  0.500 (0.010) \\
                        &     & [0.024]     & [0.707]   & [0.999]   \\
        $Z_1$           & 2.0 & 1.961 (0.236)    & 1.958 (0.260)  &  1.986 (0.171) \\
                        &     & [0.458]     & [0.871]   & [0.935]   \\
        $Z_2$           & 1.0 & 0.989 (0.115)    & 0.991 (0.126)  &  1.001 (0.086) \\
                        &     & [0.461]     & [0.943]   & [0.991]   \\
        \hline
        Ave\_MSE        &     & 3.284 & 0.259 & 0.097 \\
    \hline
    \end{tabular}
    \caption{Estimation Results using SIMEX and ForestIV. Standard errors in parentheses. $p$-values comparing estimates with true values in square brackets. Ave\_MSE contains the average empirical MSE associated with each set of estimates across 100 simulation runs.}
\end{table}

As shown in Table \ref{table:SIMEX}, directly using the $\widehat{MEDV}$ predicted by random forest in the regression model results in significant biases. The coefficient on $MEDV$ is \textit{overestimated} by about 15.8\% on average. Moreover, across simulation runs, the coefficient on $MEDV$ may be biased both upward and downward (see Figure \ref{fig:SIMEXDist}). Coefficients associated with the other covariates, $Z_1$ and $Z_2$, are also biased to different degrees and in different directions, although the average biases are not large. Second, our ForestIV approach effectively mitigates estimation biases. It almost fully recovered the unbiased coefficient on all regressors. Third, compared with SIMEX, ForestIV produced point estimates that are closer to the true values. In addition, as shown in Figure \ref{fig:SIMEXDist}, the distribution of the coefficient on $MEDV$ recovered by ForestIV exhibits smaller standard deviation than that recovered by SIMEX, indicating greater stability as well.

\begin{figure}[tbhp]
\label{fig:SIMEXDist}
\centering
\includegraphics[width=0.6\linewidth]{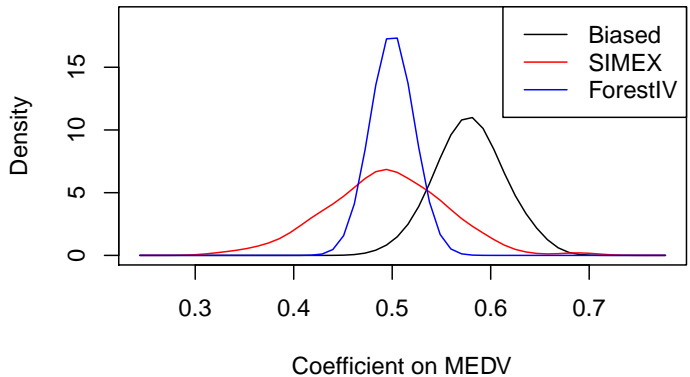}
\caption{Distributions of Biased and Corrected Coefficients on $MEDV$ across Simulation Runs }
\end{figure}

\subsection{Benchmarking with SIMEX: Binary Case} \label{Benchmarking_SIMEX_Binary}

We use the Wisconsin Breast Cancer Dataset \citep{wolberg1990multisurface}. The dataset contains 683 clinical cases of breast cancer diagnoses. We build a random forest of 100 trees to predict the cancer outcome (benign or malignant) based on 9 biological features. Similar to the previous simulation setup, we take a random sample of 200 cases as $D_{train}$, 50 cases as $D_{test}$, and the rest 433 cases as $D_{unlabel}$. Next, we simulate an econometric model: $Y=1+0.5Cancer+2Z_1+Z_2+\varepsilon$ ($N=683$), where $Z_1 \sim Uniform[-1,1]$, $Z_2 \sim N(0,1^2)$, and $\varepsilon \sim N(0,0.1^2)$. 

As before, the simulation is repeated for 100 rounds, and we report the average coefficients and standard errors across all rounds. We again refer to \cite{yang2018mind} for technical details on applying the MC-SIMEX approach. The results are summarized in Table \ref{table:SIMEX_binary}. In Figure \ref{fig:SIMEXDist_binary}, we plot the distributions of the coefficient associated with $Cancer$ across all simulation runs.

\begin{table}[tbhp]
\label{table:SIMEX_binary}
    \centering
    \begin{tabular}{c c c c c}
    \hline
         & True & Biased & MC-SIMEX & ForestIV \\ 
         \hline
        Intercept       & 1.0 & 1.013 (0.008)  & 1.005 (0.008)  &  1.004 (0.008) \\
                        &     & [0.234]   & [0.405]   & [0.505]   \\
        $Cancer$        & 0.5 & 0.463 (0.014)  & 0.519 (0.059)  &  0.496 (0.012) \\
                        &     & [0.044]   & [0.264]   & [0.739]   \\
        $Z_1$           & 2.0 & 2.001 (0.011)  & 2.001 (0.012)  &  2.000 (0.009) \\
                        &     & [0.492]   & [0.934]   & [0.999]   \\
        $Z_2$           & 1.0 & 0.999 (0.006)  & 0.999 (0.007)  &  0.999 (0.006) \\
                        &     & [0.506]   & [0.868]   & [0.886]   \\
        \hline
        Ave\_MSE        &     & 0.0028 & 0.0051 & 0.0009 \\
    \hline
    \end{tabular}
    \caption{Estimation Results using MC-SIMEX and ForestIV. Standard errors in parentheses. $p$-values comparing estimates with true values in square brackets. Ave\_MSE contains the average empirical MSE associated with each set of estimates across 100 simulation runs.}
\end{table}

Based on Table \ref{table:SIMEX_binary}, directly using random forest predictions $\widehat{Cancer}$ in the regression causes about 7.4\% underestimation on its coefficient. Both ForestIV and MC-SIMEX are able to mitigate the estimation biases, and ForestIV outperforms MC-SIMEX by producing average point estimates that are even closer to the underlying true values. In addition, Figure \ref{fig:SIMEXDist_binary} shows that ForestIV estimation on $Cancer$ again has a substantially smaller standard deviation than that from MC-SIMEX, indicating greater stability.

\begin{figure}[tbhp]
\label{fig:SIMEXDist_binary}
\centering
\includegraphics[width=0.6\linewidth]{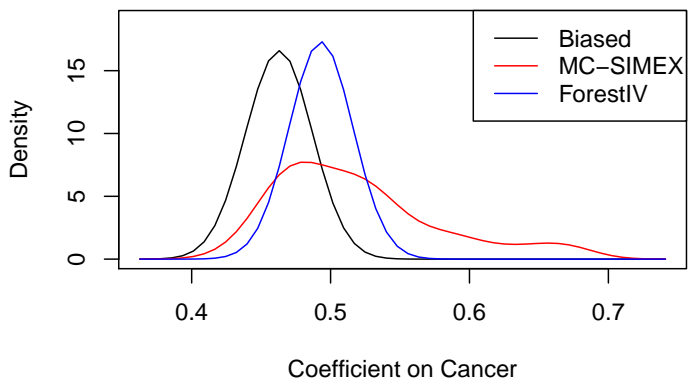}
\caption{Distributions of Biased and Corrected Coefficients on $Cancer$ across Simulation Runs}
\end{figure}

\subsection{Limitation of SIMEX and Robustness of ForestIV} \label{Benchmarking_SIMEX_Fail}

As part of this benchmarking exercise, we have also identified a previously undocumented yet common scenario in which SIMEX yields systematically poor results. In particular, when measurement error is correlated with another precisely-measured covariate in the econometric model, the SIMEX algorithm actually \textit{introduces} bias into the coefficient associated with the precisely-measured covariate. Importantly, the ForestIV approach does not suffer from this problem, and is thus likely to be preferable to SIMEX in such cases. 

While SIMEX is a generally reliable approach to dealing with measurement error and has been demonstrated to perform well in various types of econometric models \citep{yang2018mind}, we have identified an important limitation that causes SIMEX to produce problematic correction results under certain conditions.

Again, consider the regression model $Y \sim \widehat{X}\beta_X + \boldsymbol{Z\beta_Z}$, where $\widehat{X}$ is measured with additive error, i.e., $\widehat{X} = X+e$. Suppose that the measurement error component, $e$, is correlated with one of the precisely-measured control variables in the model, e.g., $\exists Z^* \in \boldsymbol{Z}, Cov(e,Z^*) \neq 0$. The SIMEX-corrected coefficient on $Z^*$ under this setup can be even more biased than it is in the absence of correction. 

This represents a realistic scenario that can arise when combining machine learning and econometric modeling. As an example, consider a credit scoring prediction model. An older person's credit score might be easier to predict (with less error) than a younger person's, as more data on historical consumption and repayment will be available in the former case. The error in credit score prediction may thus be correlated with age, which would plausibly appear as a control variable in many econometric models. Another example that has attracted heated discussion is Propublica's critique of the COMPAS risk tool \citep{angwin2016machine}, a predictive model used in the U.S. criminal justice system to assess a defendant's risk of recidivism. The prediction errors that COMPAS produces are correlated with race; the algorithm has been shown to be more inaccurate for (i.e., biased against) African-American defendants than for white defendants. Similar racial disparity in predictive performance has been documented in gender classification \citep{buolamwini2018gender}, where researchers found that image classifiers are less accurate for darker-skinned people.

In the following theorem, we prove this limitation of SIMEX in the simple case of a linear regression with additive independent (i.e., classical) measurement error. Formally, consider a population regression equation: $Y=\beta_0+\beta_1X_1+\beta_2X_2+\varepsilon$, where $X_1$ is measured with additive independent error, i.e., $\widehat{X_1}=X_1+e$, and $Cov(X_1,e)=0$. Suppose that $X_2$ is correlated with the measurement error, i.e., $Cov(X_2,e)=\sigma_{2e} \neq 0$. We still assume the model error term, $\varepsilon$, is exogenous, i.e., $Cov(X_1,\varepsilon)=Cov(X_2,\varepsilon)=Cov(e,\varepsilon)=0$. For simplicity, we also assume $Cov(X_1,X_2)=0$.\footnote{Relaxing this assumption makes the derivation more elaborate without changing the underlying mechanism. When $Cov(X_1,X_2) \neq 0$, our statement is still true under slightly more strict conditions} Further denote $Var(X_1)=\sigma_1^2$, $Var(X_2)=\sigma_2^2$, and $Var(e) = \sigma_e^2$. Finally, denote the true coefficient, the biased (i.e., uncorrected) coefficient, and the SIMEX-corrected coefficient on $X_2$ as $\beta_2$, $\widehat{\beta_2}$, and $\widehat{\beta_2}^{SIMEX}$ respectively.

\textbf{THEOREM}. \textit{$|\widehat{\beta_2}^{SIMEX} - \beta_2| > |\widehat{\beta_2} - \beta_2|$, i.e., the SIMEX-corrected coefficient will become even more biased than in the absence of correction, if and only if $\left|\sigma_2^2 \sigma_1^2 - \sigma_{2e}^2\right| < \left|\sigma_2^2 (\sigma_1^2 + \sigma_e^2) - \sigma_{2e}^2\right|$.}

\begin{proof}
Consider OLS estimation of the biased regression of $Y$ on $\{\widehat{X_1},X_2\}$. Using the regression anatomy method \citep{angrist2008mostly}, the estimated coefficient associated with $X_2$ is $\widehat{\beta_2}=\frac{Cov(Y,\widetilde{X_2})}{Var(\widetilde{X_2})}$, where $\widetilde{X_2}$ is the residual from regressing $X_2$ on $\widehat{X_1}$, i.e., $X_2=r_0+r_1\widehat{X_1}+\widetilde{X_2}$. We know that $r_1=\frac{Cov(X_2,\widehat{X_1})}{Var(\widehat{X_1})}$, and therefore $\widetilde{X_2}=X_2-r_0-\frac{Cov(X_2,\widehat{X_1})}{Var(\widehat{X_1})} \widehat{X_1}$. It follows that $\widehat{\beta_2}=\frac{Cov(Y,\widetilde{X_2})}{Var(\widetilde{X_2})} = \frac{Cov\left(\beta_0+\beta_1X_1+\beta_2X_2+\varepsilon, \quad X_2-r_0-\frac{Cov(X_2,\widehat{X_1})}{Var(\widehat{X_1})} \widehat{X_1}\right)}{Var\left(X_2-r_0-\frac{Cov(X_2,\widehat{X_1})}{Var(\widehat{X_1})} \widehat{X_1}\right)}$. Dropping the constant terms $\beta_0,r_0$ and the exogenous $\varepsilon$, the above expression simplifies to $\widehat{\beta_2}= \frac{Cov\left(\beta_1X_1+\beta_2X_2, \quad X_2-\frac{Cov(X_2,\widehat{X_1})}{Var(\widehat{X_1})} \widehat{X_1}\right)}{Var\left(X_2-\frac{Cov(X_2,\widehat{X_1})}{Var(\widehat{X_1})} \widehat{X_1}\right)}=\frac{-\beta_1 \frac{Cov(X_2,\widehat{X_1})}{Var(\widehat{X_1})} Cov(X_1, \widehat{X_1}) + \beta_2\sigma_2^2 - \beta_2 \frac{Cov^2(X_2,\widehat{X_1})}{Var(\widehat{X_1})}}{\sigma_2^2 - \frac{Cov^2(X_2,\widehat{X_1})}{Var(\widehat{X_1})}}=\beta_2 - \beta_1 \frac{Cov(X_2, \widehat{X_1}) Cov(X_1,\widehat{X_1})}{\sigma_2^2 Var(\widehat{X_1}) - Cov^2(X_2, \widehat{X_1})}$. Given that $Var(\widehat{X_1})=\sigma_1^2 + \sigma_e^2$, $Cov(X_1,\widehat{X_1})=\sigma_1^2$, $Cov(X_2, \widehat{X_1})=\sigma_{2e}$, we have $\widehat{\beta_2} = \beta_2 - \beta_1 \frac{\sigma_{2e}\sigma_1^2}{\sigma_2^2 (\sigma_1^2 + \sigma_e^2) - \sigma_{2e}^2}$. This implies that the absolute bias in the coefficient associated with $X_2$, $|\widehat{\beta_2} - \beta_2|=\left|\beta_1 \frac{\sigma_{2e}\sigma_1^2}{\sigma_2^2 (\sigma_1^2 + \sigma_e^2) - \sigma_{2e}^2} \right|$.

Now consider the SIMEX correction procedure \citep{Cook1994}. In the simulation step, SIMEX creates $\widehat{X_1}^{(\lambda)} = \widehat{X_1} + \sqrt{\lambda}z = X_1+e+\sqrt{\lambda}z$, where $z \sim N(0, \sigma_e^2)$, and thereby introduces more measurement error. Note that $Var(\widehat{X_1}^{(\lambda)}) = \sigma_1^2+(1+\lambda)\sigma_e^2$, and $Cov(X_2,\widehat{X_1}^{(\lambda)})=\sigma_{2e}$ because $z$ is independently generated. Following the same derivation above, we know that regressing $Y$ on $\{\widehat{X_1}^{(\lambda)},X_2\}$, we would have $\widehat{\beta_2}^{(\lambda)} = \beta_2 - \beta_1 \frac{\sigma_{2e}\sigma_1^2}{\sigma_2^2 \left(\sigma_1^2 + (1+\lambda)\sigma_e^2\right) - \sigma_{2e}^2}$, or equivalently, $|\widehat{\beta_2}^{(\lambda)} - \beta_2| = \left|\beta_1 \frac{\sigma_{2e}\sigma_1^2}{\sigma_2^2 \left(\sigma_1^2 + (1+\lambda)\sigma_e^2\right) - \sigma_{2e}^2}\right|$. In the extrapolation step, SIMEX estimates $\widehat{\beta_2}^{(-1)}$, i.e., the coefficient that would be obtained had there been no measurement error (note that $\widehat{\beta_2}^{(-1)} \equiv \widehat{\beta_2}^{SIMEX}$). Accordingly, $|\widehat{\beta_2}^{(-1)} - \beta_2| = \left|\beta_1 \frac{\sigma_{2e}\sigma_1^2}{\sigma_2^2 \sigma_1^2 - \sigma_{2e}^2}\right|$. 

Finally, we compare $|\widehat{\beta_2}^{(-1)} - \beta_2|=\left|\beta_1 \frac{\sigma_{2e}\sigma_1^2}{\sigma_2^2 \sigma_1^2 - \sigma_{2e}^2}\right|$ and $|\widehat{\beta_2} - \beta_2|=\left|\beta_1 \frac{\sigma_{2e}\sigma_1^2}{\sigma_2^2 (\sigma_1^2 + \sigma_e^2) - \sigma_{2e}^2} \right|$, or equivalently, compare $\frac{1}{\left|\sigma_2^2 \sigma_1^2 - \sigma_{2e}^2\right|}$ and $\frac{1}{\left|\sigma_2^2 (\sigma_1^2 + \sigma_e^2) - \sigma_{2e}^2\right|}$. Therefore, the condition $\left|\sigma_2^2 \sigma_1^2 - \sigma_{2e}^2\right| < \left|\sigma_2^2 (\sigma_1^2 + \sigma_e^2) - \sigma_{2e}^2\right|$ implies that $\frac{1}{\left|\sigma_2^2 \sigma_1^2 - \sigma_{2e}^2\right|} > \frac{1}{\left|\sigma_2^2 (\sigma_1^2 + \sigma_e^2) - \sigma_{2e}^2\right|} \Rightarrow |\widehat{\beta_2}^{(-1)} - \beta_2| > |\widehat{\beta_2} - \beta_2|$, i.e., the SIMEX corrected coefficient on $X_2$ becomes even more biased than it would be in the absence of correction.
\end{proof}

\textbf{Remark.} We note that the condition $\left|\sigma_2^2 \sigma_1^2 - \sigma_{2e}^2\right| < \left|\sigma_2^2 (\sigma_1^2 + \sigma_e^2) - \sigma_{2e}^2\right|$ is quite easily satisfied. The right-hand-side of the inequality is equivalent to $\left|\sigma_2^2 \sigma_1^2 - \sigma_{2e}^2 + \sigma_2^2\sigma_e^2\right|$. Therefore, a sufficient (but not necessary) condition for the inequality to hold is $\sigma_2^2 \sigma_1^2 - \sigma_{2e}^2 \geq 0$. Because $\sigma_2^2 \sigma_1^2 - \sigma_{2e}^2 \geq 0 \Leftrightarrow \frac{\sigma_1^2}{\sigma_e^2} \geq \frac{\sigma_{2e}^2}{\sigma_2^2 \sigma_e^2} = \rho_{2e}^2$, we can see that the inequality always holds if $\sigma_1^2 \geq \sigma_e^2$, which means that the variance of measurement error is no larger than the variance of true covariate. This is typically true unless the measurement error is exceedingly large.

Essentially, the SIMEX approach relies on the implicit assumption that degree of measurement error is \textit{positively} related to degree of bias. This assumption is violated when a precisely-measured covariate is correlated with the measurement error, as we have shown in the above theorem. As a result, unless special modifications are made to the SIMEX procedure, it produces incorrect results on the precisely-measured covariate. However, our ForestIV approach does not suffer from this issue, because it does not rely on the same implicit assumption. Instead, the identified instruments should mitigate estimation bias on the error-prone covariate, without introducing additional bias to the estimates of precisely-measured covariates.

We empirically demonstrate this limitation of SIMEX and the robustness of ForestIV via another set of simulations, using the Boston Housing dataset. The basic setup is the same as before, with one change: one control variable, $Z_2 \sim N(0,1)$, is generated so that it is correlated with the prediction error in $\widehat{MEDV}$ (the aggregate predictions from the random forest model), with a correlation coefficient of 0.3. We repeat the same set of regression analyses and report the results in Table \ref{table:SIMEXFail}. 

\begin{table}[tbhp]
\label{table:SIMEXFail}
    \centering
    \begin{tabular}{c c c c c}
    \hline
         & True & Biased & SIMEX & ForestIV \\ 
         \hline
        Intercept       & 1.0 & -0.962 (0.399)   & 0.688 (0.385)  &  0.840 (0.384) \\
                        &     & [0.002]     & [0.418]   & [0.677]   \\
        $MEDV$          & 0.5 & 0.588 (0.016)    & 0.513 (0.060)  &  0.506 (0.015) \\
                        &     & [0.004]     & [0.416]   & [0.689]   \\
        $Z_1$           & 2.0 & 2.017 (0.232)    & 2.011 (0.258)  &  2.017 (0.219) \\
                        &     & [0.458]     & [0.966]   & [0.938]   \\
        $Z_2$           & 1.0 & 0.483 (0.114)    & 0.401 (0.155)  &  0.837 (0.126) \\
                        &     & [0.000]     & [0.000]   & [0.196]   \\
        \hline
        Ave\_MSE        &     & 4.351 & 0.699 & 0.264 \\
    \hline
    \end{tabular}
    \caption{Estimation Results using SIMEX and ForestIV with $\rho_{Z_2, e} = 0.3$. Standard errors in parentheses. $p$-values comparing estimates with true values in square brackets. Ave\_MSE contains the average empirical MSE associated with each set of estimates across 100 simulation runs.}
\end{table}

Based on Table \ref{table:SIMEXFail}, we can see that, when the measurement error is correlated with the precisely-measured control variable $Z_2$, the coefficient on $Z_2$ is drastically underestimated by 51.7\%, and, consistent with our theoretical result, the SIMEX-corrected coefficient becomes even more biased. However, our proposed ForestIV approach is able to not only mitigate the bias on $MEDV$, but also notably correct the coefficient on $Z_2$ in the right direction. This suggests that ForestIV is more robust than SIMEX when the measurement error is correlated with certain control variables in the model.

\subsection{Benchmarking with LatentIV} \label{Benchmarking_LatentIV}
Next, we benchmark our proposed ForestIV with the LatentIV approach on the Bike Sharing dataset. We adopt the implementation of LatentIV in the R package ``REndo". Note that this implementation of LatentIV does not support the estimation of other exogenous control variables in the model. Therefore, we drop the control variables from the simulation, and the dependent variable is simulated simply as $Y=1+0.5lnCnt+\varepsilon$, where $\varepsilon \sim N(0,4)$. We apply both ForestIV and LatentIV on this data. The average estimates from LatentIV across 100 simulation runs, together with the biased, unbiased, and ForestIV estimates, are reported in Table \ref{table:BikeData_LatentIV}.

\begin{table}[tbhp]
\label{table:BikeData_LatentIV}
    \centering
    \begin{tabular}{c c c c c c}
    \hline
         & True & Biased & Unbiased & ForestIV & LatentIV \\ 
         \hline
        Intercept       & 1.0 & 0.711 (0.062)   & 0.972 (0.203) &  0.938 (0.148) & 0.776 (0.733)\\
                        &     & [0.001]     & [0.568]   & [0.677]   & [0.760]\\
        $lnCnt$         & 0.5 &  0.564 (0.013)  & 0.507 (0.040) &  0.515 (0.029) & 0.549 (0.166)\\
                        &     & [0.002]     & [0.538]   & [0.613]  & [0.770]\\
        \hline
        Ave\_MSE        &     & 0.112 & 0 & 0.010 & 1.346\\
    \hline
    \end{tabular}
    \caption{Benchmarking ForestIV with LatentIV on Bike Sharing Data. Standard errors in parentheses. $p$-values comparing estimates with true values in square brackets. Ave\_MSE contains the average empirical MSE associated with each set of estimates across 100 simulation runs.}
\end{table}

We find that LatentIV is able to mitigate the estimation biases on $lnCnt$ and the intercept term to some degree. Nonetheless, the method appears to be systematically less effective than ForestIV. In addition, the standard errors of the LatentIV estimates are much larger than those yielded by ForestIV, indicating potential instability of the results.

\subsection{Benchmarking with Generated Regressor Adjustment} \label{Benchmarking_RegAdj}
Finally, we benchmark ForestIV against the regression adjustment approach developed by \cite{meng2016linear} for measurement error in nonparametrically generated regressors. We apply the formula in \citep[][p.305]{meng2016linear} to obtain the adjusted coefficient and standard error on $lnCnt$.\footnote{\cite{meng2016linear} does not explicitly discuss how to adjust for bias in the intercept term. We therefore report the same intercept term as the biased regression.} The average estimates from the regression adjustment across 100 simulation runs, together with the biased, unbiased, and ForestIV estimates, are reported in Table \ref{table:BikeData_RegAdj}.

\begin{table}[tbhp]
\label{table:BikeData_RegAdj}
    \centering
    \begin{tabular}{c c c c c c}
    \hline
         & True & Biased & Unbiased & ForestIV & Generated Regressor Adjustment \\ 
         \hline
        Intercept       & 1.0 & 0.702 (0.063)   & 1.018 (0.204) &  0.957 (0.134) & 0.702 (0.063)\\
                        &     & [0.004]     & [0.511]   & [0.745]   & [0.004]\\
        $lnCnt$         & 0.5 &  0.566 (0.013)  & 0.498 (0.040) &  0.512 (0.027) & 0.563 (0.147)\\
                        &     & [0.002]     & [0.530]   & [0.652]  & [0.664]\\
        $Z_1$           & 2.0 & 2.000 (0.003)   & 1.999 (0.011) &  2.000 (0.003) &  2.000 (0.003)\\
                        &     & [0.459]     & [0.524]   & [0.977]  & [0.459] \\
        $Z_2$           & 1.0 & 1.000 (0.002)   & 0.999 (0.006) &  1.000 (0.002) & 1.000 (0.002)\\
                        &     & [0.480]     & [0.486]   & [0.989]  & [0.480]\\
        \hline
        Ave\_MSE        &     & 0.150 & 0 & 0.017 & 0.164\\
    \hline
    \end{tabular}
    \caption{Benchmarking ForestIV with Generated Regressor Adjustment on Bike Sharing Data. Standard errors in parentheses. $p$-values comparing estimates with true values in square brackets. Ave\_MSE contains the average empirical MSE associated with each set of estimates across 100 simulation runs.}
\end{table}

We find that the generated regressor adjustment approach only slightly reduces the bias on $lnCnt$, and is clearly less effective than ForestIV. We believe that the limitations of the regression adjustment approach in this case are potentially attributable to the fact that it \textit{only} exploits distributional information on the mean and variance of measurement error, essentially ignoring other distributional information. As a final note, we acknowledge that alternative methodologies which have seen recent theoretical development in the generated regressors literature (or the econometrics literature at large) may achieve better bias correction results than ForestIV in certain situations. As such, we believe this to be a potentially quite fruitful future research direction by which our method could be improved upon.

\clearpage

\section{Proofs of Theoretical Results} \label{Proofs}

\mainTheorem*

\begin{proof}
Adopting notations from \cite{Breiman2001}, as the number of trees in the random forest goes to infinity, the generalization error of the forest is denoted as $PE(forest) = \lim_{M \rightarrow \infty} \mathbb{E}_{\mathbf{f}}[\widehat{X} - X]^2$. Next, we restate two known results as lemmas.
\begin{lemma}{\citep[][Theorem 11.2.]{Breiman2001}}
 $PE(forest) = \mathbb{E}_i\mathbb{E}_j\mathbb{E}_{\mathbf{f}} Cov(e^{(i)}, e^{(j)})$.
\end{lemma}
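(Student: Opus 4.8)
The plan is to express the forest's generalization error directly through the individual trees' errors and then let the number of trees grow. The starting point is definitional: since $\widehat{X} = \frac{1}{M}\sum_{i=1}^M \widehat{X}^{(i)}$ and $e^{(i)} = \widehat{X}^{(i)} - X$, the forest's prediction error is simply the average of the individual tree errors, $\widehat{X} - X = \frac{1}{M}\sum_{i=1}^M e^{(i)}$. Squaring and taking the expectation over a random test point gives
\[
\mathbb{E}_{\mathbf{f}}[\widehat{X} - X]^2 = \frac{1}{M^2}\sum_{i=1}^M\sum_{j=1}^M \mathbb{E}_{\mathbf{f}}\left[e^{(i)} e^{(j)}\right].
\]
So the whole statement reduces to evaluating this double sum in the limit $M\to\infty$ and recognizing the result as an averaged covariance.

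First I would justify the passage to the limit. Writing each tree's error as $e^{(i)} = g(\mathbf{f}, \Theta_i)$, where $\Theta_i$ denotes the i.i.d. randomization (bootstrap draw and feature subsampling) that generates tree $i$, the key observation is that, conditionally on $\mathbf{f}$, the $e^{(i)}$ are i.i.d. Hence by the (conditional) law of large numbers $\frac{1}{M}\sum_i e^{(i)} \to \mathbb{E}_{\Theta}[g(\mathbf{f},\Theta)]$ in mean square, assuming the tree predictors have finite second moments. Equivalently, and more explicitly, I would split the double sum into diagonal and off-diagonal parts: the $M$ diagonal terms contribute $\frac{1}{M}\,\mathbb{E}_i\mathbb{E}_{\mathbf{f}}[(e^{(i)})^2] = O(1/M)$ and vanish, while the $M(M-1)$ off-diagonal terms, by the independence of $\Theta_i$ and $\Theta_j$ for $i\neq j$, average to $\frac{M(M-1)}{M^2}\,\mathbb{E}_i\mathbb{E}_j\mathbb{E}_{\mathbf{f}}[e^{(i)} e^{(j)}]$, whose prefactor tends to $1$. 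Either route yields
\[
PE(forest) = \mathbb{E}_i\mathbb{E}_j\mathbb{E}_{\mathbf{f}}\left[e^{(i)} e^{(j)}\right],
\]
with $i$ and $j$ now understood as two independently drawn trees.

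The final step replaces this raw cross-moment by the covariance that appears in the statement. Decomposing $\mathbb{E}_{\mathbf{f}}[e^{(i)} e^{(j)}] = \mathbb{E}_{\mathbf{f}}\,Cov(e^{(i)}, e^{(j)}) + \mathbb{E}_{\mathbf{f}}[e^{(i)}]\,\mathbb{E}_{\mathbf{f}}[e^{(j)}]$, I would argue the product-of-means term disappears once averaged over the two independent tree draws: since $\Theta_i$ and $\Theta_j$ are independent, $\mathbb{E}_i\mathbb{E}_j\!\left[\mathbb{E}_{\mathbf{f}}e^{(i)}\cdot\mathbb{E}_{\mathbf{f}}e^{(j)}\right] = \left(\mathbb{E}_i\mathbb{E}_{\mathbf{f}}\,e^{(i)}\right)^2$, the square of the average tree bias. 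This vanishes precisely when the constituent trees are (asymptotically) unbiased, which is exactly the situation underwritten by the consistency framework of Assumptions 1--2 (trees grown as in \cite{scornet2015consistency} have vanishing bias), and which is the implicit centering convention under which \cite{Breiman2001} states Theorem 11.2. Substituting back delivers the claimed identity $PE(forest) = \mathbb{E}_i\mathbb{E}_j\mathbb{E}_{\mathbf{f}}\,Cov(e^{(i)}, e^{(j)})$.

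I expect the main obstacle to lie not in the algebra but in making the two limiting operations rigorous and mutually compatible: interchanging $\lim_{M\to\infty}$ with $\mathbb{E}_{\mathbf{f}}$, which I would secure by bounded second moments / uniform integrability of the tree predictors, and pinning down the exact sense in which the squared-bias term is negligible. The covariance-versus-cross-moment conversion is the only place an assumption beyond bookkeeping is required, so I would flag explicitly that the identity is exact only under (asymptotic) unbiasedness of the trees; in its absence one recovers the same identity only up to the additive term $\left(\mathbb{E}_i\mathbb{E}_{\mathbf{f}}\,e^{(i)}\right)^2$.
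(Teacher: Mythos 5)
Your derivation is correct, but note that the paper itself does not prove this lemma at all: it is imported verbatim as Theorem 11.2 of \cite{Breiman2001}, so the paper's only ``proof'' is the citation. What you have written is, in essence, a reconstruction of Breiman's own argument in the cited source: expand $\mathbb{E}_{\mathbf{f}}[\widehat{X}-X]^2$ using $\widehat{X}-X=\frac{1}{M}\sum_i e^{(i)}$, kill the $O(1/M)$ diagonal, use the i.i.d.\ tree randomizations $\Theta_i$ to identify the off-diagonal limit with the cross-moment $\mathbb{E}_i\mathbb{E}_j\mathbb{E}_{\mathbf{f}}[e^{(i)}e^{(j)}]$ over two independently drawn trees, and then convert the cross-moment to a covariance. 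Your closing caveat is exactly right, and it is in fact explicit in the source: Breiman's Theorem 11.2 carries the hypothesis that trees are unbiased (his assumption $\mathbb{E}Y=\mathbb{E}_X h(X,\Theta)$), which is precisely the condition you isolate as making the product-of-means term $\left(\mathbb{E}_i\mathbb{E}_{\mathbf{f}}e^{(i)}\right)^2$ vanish; so your flagged assumption is not an extra cost relative to the cited result but a faithful recovery of its hypothesis. Two refinements are worth making. First, your appeal to \cite{scornet2015consistency} for unbiasedness is slightly off target for the lemma as stated, because $PE(forest)$ is defined at fixed training size $n$ (only $M\to\infty$), where exact unbiasedness need not hold; Assumptions 1--2 yield vanishing bias only as $n\to\infty$. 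Second, this gap is harmless for the paper's use of the lemma: without unbiasedness your computation gives $PE(forest)=\mathbb{E}_i\mathbb{E}_j\mathbb{E}_{\mathbf{f}}Cov(e^{(i)},e^{(j)})+\left(\mathbb{E}_i\mathbb{E}_{\mathbf{f}}e^{(i)}\right)^2$, and since the averaged covariance term equals the variance over $\mathbf{f}$ of the $\Theta$-averaged error, both summands are nonnegative; hence $\lim_{n\to\infty}PE(forest)=0$ (Lemma 2) still forces the averaged covariance to zero, which is all that the paper's proof of Theorem \ref{mainTheorem} actually requires.
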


\begin{lemma}{\citep[][Theorem 1]{scornet2015consistency}}
Under Assumptions 1-2 , $\lim_{n \rightarrow \infty} PE(forest) = 0$.
\end{lemma}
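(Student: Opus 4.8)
The plan is to observe that this lemma is a verbatim restatement of Theorem 1 of \cite{scornet2015consistency}, whose hypotheses are exactly Assumptions 1 and 2 as stated here, so the cleanest route is a direct appeal to that result. Assumption 1 supplies the additive regression model $X=\sum_{j=1}^p m_j(f_j)+\zeta$ with features uniform on $[0,1]^p$, continuous components $m_j$, and centered Gaussian noise; Assumption 2 supplies the tree-growth regularity $t_n(\log a_n)^9/a_n\to 0$. These are precisely the conditions under which \cite{scornet2015consistency} establish $L^2$-consistency of Breiman's random forest. To connect their statement to the quantity $PE(forest)$ used here, I would first take the number of trees $M\to\infty$ inside $\lim_{M\to\infty}\mathbb{E}_{\mathbf{f}}[\widehat{X}-X]^2$, so that the finite-forest average $\widehat{X}=\frac{1}{M}\sum_i\widehat{X}^{(i)}$ converges (by the law of large numbers over the i.i.d.\ tree randomization) to the infinite-forest estimate $m_{\infty,n}(\mathbf{f})$. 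Reading $PE(forest)$ as the resulting risk relative to the additive regression function $m(\mathbf{f})=\sum_j m_j(f_j)=\mathbb{E}[X\mid\mathbf{f}]$ (the irreducible noise $\zeta$ being factored out), we have $PE(forest)=\mathbb{E}_{\mathbf{f}}[m_{\infty,n}(\mathbf{f})-m(\mathbf{f})]^2$, which is exactly the $L^2$ risk that the cited theorem drives to zero as $n\to\infty$.

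For completeness I would then sketch the internal logic of that consistency argument rather than treat it as a black box. The first step is a bias--variance decomposition of $\mathbb{E}_{\mathbf{f}}[m_{\infty,n}(\mathbf{f})-m(\mathbf{f})]^2$ into an approximation (bias) term, measuring how much $m$ varies within a terminal cell of the partition, and an estimation (variance) term, measuring the averaging error from the finitely many subsampled points falling in that cell. The additive structure of Assumption 1 is what makes the bias term tractable: because $m$ is a sum of univariate continuous functions, controlling the variation of $m$ over a cell reduces to controlling the cell's side lengths coordinate by coordinate, and one shows these contract as the trees are grown deeper.

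The variance term is where Assumption 2 enters: the condition $t_n(\log a_n)^9/a_n\to 0$ guarantees that, even as the number of leaves $t_n\to\infty$, each terminal cell still receives a growing number of the $a_n$ subsampled points, so the within-cell average concentrates around the local value of $m$. The hard part, and the technical heart of \cite{scornet2015consistency}, is that Breiman's CART splits are \emph{data-dependent}: the partition is itself a random function of the training labels, so cells are correlated with the responses and the usual purely geometric consistency arguments for fixed partitions do not apply. Their resolution is to relate the empirical CART-split criterion to its theoretical (population) counterpart and to exploit the additive model to show that the cumulative decrease in the variation of $m$ across splits forces the within-cell variation to vanish; combining this bias control with the variance control from Assumption 2 yields $\lim_{n\to\infty}PE(forest)=0$. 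Since all of this is established in the cited reference, within the present paper the lemma is simply invoked.
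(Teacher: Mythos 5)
Your proposal is correct and takes essentially the same approach as the paper: the paper gives no proof of this lemma at all, simply invoking Theorem 1 of \cite{scornet2015consistency}, whose hypotheses are exactly Assumptions 1--2. Your connecting steps --- passing to the infinite forest as $M\to\infty$ and reading $PE(forest)$ as the $L^2$ risk relative to the regression function $m(\mathbf{f})$ with the irreducible noise $\zeta$ factored out --- supply precisely the interpretation needed for the lemma to hold as stated (the paper glosses over this point), and your sketch of the bias--variance argument with data-dependent CART splits faithfully summarizes the cited proof.
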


The two lemmas collectively imply that $\lim_{n \rightarrow \infty} \mathbb{E}_i\mathbb{E}_j\mathbb{E}_{\mathbf{f}} Cov(e^{(i)}, e^{(j)}) = 0$. It follows that 
\begin{align*}
    & \lim_{n \rightarrow \infty} \mathbb{E}_i\mathbb{E}_j\mathbb{E}_{\mathbf{f}} Cov(e^{(i)}, e^{(j)}) = 0 \\
    \Leftrightarrow & \lim_{n \rightarrow \infty} \mathbb{E}_i\mathbb{E}_j\mathbb{E}_{\mathbf{f}} Cov\left(e^{(i)}, (\widehat{X}^{(j)} - X)\right) = 0 \\
    \Leftrightarrow & \lim_{n \rightarrow \infty} \mathbb{E}_i\mathbb{E}_j\mathbb{E}_{\mathbf{f}} Cov(e^{(i)}, \widehat{X}^{(j)}) - \lim_{n \rightarrow \infty} \mathbb{E}_i\mathbb{E}_{\mathbf{f}} Cov(e^{(i)}, X) = 0
\end{align*}

Based on Assumption 3 (classical measurement error), $\lim_{n \rightarrow \infty} \mathbb{E}_i\mathbb{E}_{\mathbf{f}} Cov(e^{(i)}, X) = 0$. Therefore, we have $\lim_{n \rightarrow \infty} \mathbb{E}_i\mathbb{E}_j\mathbb{E}_{\mathbf{f}} Cov(\widehat{X}^{(j)}, e^{(i)}) = 0$.
\end{proof}

\theoremBinaryA*

\begin{proof}
\cite{Breiman2001} proves that the error rate of a random forest decreases with $\mathbb{E}_j \mathbb{E}_i \left[ Corr(rmg(\widehat{X}^{(i)}), rmg(\widehat{X}^{(j)})) \right]$, where $rmg(\widehat{X}^{(i)})$ represents the \textit{raw marginal function} of tree $i$'s predictions. Under binary classification, the raw marginal function is defined as $rmg(\widehat{X}^{(i)}) = \mathbb{I}(\widehat{X}^{(i)} = X) - \mathbb{I}(\widehat{X}^{(i)} \neq X)$, where $\mathbb{I}$ is an indicator function that checks vectors $\widehat{X}^{(i)}$ and $X$ element-wise, and takes value 1 if the enclosed relationship is true or 0 otherwise. In other words, $\mathbb{I}(\widehat{X}^{(i)} = X)$ is a vector where correct predictions are marked with 1, and $\mathbb{I}(\widehat{X}^{(i)} \neq X)$ is a vector where incorrect predictions are marked with 1. Denote $\boldsymbol{1} = (1, \dots, 1)$ as a vector of 1 with the same length as the vector of predictions. Clearly, we have $\mathbb{I}(\widehat{X}^{(i)} = X) = \boldsymbol{1} - \mathbb{I}(\widehat{X}^{(i)} \neq X)$, and $\mathbb{I}(\widehat{X}^{(i)} \neq X) = |e^{(i)}|$. Therefore, we know $Corr((rmg(\widehat{X}^{(i)}), rmg(\widehat{X}^{(j)})) = Corr(\mathbb{I}(\widehat{X}^{(i)} = X) - \mathbb{I}(\widehat{X}^{(i)} \neq X), \mathbb{I}(\widehat{X}^{(j)} = X) - \mathbb{I}(\widehat{X}^{(j)} \neq X)) = Corr(\boldsymbol{1}-2\mathbb{I}(\widehat{X}^{(i)} \neq X), \boldsymbol{1}-2\mathbb{I}(\widehat{X}^{(j)} \neq X)) = Corr(\mathbb{I}(\widehat{X}^{(i)} \neq X), \mathbb{I}(\widehat{X}^{(j)} \neq X)) = Corr(|e^{(i)}|, |e^{(j)}|)$.
\end{proof}

\theoremBinaryB*

\begin{proof}
We prove this theorem for a given sample with size $N$. For notational simplicity, write the ground truth as $X = \{a_k\}_{k=1}^N$, and similarly write the prediction vector and error vector of tree $i$ as $\widehat{X}^{(i)} = \{p_{ik}\}_{k=1}^N$, $e^{(i)} = \{e_{ik}\}_{k=1}^N$. Suppose the number of data points where $a_k = \alpha$ and $p_{ik} = \beta$ is $n_{\alpha \beta}$ ($\alpha, \beta \in \{0,1\}$). Clearly, $n_{00} + n_{01} + n_{10} + n_{11} = N$, and the relationship between $X$ and $e^{(i)}$ is fully described as follows:

\begin{itemize}
    \item There are $n_{00}$ data points where $a_k=0$ and $e_{ik}=0$;
    \item There are $n_{01}$ data points where $a_k=0$ and $e_{ik}=1$;
    \item There are $n_{10}$ data points where $a_k=1$ and $e_{ik}=-1$;
    \item There are $n_{11}$ data points where $a_k=1$ and $e_{ik}=0$.
\end{itemize}

Next, write $Cov(e^{(i)}, X) = \dfrac{1}{N^2} (N \sum e_{ik} a_k - \sum e_{ik} \sum a_k)$. Note that $\sum e_{ik} a_k = -n_{10}$, $\sum e_{ik}=n_{01}-n_{10}$, and $\sum a_k=n_{10}+n_{11}$. Therefore, we have $N \sum e_{ik} a_k - \sum e_{ik} \sum a_k = -(n_{00} + n_{01} + n_{10} + n_{11})n_{10} - (n_{01}-n_{10})(n_{10}+n_{11}) = -n_{00}n_{10} - 2n_{01}n_{10} - n_{01}n_{11} < 0$, and accordingly, $Cov(e^{(i)}, X) < 0$.
\end{proof}

\theoremBinaryC*

\begin{proof}
Similarly, we prove this theorem for a given sample with size $N$. Again, we write the ground truth as $X = \{a_k\}_{k=1}^N$, and write the prediction vector and error vector of tree $i$ as $\widehat{X}^{(i)} = \{p_{ik}\}_{k=1}^N$, $e^{(i)} = \{e_{ik}\}_{k=1}^N$. First, we lay out all possible value combinations of $a_k, p_{ik}, p_{jk}, e_{ik}, e_{jk}$ in the following table:

\begin{table}[htb]
\centering
\begin{tabular}{c|c|c|c|c|c|c}
\hline
$a_k$ & $p_{ik}$ & $p_{jk}$ & Count & $e_{ik}$ & $e_{jk}$ & Abbr. Count Notation  \\
\hline
0 & 0  & 0  & $n_{000} = N \times p_{000}$  & 0  & 0  & $n_{1}$  \\
0 & 1  & 0  & $n_{010} = N \times p_{010}$  & 1  & 0  & $n_{2}$  \\
0 & 0  & 1  & $n_{001} = N \times p_{001}$  & 0  & 1  & $n_{3}$  \\
0 & 1  & 1  & $n_{011} = N \times p_{011}$  & 1  & 1  & $n_{4}$  \\
1 & 0  & 0  & $n_{100} = N \times p_{100}$  & -1 & -1 & $n_{5}$  \\
1 & 1  & 0  & $n_{110} = N \times p_{110}$  & 0  & -1 & $n_{6}$  \\
1 & 0  & 1  & $n_{101} = N \times p_{101}$  & -1 & 0  & $n_{7}$  \\
1 & 1  & 1  & $n_{111} = N \times p_{111}$  & 0  & 0  & $n_{8}$  \\
\hline
\end{tabular}
\end{table}

Next, write $Cov(e^{(i)}, e^{(j)}) = \dfrac{1}{N^2} (N \sum e_{ik} e_{jk} - \sum e_{ik} \sum e_{jk})$. Note that $\sum e_{ik} e_{jk} = n_4+n_5$, $\sum e_{ik} = (n_2+n_4)-(n_5+n_7)$, and $\sum e_{jk} = (n_3+n_4)-(n_5+n_6)$. Then, $N \sum e_{ik} e_{jk} - \sum e_{ik} \sum e_{jk} = (n_1 + \dots + n_8)(n_4+n_5) - [(n_2+n_4)-(n_5+n_7)][(n_3+n_4)-(n_5+n_6)]$. Denote $A = (n_1 + \dots + n_8)(n_4+n_5)$ and $B = [(n_2+n_4)-(n_5+n_7)][(n_3+n_4)-(n_5+n_6)]$, we calculate the two quantities separately as follows.

First, we rewrite 
\begin{equation*}
\begin{split}
A & = (n_1 + \dots + n_8)n_4 + (n_1 + \dots + n_8)n_5 \\
  & = (n_1 + n_8)(n_4+n_5) + (n_2+n_3+n_4)n_4 + (n_5+n_6+n_7)n_4 + (n_2+n_3+n_4)n_5 + (n_5+n_6+n_7)n_5 \\
\end{split}
\end{equation*}

Second, we rewrite
\begin{equation*}
\begin{split}
B & = (n_2n_3 + n_2n_4 + n_3n_4 + n_4^2) + (n_6n_7 + n_5n_6 + n_5n_7 + n_5^2) \\ & \quad \quad - (n_2n_5 + n_4n_5 + n_2n_6 + n_4n_6) - (n_3n_5 + n_3n_7 + n_4n_5 + n_4n_7)\\
  & =  (n_2n_3+n_6n_7 - n_2n_6 - n_3n_7) + (n_2+n_3+n_4)n_4 + (n_5+n_6+n_7)n_5 \\ & \quad \quad - (n_5+n_6+n_7)n_4 - (n_2+n_3+n_4)n_5\\
\end{split}
\end{equation*}

Together, we have $N \sum e_{ik} e_{jk} - \sum e_{ik} \sum e_{jk} = A-B = (n_1 + n_8)(n_4+n_5) + 2(n_5+n_6+n_7)n_4 + 2(n_2+n_3+n_4)n_5 + (n_2n_6 + n_3n_7 - n_2n_3 - n_6n_7) = (n_1 + n_8)(n_4+n_5) + 2(n_5+n_6+n_7)n_4 + 2(n_2+n_3+n_4)n_5 + (n_2-n_7)(n_6-n_3)$. Using the original count notations, the right-hand-side is equivalent to $(n_{000}+n_{111})(n_{011}+n_{100}) + 2(n_{0 \bullet \bullet}-n_{000})n_{100} + 2(n_{1 \bullet \bullet}-n_{111})n_{011} + (n_{010}-n_{101})(n_{110}-n_{001})$. Therefore, we have $Cov(e^{(i)}, e^{(j)}) > 0 \Leftrightarrow \frac{1}{N^2} [(n_{000}+n_{111})(n_{011}+n_{100}) + 2(n_{0 \bullet \bullet}-n_{000})n_{100} + 2(n_{1 \bullet \bullet}-n_{111})n_{011} + (n_{010}-n_{101})(n_{110}-n_{001})] > 0 \Leftrightarrow (p_{000}+p_{111})(p_{011}+p_{100}) + 2(p_{0 \bullet \bullet}-p_{000})p_{100} + 2(p_{1 \bullet \bullet}-p_{111})p_{011} + (p_{010}-p_{101})(p_{110}-p_{001}) > 0$.
\end{proof}

\clearpage

\section{Using ForestIV in Practice} \label{Practice}
In practice, because true coefficients are not known \textit{a priori}, it is useful to have a few guidelines to gauge the effectiveness of ForestIV in a particular finite sample.

\begin{enumerate}
  \item Using the hold-out dataset (e.g., $D_{test}$), researchers can empirically evaluate instrument validity and strength, both before and after the proposed two-step lasso-based selections.
  \item The Hotelling $T^2$ test statistic can also be a useful signal. The $p$-value associated with the Hotelling $T^2$ test comparing $\widehat{\boldsymbol{\beta}}_{label}$ and ForestIV estimates indicates the probability of observing their empirical differences under the null of equality. Researchers can define the threshold of evidence they require before accepting ForestIV estimates, by adjusting the significance level for this test. 
  \item Researchers can also examine whether the asymptotic properties of ForestIV have yet to ``kick-in" by examining empirical convergence in resulting coefficient estimates as the procedure is exposed to more unlabeled data. If a convergence plot indicates that the coefficient estimates have not yet plateaued, this may be a sign that ForestIV estimates have not yet converged, and that more unlabeled data is perhaps needed. 
\end{enumerate}

Finally, to better delineate the use case of ForestIV, we reiterate that if the size of labeled data is large enough that $\widehat{\boldsymbol{\beta}}_{label}$ can be estimated reliably and precisely enough using only the available labeled data, then there is no need to mine variables in the first place. One should simply take $\widehat{\boldsymbol{\beta}}_{label}$ and proceed with inferences and decision making. Accordingly, statistical power analyses are likely to be useful when determining whether ``big data" and machine learning methods are needed for a particular inference problem \citep{ellis2010essential}.

\clearpage

\bibliographystyle{apalike}
\bibliography{SSRN.bib}

\begin{thebibliography}{}

\bibitem[Aggarwal, 2015]{aggarwal2015data}
Aggarwal, C.~C. (2015).
\newblock {\em Data mining: the textbook}.
\newblock Springer.

\bibitem[Angrist and Krueger, 1995]{angrist1995split}
Angrist, J.~D. and Krueger, A.~B. (1995).
\newblock Split-sample instrumental variables estimates of the return to
  schooling.
\newblock {\em Journal of Business \& Economic Statistics}, 13(2):225--235.

\bibitem[Angrist and Pischke, 2008]{angrist2008mostly}
Angrist, J.~D. and Pischke, J.-S. (2008).
\newblock {\em Mostly harmless econometrics: An empiricist's companion}.
\newblock Princeton university press.

\bibitem[Angwin et~al., 2016]{angwin2016machine}
Angwin, J., Larson, J., Mattu, S., and Kirchner, L. (2016).
\newblock Machine bias.
\newblock {\em ProPublica, May}, 23.

\bibitem[Athey and Imbens, 2016]{athey2016recursive}
Athey, S. and Imbens, G. (2016).
\newblock Recursive partitioning for heterogeneous causal effects.
\newblock {\em Proceedings of the National Academy of Sciences},
  113(27):7353--7360.

\bibitem[Athey and Imbens, 2017]{athey2017state}
Athey, S. and Imbens, G.~W. (2017).
\newblock The state of applied econometrics: Causality and policy evaluation.
\newblock {\em Journal of Economic Perspectives}, 31(2):3--32.

\bibitem[Belloni et~al., 2012]{Belloni2012}
Belloni, A., Chen, D., Chernozhukov, V., and Hansen, C. (2012).
\newblock Sparse models and methods for optimal instruments with an application
  to eminent domain.
\newblock {\em Econometrica}, 80(6):2369--2429.

\bibitem[Bernard et~al., 2012]{bernard2012dynamic}
Bernard, S., Adam, S., and Heutte, L. (2012).
\newblock Dynamic random forests.
\newblock {\em Pattern Recognition Letters}, 33(12):1580--1586.

\bibitem[Bernard et~al., 2010]{bernard2010study}
Bernard, S., Heutte, L., and Adam, S. (2010).
\newblock A study of strength and correlation in random forests.
\newblock In {\em International Conference on Intelligent Computing}, pages
  186--191. Springer.

\bibitem[Blackburn and Neumark, 1992]{blackburn1992unobserved}
Blackburn, M. and Neumark, D. (1992).
\newblock Unobserved ability, efficiency wages, and interindustry wage
  differentials.
\newblock {\em Quarterly Journal of Economics}, 107(4):1421--1436.

\bibitem[Blaser and Fryzlewicz, 2016]{blaser2016random}
Blaser, R. and Fryzlewicz, P. (2016).
\newblock Random rotation ensembles.
\newblock {\em The Journal of Machine Learning Research}, 17(1):126--151.

\bibitem[Blundell and Powell, 2004]{blundell2004endogeneity}
Blundell, R.~W. and Powell, J.~L. (2004).
\newblock Endogeneity in semiparametric binary response models.
\newblock {\em The Review of Economic Studies}, 71(3):655--679.

\bibitem[Breiman, 1996]{breiman1996bagging}
Breiman, L. (1996).
\newblock Bagging predictors.
\newblock {\em Machine learning}, 24(2):123--140.

\bibitem[Breiman, 2001]{Breiman2001}
Breiman, L. (2001).
\newblock {Random forests}.
\newblock {\em Machine Learning}, 45(1):5--32.

\bibitem[Buolamwini and Gebru, 2018]{buolamwini2018gender}
Buolamwini, J. and Gebru, T. (2018).
\newblock Gender shades: Intersectional accuracy disparities in commercial
  gender classification.
\newblock In {\em Conference on Fairness, Accountability and Transparency},
  pages 77--91.

\bibitem[Buse, 1992]{buse1992bias}
Buse, A. (1992).
\newblock The bias of instrumental variable estimators.
\newblock {\em Econometrica: Journal of the Econometric Society}, pages
  173--180.

\bibitem[Carroll et~al., 1999]{Carroll1999}
Carroll, R.~J., Maca, J.~D., and Ruppert, D. (1999).
\newblock {Nonparametric regression in the presences of measurement error}.
\newblock {\em Biometrika}, 86(3):541.

\bibitem[Chernozhukov et~al., 2016]{chernozhukov2016double}
Chernozhukov, V., Chetverikov, D., Demirer, M., Duflo, E., Hansen, C., and
  Newey, W.~K. (2016).
\newblock Double machine learning for treatment and causal parameters.
\newblock Technical report, cemmap working paper.

\bibitem[Conley et~al., 2012]{conley2012plausibly}
Conley, T.~G., Hansen, C.~B., and Rossi, P.~E. (2012).
\newblock Plausibly exogenous.
\newblock {\em Review of Economics and Statistics}, 94(1):260--272.

\bibitem[Cook and Stefanski, 1994]{Cook1994}
Cook, J. and Stefanski, L. (1994).
\newblock {Simulation-extrapolation estimation in parametric measurement error
  models}.
\newblock {\em Journal of the American Statistical Association},
  89(428):1314--1328.

\bibitem[Denisko and Hoffman, 2018]{denisko2018classification}
Denisko, D. and Hoffman, M.~M. (2018).
\newblock Classification and interaction in random forests.
\newblock {\em Proceedings of the National Academy of Sciences}, page
  201800256.

\bibitem[Ebbes et~al., 2009]{ebbes2009frugal}
Ebbes, P., Wedel, M., and B{\"o}ckenholt, U. (2009).
\newblock Frugal iv alternatives to identify the parameter for an endogenous
  regressor.
\newblock {\em Journal of Applied Econometrics}, 24(3):446--468.

\bibitem[Ebbes et~al., 2005]{ebbes2005solving}
Ebbes, P., Wedel, M., B{\"o}ckenholt, U., and Steerneman, T. (2005).
\newblock Solving and testing for regressor-error (in) dependence when no
  instrumental variables are available: With new evidence for the effect of
  education on income.
\newblock {\em Quantitative Marketing and Economics}, 3(4):365--392.

\bibitem[Ellis, 2010]{ellis2010essential}
Ellis, P.~D. (2010).
\newblock {\em The essential guide to effect sizes: Statistical power,
  meta-analysis, and the interpretation of research results}.
\newblock Cambridge University Press.

\bibitem[Fanaee-T and Gama, 2014]{fanaee2014event}
Fanaee-T, H. and Gama, J. (2014).
\newblock Event labeling combining ensemble detectors and background knowledge.
\newblock {\em Progress in Artificial Intelligence}, 2(2-3):113--127.

\bibitem[Fern{\'a}ndez-Delgado et~al., 2014]{fernandez2014we}
Fern{\'a}ndez-Delgado, M., Cernadas, E., Barro, S., and Amorim, D. (2014).
\newblock Do we need hundreds of classifiers to solve real world classification
  problems?
\newblock {\em The Journal of Machine Learning Research}, 15(1):3133--3181.

\bibitem[Fong and Tyler, 2017]{Fong2017}
Fong, C. and Tyler, M. (2017).
\newblock {Regression with Classifier-Generated Covariates}.
\newblock {\em Working Paper}.

\bibitem[Freund et~al., 1996]{freund1996experiments}
Freund, Y., Schapire, R.~E., et~al. (1996).
\newblock Experiments with a new boosting algorithm.
\newblock In {\em Icml}, volume~96, pages 148--156. Bari, Italy.

\bibitem[Gebru et~al., 2017]{gebru2017using}
Gebru, T., Krause, J., Wang, Y., Chen, D., Deng, J., Aiden, E.~L., and Fei-Fei,
  L. (2017).
\newblock Using deep learning and google street view to estimate the
  demographic makeup of neighborhoods across the united states.
\newblock {\em Proceedings of the National Academy of Sciences}, page
  201700035.

\bibitem[Giot and Cherrier, 2014]{giot2014predicting}
Giot, R. and Cherrier, R. (2014).
\newblock Predicting bikeshare system usage up to one day ahead.
\newblock In {\em 2014 IEEE symposium on Computational Intelligence in Vehicles
  and Transportation Systems}, pages 22--29. IEEE.

\bibitem[Goodfellow et~al., 2016]{goodfellow2016deep}
Goodfellow, I., Bengio, Y., and Courville, A. (2016).
\newblock {\em Deep Learning}.
\newblock MIT press.

\bibitem[Grace, 2016]{grace2016statistical}
Grace, Y.~Y. (2016).
\newblock {\em Statistical Analysis with Measurement Error Or
  Misclassification}.
\newblock Springer.

\bibitem[Greene, 2003]{greene2003econometric}
Greene, W.~H. (2003).
\newblock {\em Econometric analysis}.
\newblock Pearson Education India.

\bibitem[Gustafson, 2003]{gustafson2003measurement}
Gustafson, P. (2003).
\newblock {\em Measurement error and misclassification in statistics and
  epidemiology: impacts and Bayesian adjustments}.
\newblock CRC Press.

\bibitem[Harrison~Jr and Rubinfeld, 1978]{harrison1978hedonic}
Harrison~Jr, D. and Rubinfeld, D.~L. (1978).
\newblock Hedonic housing prices and the demand for clean air.
\newblock {\em Journal of environmental economics and management},
  5(1):81--102.

\bibitem[Hausman, 2001]{hausman2001mismeasured}
Hausman, J. (2001).
\newblock Mismeasured variables in econometric analysis: problems from the
  right and problems from the left.
\newblock {\em Journal of Economic Perspectives}, 15(4):57--67.

\bibitem[Hausman et~al., 1991]{hausman1991measurement}
Hausman, J., Newey, W., Ichimura, H., and Powell, J. (1991).
\newblock Measurement errors in polynomial regression models.
\newblock {\em Journal of Econometrics}, 50(3):273--295.

\bibitem[Hausman et~al., 1995]{hausman1995nonlinear}
Hausman, J.~A., Newey, W.~K., and Powell, J.~L. (1995).
\newblock Nonlinear errors in variables estimation of some engel curves.
\newblock {\em Journal of Econometrics}, 65(1):205--233.

\bibitem[Jelveh et~al., 2015]{jelveh2015political}
Jelveh, Z., Kogut, B., and Naidu, S. (2015).
\newblock Political language in economics.
\newblock {\em Working Paper}.

\bibitem[K{\"{u}}chenhoff et~al., 2007]{Kuchenhoff2007}
K{\"{u}}chenhoff, H., Lederer, W., and Lesaffre, E. (2007).
\newblock {Asymptotic variance estimation for the misclassification SIMEX}.
\newblock {\em Computational Statistics and Data Analysis}, 51(12):6197--6211.

\bibitem[K{\"{u}}chenhoff et~al., 2006]{Kuchenhoff2006}
K{\"{u}}chenhoff, H., Mwalili, S.~M., and Lesaffre, E. (2006).
\newblock {A general method for dealing with misclassification in regression:
  The misclassification SIMEX}.
\newblock {\em Biometrics}, 62(1):85--96.

\bibitem[Lewbel, 2019]{lewbel2019using}
Lewbel, A. (2019).
\newblock Using instrumental variables to estimate models with mismeasured
  regressors.
\newblock {\em Working Paper}.

\bibitem[Li, 2002]{li2002robust}
Li, T. (2002).
\newblock Robust and consistent estimation of nonlinear errors-in-variables
  models.
\newblock {\em Journal of Econometrics}, 110(1):1--26.

\bibitem[Lim et~al., 2000]{lim2000comparison}
Lim, T.-S., Loh, W.-Y., and Shih, Y.-S. (2000).
\newblock A comparison of prediction accuracy, complexity, and training time of
  thirty-three old and new classification algorithms.
\newblock {\em Machine learning}, 40(3):203--228.

\bibitem[Loken and Gelman, 2017]{loken2017measurement}
Loken, E. and Gelman, A. (2017).
\newblock Measurement error and the replication crisis.
\newblock {\em Science}, 355(6325):584--585.

\bibitem[Mammen et~al., 2016]{mammen2016semiparametric}
Mammen, E., Rothe, C., and Schienle, M. (2016).
\newblock Semiparametric estimation with generated covariates.
\newblock {\em Econometric Theory}, 32(5):1140--1177.

\bibitem[Mammen et~al., 2012]{mammen2012nonparametric}
Mammen, E., Rothe, C., Schienle, M., et~al. (2012).
\newblock Nonparametric regression with nonparametrically generated covariates.
\newblock {\em The Annals of Statistics}, 40(2):1132--1170.

\bibitem[McFowland~III et~al., 2018]{mcfowland2018efficient}
McFowland~III, E., Somanchi, S., and Neill, D.~B. (2018).
\newblock Efficient discovery of heterogeneous treatment effects in randomized
  experiments via anomalous pattern detection.
\newblock {\em arXiv preprint arXiv:1803.09159}.

\bibitem[Meng et~al., 2016]{meng2016linear}
Meng, L., Wu, B., and Zhan, Z. (2016).
\newblock Linear regression with an estimated regressor: applications to
  aggregate indicators of economic development.
\newblock {\em Empirical Economics}, 50(2):299--316.

\bibitem[Moro et~al., 2014]{moro2014data}
Moro, S., Cortez, P., and Rita, P. (2014).
\newblock A data-driven approach to predict the success of bank telemarketing.
\newblock {\em Decision Support Systems}, 62:22--31.

\bibitem[Murphy and Topel, 1985]{murphy1985estimation}
Murphy, K.~M. and Topel, R.~H. (1985).
\newblock Estimation and inference in two-step econometric models.
\newblock {\em Journal of Business \& Economic Statistics}, 20(1):88--97.

\bibitem[Murray, 2006]{murray2006avoiding}
Murray, M.~P. (2006).
\newblock Avoiding invalid instruments and coping with weak instruments.
\newblock {\em Journal of economic Perspectives}, 20(4):111--132.

\bibitem[Nagar, 1959]{nagar1959bias}
Nagar, A.~L. (1959).
\newblock The bias and moment matrix of the general k-class estimators of the
  parameters in simultaneous equations.
\newblock {\em Econometrica: Journal of the Econometric Society}, pages
  575--595.

\bibitem[Newey, 1984]{newey1984method}
Newey, W.~K. (1984).
\newblock A method of moments interpretation of sequential estimators.
\newblock {\em Economics Letters}, 14(2-3):201--206.

\bibitem[Newey, 2001]{newey2001flexible}
Newey, W.~K. (2001).
\newblock Flexible simulated moment estimation of nonlinear errors-in-variables
  models.
\newblock {\em Review of Economics and statistics}, 83(4):616--627.

\bibitem[Oxley and McAleer, 1993]{oxley1993econometric}
Oxley, L. and McAleer, M. (1993).
\newblock Econometric issues in macroeconomic models with generated regressors.
\newblock {\em Journal of Economic Surveys}, 7(1):1--40.

\bibitem[Pagan, 1984]{pagan1984econometric}
Pagan, A. (1984).
\newblock Econometric issues in the analysis of regressions with generated
  regressors.
\newblock {\em International Economic Review}, pages 221--247.

\bibitem[Roodman, 2009]{roodman2009note}
Roodman, D. (2009).
\newblock A note on the theme of too many instruments.
\newblock {\em Oxford Bulletin of Economics and Statistics}, 71(1):135--158.

\bibitem[Rose, 1998]{rose1998deterministic}
Rose, K. (1998).
\newblock Deterministic annealing for clustering, compression, classification,
  regression, and related optimization problems.
\newblock {\em Proceedings of the IEEE}, 86(11):2210--2239.

\bibitem[Ryu et~al., 2018]{ryu2018deep}
Ryu, J.~Y., Kim, H.~U., and Lee, S.~Y. (2018).
\newblock Deep learning improves prediction of drug--drug and drug--food
  interactions.
\newblock {\em Proceedings of the National Academy of Sciences},
  115(18):E4304--E4311.

\bibitem[Schennach, 2004]{schennach2004estimation}
Schennach, S.~M. (2004).
\newblock Estimation of nonlinear models with measurement error.
\newblock {\em Econometrica}, 72(1):33--75.

\bibitem[Schennach, 2016]{schennach2016recent}
Schennach, S.~M. (2016).
\newblock Recent advances in the measurement error literature.
\newblock {\em Annual Review of Economics}, 8:341--377.

\bibitem[Schennach and Hu, 2013]{schennach2013nonparametric}
Schennach, S.~M. and Hu, Y. (2013).
\newblock Nonparametric identification and semiparametric estimation of
  classical measurement error models without side information.
\newblock {\em Journal of the American Statistical Association},
  108(501):177--186.

\bibitem[Scornet et~al., 2015]{scornet2015consistency}
Scornet, E., Biau, G., Vert, J.-P., et~al. (2015).
\newblock Consistency of random forests.
\newblock {\em The Annals of Statistics}, 43(4):1716--1741.

\bibitem[Seber, 2009]{seber2009multivariate}
Seber, G.~A. (2009).
\newblock {\em Multivariate Observations}, volume 252.
\newblock John Wiley \& Sons.

\bibitem[Sperlich, 2009]{sperlich2009note}
Sperlich, S. (2009).
\newblock A note on non-parametric estimation with predicted variables.
\newblock {\em The Econometrics Journal}, 12(2):382--395.

\bibitem[Verikas et~al., 2011]{verikas2011mining}
Verikas, A., Gelzinis, A., and Bacauskiene, M. (2011).
\newblock Mining data with random forests: A survey and results of new tests.
\newblock {\em Pattern recognition}, 44(2):330--349.

\bibitem[Wang and Lin, 1998]{Wang1998}
Wang, N. and Lin, X. (1998).
\newblock {Bias analysis and SIMEX approach in generalized linear mixed
  measurement error models}.
\newblock {\em Journal of the American Statistical Association}, 93:249--261.

\bibitem[Wolberg and Mangasarian, 1990]{wolberg1990multisurface}
Wolberg, W.~H. and Mangasarian, O.~L. (1990).
\newblock Multisurface method of pattern separation for medical diagnosis
  applied to breast cytology.
\newblock {\em Proceedings of the national academy of sciences},
  87(23):9193--9196.

\bibitem[Wooldridge, 2002]{wooldridge2002}
Wooldridge, J.~M. (2002).
\newblock {\em Econometric analysis of cross section and panel data}.
\newblock MIT Press, Cambridge and London.

\bibitem[Yang et~al., 2018]{yang2018mind}
Yang, M., Adomavicius, G., Burtch, G., and Ren, Y. (2018).
\newblock Mind the gap: Accounting for measurement error and misclassification
  in variables generated via data mining.
\newblock {\em Information Systems Research}, 29(1):4--24.

\end{thebibliography}

\end{document}